\theoremstyle{definition}
\definecolor{efgames-red}{RGB}{255, 0, 0}
\definecolor{efgames-blue}{RGB}{0, 0, 255}
\definecolor{efgames-gray}{RGB}{195, 195, 195}
\definecolor{c1}{rgb}{0.36, 0.54, 0.66}
\newcommand{\cce}{\cellcolor{c1!60}}
\colorlet{c1'}{c1!60}
\definecolor{c2}{rgb}{0.7, 0.7, 0.7}
\newcommand{\ccz}{\cellcolor{c2}}
\definecolor{c3}{rgb}{0.75, 0.58, 0.89}
\newcommand{\ccd}{\cellcolor{c3!50}}
\colorlet{c3'}{c3!50}
\definecolor{c4}{rgb}{0.0, 0.26, 0.15}
\newcommand{\ccv}{\cellcolor{c4!40}}
\colorlet{c4'}{c4!40}
\newcommand{\ccf}{\cellcolor{c1}}
\colorlet{c5'}{c1}
\tikzset{
	graph edge/.style = {draw, thick},
}
\newcommand{\Lra}{\Longrightarrow}
\newcommand{\Iff}{\Longleftrightarrow}
\newcommand{\Trop}{\mathbb{T}}
\newcommand{\Vit}{\mathbb{V}}
\newcommand{\Ninf}{\mathbb{N}^\infty}
\newcommand{\Bool}{\mathbb{B}}
\newcommand{\Lukas}{\mathbb{L}}
\newcommand{\Doubt}{\mathbb{D}}
\newcommand{\WW}{\mathbb{W}}
\newcommand{\Trio}{\textrm{Trio}}
\newcommand{\PosBool}{\textrm{PosBool}}
\newcommand{\Sorb}{\mathbb{S}}
\newcommand{\Sinf}{\mathbb{S}^\infty}
\newcommand{\calS}{\mathcal{S}}
\newcommand{\Semiring}{\ensuremath{\calS}}
\newcommand{\Semi}{\Semiring}
\renewcommand{\phi}{\varphi}
\renewcommand{\theta}{\vartheta}
\renewcommand{\epsilon}{\varepsilon}
\newcommand*{\N}{\ensuremath{\mathbb{N}}}
\newcommand*{\V}{\ensuremath{\mathbb{V}}}
\newcommand*{\R}{\ensuremath{\mathbb{R}}}
\newcommand*{\WX}{\ensuremath{\mathbb{W}[X]}}
\newcommand*{\llb}{\ensuremath{\llbracket}}
\newcommand*{\rrb}{\ensuremath{\rrbracket}}
\newcommand{\cmark}{\ding{51}}
\newcommand{\xmark}{\ding{55}}
\renewcommand{\AA}{{\mathfrak A}}
\newcommand{\BB}{{\mathfrak B}}
\newcommand{\HG}{\ensuremath{\emph{HG}}}
\DeclareMathOperator{\fo}{FO}
\DeclareMathOperator{\qr}{qr}
\DeclareMathOperator{\lit}{Lit}
\DeclareMathOperator{\img}{\mathsf{img}}
\newcommand*{\referto}[2]{\hyperref[#1]{#2~\ref*{#1}}}
\title{Ehrenfeucht--Fraïssé Games in Semiring Semantics}
\author{Sophie Brinke}{%
	RWTH Aachen University, Germany%
}{%
	brinke@logic.rwth-aachen.de%
}{}{}
\author{Erich Grädel}{%
	RWTH Aachen University, Germany%
}{%
	graedel@logic.rwth-aachen.de%
}{%
	https://orcid.org/0000-0002-8950-9991%
}{}
\author{Lovro Mrkonjić}{%
	RWTH Aachen University, Germany%
}{%
	mrkonjic@logic.rwth-aachen.de%
}{%
	https://orcid.org/0000-0001-8812-7185%
}{}
\authorrunning{S. Brinke, E. Grädel and L. Mrkonjić}
\keywords{Semiring semantics, elementary equivalence, Ehrenfeucht--Fraïssé games}
\begin{document}
	
\maketitle

\begin{abstract}

Ehrenfeucht--Fraïssé games provide a fundamental method for proving elementary equivalence (and equivalence up to
a certain quantifier rank) of relational structures. We investigate the soundness and completeness of this method
in the more general context of semiring semantics. Motivated originally by provenance analysis of database queries, semiring
semantics evaluates logical statements not just by true or false, but by values in some commutative semiring; this can provide
much more detailed information, for instance concerning the combinations of atomic facts that imply the truth of a statement,
or practical information about evaluation costs, confidence scores, access levels or the number of successful evaluation strategies.
There is a wide variety of different semirings that are relevant  for provenance analysis, and the applicability of 
classical logical methods in semiring semantics may strongly depend on the algebraic properties of the underlying semiring. 

While Ehrenfeucht--Fraïssé games are sound and complete for logical equivalences in classical semantics, and thus on the Boolean semiring,
this is in general not the case for other semirings. We provide a detailed analysis of the soundness and completeness of model comparison games
on specific semirings, not just for classical Ehrenfeucht--Fraïssé games but also 
for other variants based on bijections or counting.  
For instance, we prove that $m$-move Ehrenfeucht--Fraïssé games are sound
(but in general not complete) for $m$-equivalence on fully idempotent semirings, whereas $m$-move bijection games are sound on all semirings.
Further we show that Ehrenfeucht--Fraïssé games without a fixed restriction on the number of moves are sound for elementary equivalence on a
number of further important semirings, but completeness only holds in rare cases.

Finally we propose a new kind of games, called  \emph{homomorphism games}, which are based on the fact that there exist 
certain rather simple semiring interpretations that are locally very different and can be separated even in a one-move game, but which can be proved to be elementarily equivalent via separating sets of homomorphisms into the Boolean semiring.
We prove that these homomorphism games
provide a sound and complete method for logical equivalences on finite and infinite lattice semirings.
 
\end{abstract}

\section{Introduction}
	
\emph{Semiring provenance} was proposed in 2007 in a seminal paper by Green, Karvounarakis, and Tannen \cite{GreenKarTan07}.
It is based on the idea to annotate the atomic facts in a database by values in some commutative semiring, and to propagate these values through a 
database query, keeping track whether information is used
alternatively (as in disjunctions or existential quantifications) or jointly (as in conjunctions or universal quantifications).
Depending on the chosen semiring, the provenance valuation then gives practical information about a query,
beyond its truth or falsity, for instance concerning the \emph{confidence} that we may have in its truth, the \emph{cost} of its evaluation, 
the number of successful evaluation strategies, and so on. Beyond such provenance evaluations in specific \emph{application semirings}, 
more precise information is obtained by evaluations in \emph{provenance semirings}
of polynomials or formal power series, which permit us to \emph{track} which
atomic facts are used (and how often) to compute the answer to the query.

In databases, semiring provenance has been successfully applied to a number of different scenarios, such as conjunctive queries, positive relational algebra, datalog, nested relations, XML,  SQL-aggregates, graph databases (see, e.g., the surveys \cite{GreenTan17,Glavic21}),
but for a long time, it had essentially been restricted to negation-free query languages.
There have been algebraically interesting attempts to cover difference of relations
\cite{AmsterdamerDeuTan11,GeertsPog10,GeertsUngKarFunChr16,GreenIveTan09} but  
they had not resulted in systematic tracking of \emph{negative information}, and
for quite some time, this has been an obstacle for extending semiring provenance to other branches of logic
in computer science. Indeed, while there are many applications in databases where one can get quite far with
considering only positive information, logical applications in most other areas are based on
formalisms that use negation in an essential way.

A new approach to provenance analysis for languages with negation has been proposed in 2017 by Grädel and Tannen
\cite{GraedelTan17}, based on transformations into negation normal form, quotient semirings of polynomials with dual indeterminates, and a 
close relationship to semiring valuations of games \cite{GraedelTan20}. Since then, semiring provenance has been extended
to a systematic investigation of \emph{semiring semantics} for many logical systems, including first-order logic,
modal logic, description logics, guarded logic and fixed-point logic 
\cite{BourgauxOzaPenPre20, DannertGra19, DannertGra20,DannertGraNaaTan21,GraedelTan17}
and also to a general method for strategy analysis in games \cite{GraedelTan20,GraedelLueNaa21}. 

In classical semantics, a model $\AA$ of a formula $\phi$
assigns to each (instantiated) literal a Boolean value.
$\Semi$-interpretations $\pi$, for a suitable semiring $\Semi$, generalise this
by assigning to each such literal a semiring value from $\Semi$.
We interpret $0$ as \emph{false} and all other semiring values as \emph{nuances of true}, or more accurately,
\emph{true, with additional information}. In this context, classical semantics corresponds to semiring semantics on
the Boolean semiring $\Bool = (\{0,1\}, \lor, \land, 0, 1)$, the Viterbi semiring $\Vit = ([0,1], \max, \cdot, 0, 1)$ can model \emph{confidence} scores, 
the tropical semiring $\Trop= (\mathbb{R}_{+}^{\infty},\min,+,\infty,0)$
is used for cost analysis, and min-max-semirings $(A, \max, \min, a, b)$ for a totally ordered set $(A,<)$ can model different access levels.
Other interesting semirings are the Łukasiewicz semiring~$\Lukas$,
used in many-valued logic, and its dual~$\Doubt$,
which we call the semiring of doubt.
Provenance semirings of polynomials, such as $\N[X]$, \emph{track} certain literals by mapping them to different indeterminates. The overall value of  a formula is then a polynomial that describes precisely what combinations of literals imply  the truth of the formula. There are other provenance semirings, obtained from $\N[X]$ by dropping coefficients and/or exponents or by absorption, to get semirings $\Bool[X],\Trio[X], \WW[X],\Sorb[X]$ and
$\PosBool[X]$.
Algebraically, these are quotient semirings obtained from $\N[X]$ by factorisation via suitable congruences. 
They are less informative than~$\N[X]$ (which is the free semiring generated by $X$), but have specific algebraic properties and admit simpler evaluation procedures. For applications to infinite universes, and for stronger logics than first-order logic, provenance semirings with more general objects than polynomials are needed, such as $\Ninf[\![X]\!]$, the semirings of formal power series, and $\Sinf[X|$, the semirings of generalised absorptive polynomials
with potentially infinite exponents, which are fundamental for semiring semantics of fixed-point logics \cite{GraedelTan20,DannertGraNaaTan21}.

\medskip	The development of semiring semantics raises the question to what extent classical techniques and results of logic extend to semiring semantics, 
and how this depends on the algebraic properties of the underlying semiring, and this paper is part of a general research programme that
explores such questions. In previous investigations, we have studied, for instance,  the relationship between elementary equivalence and isomorphism
for finite semiring interpretations and their definability up to isomorphism \cite{GraedelMrk21}, 0-1 laws \cite{GraedelHelNaaWil22},
and locality properties as given by the theorems of Gaifman and Hanf \cite{BiziereGraNaa23}.
In all these studies, it has turned out that classical methods of mathematical logic can be extended to semiring semantics for certain semirings,
but that they fail for others. Further, these questions are often surprisingly difficult: even quite simple facts of logic in the standard Boolean semantics
become interesting research problems for semirings, and they often require completely new methods.

\medskip	The objective of this paper is to study the applicability of Ehrenfeucht--Fraïssé games --- and related model comparison games --- as a method for proving
elementary equivalence (i.e. indistinguishability by first-order sentences, denoted $\equiv$) and $m$-equivalence (i.e.  indistinguishability by sentences
of quantifier rank up to $m$, denoted $\equiv_m$) in semiring semantics. Let us recall the classical Ehrenfeucht--Fraïssé Theorem\footnote{Detailed definitions of all notions will be given in \cref{sect:defs}.} (see e.g. \cite{EbbinghausFlu99}).
	
\begin{theorem}[Ehrenfeucht--Fraïssé]
Let $\tau$ be a finite relational vocabulary.
For any two $\tau$-structures $\AA$ and $\BB$,
and for all $m \in \N$, the following statements are equivalent:

\begin{bracketenumerate}
\item
    ${\mathfrak A} \equiv_m {\mathfrak B}$;

\item
    Player II (Duplicator) has a winning strategy for the game $G_m( \AA,\BB)$;

\item
    There exists an $m$-back-and-forth system $(I_j)_{j \leq m}$ for $\AA$ and $\BB$;

\item
    $\BB\models\chi^m_\AA$, where $\chi^m_\AA$ is
    the characteristic sentence of quantifier rank $m$ for $\AA$.
\end{bracketenumerate}
\end{theorem}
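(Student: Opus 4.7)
The plan is to establish the cyclic chain of implications $(2) \Leftrightarrow (3) \Rightarrow (1) \Rightarrow (4) \Rightarrow (2)$. The equivalence $(2) \Leftrightarrow (3)$ amounts to a direct translation between winning strategies and back-and-forth systems. Given a winning strategy for Duplicator in $G_m(\AA,\BB)$, I let $I_j$ consist of all partial isomorphisms $\bar a \mapsto \bar b$ of length $m-j$ that can arise as the board after $m-j$ rounds of a play in which Duplicator followed the strategy; the forth and back conditions then express precisely that the strategy can respond to any choice by Spoiler in the remaining $j$ rounds. Conversely, any back-and-forth system $(I_j)_{j \leq m}$ induces a strategy: in round $i$, Duplicator answers so that the current partial map lies in $I_{m-i}$.

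For $(3) \Rightarrow (1)$, I would induct on the quantifier rank $k \leq m$ of a formula $\phi(\bar x)$, proving that whenever $(\bar a,\bar b)\in I_{m-k}$ one has $\AA \models \phi(\bar a)$ iff $\BB \models \phi(\bar b)$. The atomic and Boolean cases are immediate because $I_0$ consists of partial isomorphisms; the inductive step for $\exists x\,\psi$ is handled by the forth property, and its dual for $\forall x\,\psi$ by the back property. Specialising to sentences and empty tuples yields (1). The implication $(1) \Rightarrow (4)$ is then trivial, since $\chi^m_\AA$ is by construction a sentence of quantifier rank $m$ satisfied by $\AA$.

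The substantive step is $(4) \Rightarrow (2)$, which rests on the recursive definition
\[
\chi^0_{\AA,\bar a} \;=\; \bigwedge\{\lambda(\bar x): \lambda \text{ a literal with } \AA \models \lambda(\bar a)\}, \qquad \chi^{k+1}_{\AA,\bar a} \;=\; \bigwedge_{c\in A} \exists x\, \chi^k_{\AA,\bar a c} \;\wedge\; \forall x \bigvee_{c\in A} \chi^k_{\AA,\bar a c},
\]
with $\chi^m_\AA$ obtained by taking $\bar a$ empty. From $\BB \models \chi^m_\AA$, Duplicator's strategy can be read off directly: she maintains after $i$ rounds the invariant $\BB \models \chi^{m-i}_{\AA,\bar a}(\bar b)$ on the tuples $\bar a,\bar b$ played so far. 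The conjunction $\bigwedge_{c\in A}\exists x$ inside $\chi^{k+1}_{\AA,\bar a}$ lets her answer any Spoiler move in $\AA$, the clause $\forall x\bigvee_{c\in A}$ handles Spoiler moves in $\BB$, and $\chi^0_{\AA,\bar a}(\bar b)$ at the end forces the final map to be a partial isomorphism.

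The main technical obstacle is making the definition of $\chi^m_{\AA,\bar a}$ well-formed when $A$ is infinite, as the displayed conjunctions and disjunctions are then a priori infinitary. This is resolved by the standard observation that over a finite relational vocabulary, there are only finitely many formulas of any fixed quantifier rank in a fixed number of free variables up to logical equivalence, so the index set $A$ may be replaced by a finite set of representative witnesses for each type. Once this finiteness is in place and the invariants for (4) $\Rightarrow$ (2) are verified, the remaining bookkeeping in the other implications is essentially routine.
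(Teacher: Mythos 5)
Your proposal is correct and follows the standard textbook argument (cycle of implications through back-and-forth systems and Hintikka-style characteristic sentences, with the finiteness of rank-$m$ formulas up to equivalence over a finite relational vocabulary resolving the infinitary conjunctions); the paper does not prove this classical theorem itself but cites it from Ebbinghaus--Flum, where exactly this argument is given. So your proof matches the intended one.
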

	
In semiring semantics, the structures $\AA$ and $\BB$ are generalised to
(model-defining) semiring interpretations~$\pi_A$ and~$\pi_B$
mapping instantiated $\tau$-literals into a semiring $\Semi$.
The notions of $m$-equivalence, local isomorphisms, Ehrenfeucht--Fraïssé games, and back-and-forth systems all generalise in a straightforward way to $\Semi$-interpretations,
for any semiring $\Semi$ (see \cref{sect:defs}). Also the observation that $m$-back-and-forth systems can be viewed as algebraic descriptions of winning strategies of Player II in $m$-turn Ehrenfeucht--Fraïssé games holds for arbitrary semiring interpretations, i.e.\@ the equivalence (2) $\Leftrightarrow$ (3) holds for any semiring.
The notion of characteristic sentences will be discussed later (e.g.\@ in \cref{sect:homgame} and the proof of \cref{thm-soundcomplNNX}). Our main concern is the relationship between (1) and (2), or equivalently (1) and (3).
We shall have to consider both directions separately.
	
\begin{definition}
Let $\Semi$ be an arbitrary commutative semiring. We say that

\begin{bracketenumerate}
\item
    \emph{$G_m$ is sound for $\equiv_m$ on $\Semi$}
    if for any pair $\pi_A, \pi_B$ of model-defining $\Semi$-interpretations,
    the existence of a winning strategy of Player~II for $G_m(\pi_A,\pi_B)$
    implies that $\pi_A \equiv_m \pi_B$;

\item
    \emph{$G_m$ is complete for $\equiv_m$ on $\Semi$}
    if for any pair $\pi_A,\pi_B$ of model-defining $\Semi$-interpretations
    such that $\pi_A\equiv_m\pi_B$, Player~II has a winning strategy for $G_m(\pi_A,\pi_B)$. 
\end{bracketenumerate}
\end{definition}
	
In this terminology, the Ehrenfeucht--Fraïssé Theorem says that for every $m$, $G_m$ is both sound and complete for $\equiv_m$ on the Boolean semiring.
However, we shall prove that the Boolean semiring is the only semiring with this property, and for general semirings, the games $G_m$ need be neither
sound nor complete. But there are also positive results, and the detailed study of soundness and completeness of Ehrenfeucht--Fraïssé games
on semirings is quite interesting and diverse. For instance, we shall prove that $G_m$ is sound for $\equiv_m$  precisely on
\emph{fully idempotent} semirings (where both semiring operations are idempotent, i.e. $a+a=a\cdot a=a$ for all $a$). 
Examples of fully idempotent semirings include all min-max semirings, the more general lattice semirings, and the semirings $\operatorname{PosBool}[X]$ of irredundant positive Boolean DNF-formulae. Conversely, already the game $G_1$ is unsound for $\equiv_1$ on all semirings that are not fully idempotent.
On the other side, the games $G_m$ are complete for~$\equiv_m$ on the natural semiring $\N$.
We shall then turn to more powerful games, which are more difficult to win for Duplicator, but if she wins, stronger results follow. 
In particular, we study the general Ehrenfeucht--Fraïssé game $G(\pi_A,\pi_B)$ where Spoiler
can choose a number~$m$, and then the game $G_m(\pi_A,\pi_B)$ is played. If, on a semiring $\Semi$,  $G_m$ sound for $\equiv_m$ for all $m$, then
\begin{align*} 
	\text{II wins } G(\pi_A,\pi_B)\ &\Iff\  \text{II wins } G_m(\pi_A,\pi_B)\text{ for all }m \ \Lra\ \pi_A\equiv_m\pi_B \text{ for all }m \\ 
	&\Iff\  \pi_A\equiv\pi_B.\end{align*}
Thus, the soundness of all games $G_m$ implies the soundness of $G$.
The converse is not true; there are semirings on
which $G$ is sound for $\equiv$, although the games $G_m$ are unsound for $\equiv_m$. 
Trivially, this is the case for semirings that
do not admit interpretations with infinite universes
due to the impossibility of infinite sums or products,
such as $\N$ or the provenance semirings $\Bool[X], \mathbb{S}[X]$ and $\N[X]$. 
Clearly, on finite semiring interpretations, Player II can win the
game $G(\pi_A,\pi_B)$ only if $\pi_A$ and $\pi_B$ are isomorphic, and hence also elementarily equivalent.
More interesting cases include semirings that are not idempotent, but
$n$-idempotent for some  $n$, which means that adding or multiplying any element repeatedly with itself stabilises after at most $n$ steps.
An example of different nature is the semiring $\N^\infty$ obtained by extending the natural semiring $\N$ with an infinite element $\infty$.
But there also exist a number of semirings on which the unrestricted Ehrenfeucht--Fraïssé game $G$ is unsound for elementary equivalence,
including the semirings $\Vit,\Trop,\Lukas$ and $\Doubt$.
Further we shall consider \emph{bijection} and \emph{counting games}, which are simplified variants of the pebble games invented by Hella \cite{Hella92}
and by Immerman and Lander \cite{ImmermanLan90} for  bounded-variable logics with counting. 
Actually the $m$-move bijection games $BG_m$
and counting games $CG_m$ are equivalent, and they turn out to be sound for~$\equiv_m$ on \emph{every} semiring. However, with few exceptions, such as the semirings~$\N$ and~$\N[X],$ they are not complete.
A reason to study counting games separately is that they can be parametrised to games $CG^n_m$, in which the sets chosen by the players
during a move can have at most $n$ elements. For $n=1$ this is the same as the standard  Ehrenfeucht--Fraïssé $G_m$, so in general,
the games $CG^n_m$ are between~$G_m$ and $BG_m$, concerning the difficulty of Player II to win. It turns out that the games~$CG^n_m$
are sound for~$\equiv_m$ on $n$-idempotent semirings.

It remains to study the completeness and incompleteness of $G_m$ for $\equiv_m$. 
On many semirings $\Semi$, the methods established in 
\cite{GraedelMrk21} permit us to construct elementarily equivalent $\Semi$-interpretations $\pi_A\equiv \pi_B$, although locally
some elements of $\pi_A$ look different from all elements of $\pi_B$, so that
Spoiler wins $G_m(\pi_A,\pi_B)$ for some (small) $m$, in fact often for $m=1$.
The game $G_m$ is then incomplete for $\equiv_m$, and the game $G$ is incomplete for $\equiv$.
Since the games $CG^n_m$ and $BG_m$ are more difficult to win for Player II than $G_m$,
they are incomplete as well. This approach successfully works for the semirings $\Vit,\Trop,\Lukas,\Doubt,\N^\infty,\mathbb{W}[X], \mathbb{S}[X], \Bool [X]$, and $\mathbb{S}^\infty[X]$.
In some cases the interpretations $\pi_A,\pi_B$ must be infinite.
Soundness and completeness results of these games
are summarised in \cref{sum-Gm-BGm}.

\begin{figure}[ht]
\renewcommand{\arraystretch}{1.2}
\begin{tabular}{>{\raggedleft\arraybackslash}m{0.9cm}>{\centering\arraybackslash}m{2.2cm}>{\centering\arraybackslash}m{1.7cm}>{\centering\arraybackslash}m{1.5cm}>{\centering\arraybackslash}m{1.5cm}>{\centering\arraybackslash}m{1.4cm}>{\centering\arraybackslash}m{1.4cm}}
\multicolumn{2}{m{3.5cm}}{Application semirings:} &
\footnotesize $\Semi \not\cong \Bool$ fully idempotent &
\small $\Trop \cong \Vit$ &
\small $\mathbb{L} \cong \mathbb{D}$ &
\small $\N$ &  \small $\N^\infty$ \\
\addlinespace[2pt]\hhline{-------}\addlinespace[2pt]
\multirow{4}{*}{\rotatebox[origin=c]{90}{\small\strut Soundness}} &
\small $G_m$ for $\equiv_m$ &
\cce \hyperref[thm-soundIdem]{~\hfill\cmark\hfill~} &
\cce \hyperref[thm-soundIdem]{~\hfill\xmark\hfill~} &
\cce \hyperref[thm-soundIdem]{~\hfill\xmark\hfill~} &
\cce \hyperref[thm-soundIdem]{~\hfill\xmark\hfill~} &
\cce \hyperref[thm-soundIdem]{~\hfill\xmark\hfill~} \\
&
\small $CG^n_m$ for $\equiv_m$ &
\ccf \hyperref[thm-soundCGm]{~\hfill\cmark\hfill~} &
\ccf \hyperref[thm-soundCGm]{~\hfill\xmark\hfill~} &
\ccf \hyperref[thm-soundCGm]{~\hfill\xmark\hfill~} &
\ccf \hyperref[thm-soundCGm]{~\hfill\xmark\hfill~} &
\ccf \hyperref[thm-soundCGm]{~\hfill\xmark\hfill~} \\
&
\small $BG_m$ for $\equiv_m$ &
\ccd \hyperref[thm-soundBGm]{~\hfill\cmark\hfill~} &
\ccd \hyperref[thm-soundBGm]{~\hfill\cmark\hfill~} &
\ccd \hyperref[thm-soundBGm]{~\hfill\cmark\hfill~} &
\ccd \hyperref[thm-soundBGm]{~\hfill\cmark\hfill~} &
\ccd \hyperref[thm-soundBGm]{~\hfill\cmark\hfill~} \\
&
\small $G$ for $\equiv$ &
\cce \hyperref[thm-soundIdem]{~\hfill\cmark\hfill~} &
\hyperref[thm-soundVG]{~\hfill\xmark\hfill~} &
\xmark &
\ccz \cmark &
\hyperref[thm-soundNinfG]{~\hfill\cmark\hfill~} \\
\addlinespace[2pt]\hhline{-------}\addlinespace[2pt]
\multirow{4}{*}{\rotatebox[origin=c]{90}{\small\strut Completeness}} &
\small $G_m$ for $\equiv_m$ &
\ccv \hyperref[thm-complIdem]{~\hfill\xmark\hfill~} &
\hyperref[prop-V1equiv]{~\hfill\xmark\hfill~} &
\ccv \xmark &
\hyperref[thm-complN]{~\hfill\cmark\hfill~} &
\hyperref[cor-complNinf]{~\hfill\xmark\hfill~} \\
&
\small $CG^n_m$ for $\equiv_m$ &
\ccv \hyperref[thm-complIdem]{~\hfill\xmark\hfill~} &
\hyperref[prop-V1equiv]{~\hfill\xmark\hfill~} &
\ccv \xmark &
\hyperref[thm-complN]{~\hfill\cmark\hfill~} &
\hyperref[cor-complNinf]{~\hfill\xmark\hfill~} \\
&
\small $BG_m$ for $\equiv_m$ &
\ccv \hyperref[thm-complIdem]{~\hfill\xmark\hfill~} &
\hyperref[prop-V1equiv]{~\hfill\xmark\hfill~} &
\ccv \xmark &
\hyperref[thm-complN]{~\hfill\cmark\hfill~} &
\hyperref[cor-complNinf]{~\hfill\xmark\hfill~} \\
&
\small $G$ for $\equiv$ &
\ccv \hyperref[thm-complIdem]{~\hfill\xmark\hfill~} &
\hyperref[thm-complVG]{~\hfill\xmark\hfill~} &
\ccv \xmark &
\hyperref[thm-complN]{~\hfill\cmark\hfill~} &
\hyperref[cor-complNinf]{~\hfill\xmark\hfill~} \\
\end{tabular}

\bigskip

\medskip

\begin{tabular}{>{\raggedleft\arraybackslash}m{0.9cm}>{\centering\arraybackslash}m{2.2cm}>{\centering\arraybackslash}m{1.7cm}>{\centering\arraybackslash}m{1.5cm}>{\centering\arraybackslash}m{1.5cm}>{\centering\arraybackslash}m{1.4cm}>{\centering\arraybackslash}m{1.4cm}}
\multicolumn{2}{m{3.5cm}}{Provenance semirings:} &
\small $\operatorname{PosBool}[X]$ &
\small $\mathbb{W}[X]$ &
\small $\mathbb{S}[X], \Bool[X]$ &
\small $\N[X]$ &
\small $\Sinf[X]$ \\
\addlinespace[2pt]\hhline{-------}\addlinespace[2pt]
\multirow{4}{*}{\rotatebox[origin=c]{90}{\small\strut Soundness}} &
\small $G_m$ for $\equiv_m$ &
\cce \hyperref[thm-soundIdem]{~\hfill\cmark\hfill~} &
\cce \hyperref[thm-soundIdem]{~\hfill\xmark\hfill~} &
\cce \hyperref[thm-soundIdem]{~\hfill\xmark\hfill~} &
\cce \hyperref[thm-soundIdem]{~\hfill\xmark\hfill~} &
\cce \hyperref[thm-soundIdem]{~\hfill\xmark\hfill~} \\
&
\small $CG^n_m$ for $\equiv_m$ &
\ccf \hyperref[thm-soundCGm]{~\hfill\cmark\hfill~} &
\ccf \hyperref[thm-soundCGm]{~\hfill\cmark\hfill~} &
\ccf \xmark &
\ccf \hyperref[thm-soundCGm]{~\hfill\xmark\hfill~} &
\ccf \hyperref[thm-soundCGm]{~\hfill\xmark\hfill~} \\
&
\small $BG_m$ for $\equiv_m$ &
\ccd \hyperref[thm-soundBGm]{~\hfill\cmark\hfill~} &
\ccd \hyperref[thm-soundBGm]{~\hfill\cmark\hfill~} &
\ccd \hyperref[thm-soundBGm]{~\hfill\cmark\hfill~} &
\ccd \hyperref[thm-soundBGm]{~\hfill\cmark\hfill~} &
\ccd \hyperref[thm-soundBGm]{~\hfill\cmark\hfill~} \\
&
\small $G$ for $\equiv$ &
\cce \hyperref[thm-soundIdem]{~\hfill\cmark\hfill~} &
\ccf \hyperref[thm-soundWXG]{~\hfill\cmark\hfill~} &
\ccz \cmark &
\ccz \cmark &
\hyperref[thm-soundSinfG]{~\hfill\cmark\hfill~} \\
\addlinespace[2pt]\hhline{-------}\addlinespace[2pt]
\multirow{4}{*}{\rotatebox[origin=c]{90}{\small\strut Completeness}} &
\small $G_m$ for $\equiv_m$ &
\ccv \hyperref[thm-complIdem]{~\hfill\xmark\hfill~} &
\ccv \xmark &
\ccv \xmark &
\hyperref[thm-complNX]{~\hfill\cmark\hfill~} &
\ccv \xmark \\
&
\small $CG^n_m$ for $\equiv_m$ &
\ccv \xmark &
\ccv \xmark &
\ccv \xmark &
\hyperref[thm-complNX]{~\hfill\cmark\hfill~} &
\ccv \xmark \\
&
\small $BG_m$ for $\equiv_m$ &
\ccv \hyperref[thm-complIdem]{~\hfill\xmark\hfill~} &
\ccv \xmark &
\ccv \xmark &
\hyperref[thm-complNX]{~\hfill\cmark\hfill~} &
\ccv \xmark \\
&
\small $G$ for $\equiv$ &
\ccv \hyperref[thm-complIdem]{~\hfill\xmark\hfill~} &
\ccv \xmark &
\ccv \xmark &
\hyperref[thm-complNX]{~\hfill\cmark\hfill~} &
\ccv \xmark \\
\end{tabular}

\medskip

\renewcommand{\arraystretch}{1.0}
\caption{
\sethlcolor{c1'} \hl{Due to full idempotence.}
\sethlcolor{c5'} \hl{Due to $n$-idempotence.}
\sethlcolor{c3'} \hl{Holds for any semiring.}
\sethlcolor{c2} \hl{Follows from the finiteness of the universes.}
\sethlcolor{c4'} \hl{Cannot hold since elementary equivalence of finite interpretations does not imply isomorphism.}
}
\label{sum-Gm-BGm}
\end{figure}

The proof that locally different $\Semi$-interpretations are nevertheless elementarily equivalent often proceeds 
via  separating sets of homomorphisms. We use this method to
propose a new kind of games, called  \emph{homomorphism games}, involving the selection of a homomorphism into the Boolean semiring, and
a one-sided winning condition, due to the property that homomorphisms may transfer model-defining $\Semi$-interpretations
into $\Bool$-interpretations that are no longer model-defining.  We prove that these homomorphism games
provide a sound and complete method for proving logical equivalences on finite and infinite lattice semirings.

\section{Semiring semantics}\label{sect:defs}

We briefly summarise semiring semantics for first-order logic, as introduced in \cite{GraedelTan17}, and the resulting generalised notions of isomorphism and equivalence.
	
\begin{definition}[Semiring]
	A commutative semiring is an algebraic structure $\Semi = (S, +, \cdot, 0, 1)$ with $0  \neq 1$, such that $(S, +, 0)$ and $(S, \cdot, 1)$ are commutative monoids, $\cdot$ distributes over $+$, and $0 \cdot s = s \cdot 0 = 0$.
\end{definition}

A commutative semiring is  \emph{naturally ordered} (by addition) if $s \leq t :\Leftrightarrow \exists r (s+r = t)$ defines a partial order. In particular, this excludes rings.
We only consider commutative and naturally ordered semirings and simply refer to them as \emph{semirings}.  
A semiring $\Semi$ is \emph{idempotent} if $s+s=s$ for each $s \in S$ and \emph{multiplicatively idempotent} if $s \cdot s = s$ for all $s \in S$.
If both properties are satisfied, we say that $\Semi$ is fully idempotent.
Finally, $\Semi$ is \emph{absorptive} if $s + s t = s$ for all $s, t \in S$ or, equivalently, if multiplication is decreasing in $\Semi$, i.e. $st \leq s$ for $s, t \in S$. Every absorptive semiring is idempotent.
	
\subparagraph*{Application semirings} There are several applications which can be modelled by semirings and provide useful practical information about the evaluation of a formula. 
\begin{itemize}
	\item A totally ordered set $(S, \leq)$ with least element $s$ and greatest element $t$ induces the \emph{min-max semiring} $(S, \max, \min, s, t)$. It can be used to reason about access levels.
	\item The \emph{tropical semiring} $\Trop = (\R^\infty_+ , \min, +, \infty, 0)$ provides the opportunity to annotate basic facts with a cost which has to be paid for accessing them and realise a cost analysis.
	\item The \emph{Viterbi semiring} $\Vit = ([0,1]_\R,\max,\cdot,0,1)$, which is in fact  isomorphic to $\Trop$ via $y \mapsto -\ln y$
	can be used for  reasoning about confidence. 
	\item An alternative semiring for this is the \emph{Łukasiewicz semiring} $\Lukas= ([0,1]_\R, \max, \odot, 0, 1)$, where multiplication is given by $s \odot t = \max(s+t -1, 0)$. It is isomorphic to the semiring of doubt $\Doubt = ([0,1]_\R, \min, \oplus, 1, 0)$ with $s \oplus t = \min(s+t, 1)$.
	\item The \emph{natural semiring} $\N = (\N, +, \cdot, 0, 1)$ is used to count the number of evaluation strategies proving that a sentence is satisfied. It is also important for bag semantics in databases. 
\end{itemize}
	
\subparagraph*{Provenance semirings}
Provenance semirings of polymomals provide information on which combinations of literals imply the truth of a formula.
The universal provenance semiring is the semiring $\N[X]$ of multivariate polynomials with indeterminates from $X$ and coefficients from $\N$. Other provenance 
semirings are obtained as quotient semirings of $\N[X]$ induced by congruences for (full) idempotence and absorption. The resulting provenance values are less informative but their computation is more efficient. 
\begin{itemize}
	\item By dropping coefficients from $\N[X]$, we get the free idempotent semiring $\Bool[X]$ whose elements are finite sets of monomials. It is the quotient induced by $x + x \thicksim x$.
	\item If, in addition, exponents are dropped, we obtain the Why-semiring $\WX$ of finite sums of monomials that are linear in each argument.
	\item The free absorptive semiring $\Sorb[X]$ consists of $0,1$ and all antichains of monomials with respect to the absorption order $\succcurlyeq$. A monomial $m_1$ absorbs $m_2$, denoted $m_1 \succcurlyeq m_2$, if it has smaller exponents, i.e. $m_2 =m \cdot m_1$ for some monomial $m$.
	\item Finally, the lattice semiring $\PosBool[X]$ freely generated by the set $X$ arises from $\Sorb[X]$ by collapsing exponents.
\end{itemize}

For a given finite relational vocabulary $\tau$, we denote by $\lit_n (\tau)$ the set of literals $R \bar{x}$ and $\neg R \bar{x}$ where $R \in \tau$ and $\bar{x}$ is a tuple of variables from $\{x_1, \dots , x_n\}$. The set $\lit_A (\tau)$ refers to literals $R \bar{a}$ and $\lnot R \bar{a}$ that are instantiated with elements from a universe $A$.

\begin{definition}[$\Semi$-interpretation]
	Given a semiring $\Semi$, a mapping $\pi \colon \lit_A (\tau) \to \Semi$ is an $\Semi$-interpretation (of vocabulary $\tau$ and universe $A$). We say that $\Semi$ is model-defining if exactly one of the values $\pi(L)$ and $\pi(\overline{L})$ is zero for any pair of complementary literals $L, \overline{L} \in \lit_A(\tau)$.
\end{definition}

An $\Semi$-interpretation $\pi \colon \lit_A (\tau) \to \Semi$ inductively extends to valuations $\pi \llb \varphi (\bar{a}) \rrb$ of instantiated first-order formulae $\varphi (\bar{x})$ in negation normal form. Equalities are interpreted by their truth value, that is $\pi \llb a = a \rrb := 1 $ and $\pi \llb a = b \rrb := 0$ for $a \neq b$ (and analogously for inequalities). Based on that, the semantics of disjunction and existential quantifiers is defined via sums, while conjunctions and universal quantifiers are interpreted as products.
\begin{alignat*}{3}
	\pi \llb \psi (\bar{a}) \vee \theta (\bar{a}) \rrb &:= \pi \llb \psi (\bar{a}) \rrb +  \pi \llb  \theta (\bar{a}) \rrb \hspace*{1cm} &\pi \llb \psi (\bar{a}) \wedge \theta (\bar{a}) \rrb &:= \pi \llb \psi (\bar{a}) \rrb \cdot  \pi \llb  \theta (\bar{a}) \rrb \\
	\pi \llb \exists x \psi (\bar{a},x) \rrb &:= \sum_{a \in A} \pi \llb \psi (\bar{a}, a) \rrb \hspace*{1cm}  &  \pi \llb \forall x \psi (\bar{a},x) \rrb &:= \prod_{a \in A} \pi \llb \psi (\bar{a}, a) \rrb
\end{alignat*}

\begin{lemma}[Fundamental Property]
	Let $\pi \colon \lit_A(\tau) \to \Semi$ be an $\Semi$-interpretation and $h \colon \Semi \to \mathcal{T}$ be a semiring homomorphism. Then, $(h \circ \pi)$ is a $\mathcal{T}$-interpretation and it holds that
	$h(\pi \llb \phi (\bar{a}) \rrb ) = (h \circ \pi) \llb \phi (\bar{a}) \rrb$ for all $\phi ( \bar{x}) \in \fo (\tau)$ and instantiations $\bar{a} \subseteq A$.
\end{lemma}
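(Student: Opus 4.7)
The plan is to prove the equation by structural induction on $\phi$ (assumed in negation normal form). The first assertion --- that $(h \circ \pi) \colon \lit_A(\tau) \to \mathcal{T}$ is a $\mathcal{T}$-interpretation --- is immediate from the typing of the composition. Note that $(h \circ \pi)$ need not be model-defining even when $\pi$ is, since $h$ may send nonzero values to $0$; this asymmetry is harmless here but becomes central later in the paper when homomorphism games are introduced.

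For the base cases I would treat literals and (in)equality atoms. For a literal $L \in \lit_A(\tau)$ the equation reads $h(\pi \llb L \rrb) = h(\pi(L)) = (h \circ \pi)(L) = (h \circ \pi) \llb L \rrb$ directly from the definitions, while for equality and inequality atoms the claim reduces to $h(1) = 1$ and $h(0) = 0$, which hold because $h$ is a semiring homomorphism. The inductive step for the Boolean connectives is then a one-line calculation: for disjunction, additivity of $h$ yields
\[
h(\pi \llb \psi \lor \theta \rrb) = h(\pi \llb \psi \rrb + \pi \llb \theta \rrb) = h(\pi \llb \psi \rrb) + h(\pi \llb \theta \rrb),
\]
which by the induction hypothesis equals $(h \circ \pi) \llb \psi \rrb + (h \circ \pi) \llb \theta \rrb = (h \circ \pi) \llb \psi \lor \theta \rrb$; conjunction is handled symmetrically using multiplicativity of $h$.

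The only genuinely delicate step --- and the main obstacle in the general case --- is quantification, where
\[
h(\pi \llb \exists x \, \psi(\bar{a}, x) \rrb) = h\Bigl( \sum_{a \in A} \pi \llb \psi(\bar{a}, a) \rrb \Bigr)
\]
and one has to commute $h$ with the sum (and, dually, with the product in the $\forall$-case). For finite $A$ this is automatic by finite additivity and multiplicativity: each summand transforms under the induction hypothesis to $(h \circ \pi) \llb \psi(\bar{a}, a) \rrb$, so the right-hand side equals $(h \circ \pi) \llb \exists x \, \psi(\bar{a}, x) \rrb$, and the induction closes. For infinite $A$, however, the argument requires $h$ to preserve the relevant infinite sums and products --- a continuity-like condition that has to be taken as part of the standing assumption on homomorphisms between semirings that support infinitary operations (such as $\Ninf$ or $\Sinf[X]$). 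Under this convention the induction goes through and yields the claimed identity for every $\phi \in \fo(\tau)$ and every tuple $\bar{a}$ from $A$.
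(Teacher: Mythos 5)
Your proof is correct; the paper itself states this lemma without proof (it is recalled as a known result from the semiring provenance literature), and the structural induction you give is exactly the standard argument one would supply. Your remark that the quantifier case requires $h$ to be compatible with the infinitary sum and product operators when $A$ is infinite is a genuine and necessary caveat, and the paper implicitly relies on it later (e.g.\ in the proof of \cref{thm-complVG}, where the endomorphism $s \mapsto 2s$ on $\Trop$ is explicitly checked to be ``compatible with the infinitary operations'' before the Fundamental Property is invoked).
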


Basic model theoretic concepts such as equivalence and isomorphism naturally generalise to semiring semantics and yield more fine-grained notions. Given a mapping $\sigma \colon A \to B$ and some $L \in \lit_A (\tau)$, we denote by $\sigma (L)$ the $\tau$-literal over $B$ which arises from $L$ by replacing each occurrence of $a \in A$ with $\sigma (a) \in B$.

\begin{definition}[Isomorphism]
	$\Semi$-interpretations $\pi_A \colon \lit_A(\tau) \to \Semi$ and $\pi_B \colon \lit_B(\tau) \to \Semi$ are isomorphic, denoted as $\pi_A \cong \pi_B$, if there is a bijection $\sigma \colon A \to B$ such that $\pi_A(L) = \pi_B(\sigma(L))$ for all $L \in \lit_A(\tau)$. A mapping $\sigma \colon \bar{a} \mapsto \bar{b}$  is a local isomorphism between $\pi_A$ and $\pi_B$ if it is an isomorphism between the subinterpretations $\pi_A \vert_{\lit_{\bar{a}} (\tau)}$ and $\pi_B \vert_{\lit_{\bar{b}} (\tau)}$.
\end{definition}

\begin{definition}[Elementary equivalence]
	Two $\Semi$-interpretations $\pi_A \colon \lit_A (\tau) \to \Semi$ and $\pi_B \colon \lit_B (\tau) \to \Semi$ with elements $\bar{a} \in A^n$ and $\bar{b} \in B^n$ are elementarily equivalent, denoted $(\pi_A, \bar{a}) \equiv (\pi_B, \bar{b})$, if $\pi_A \llb \phi (\bar{a}) \rrb = \pi_B \llb \phi (\bar{b}) \rrb$  for all $\phi(\bar{x}) \in \fo(\tau)$. They are $m$-equivalent, denoted $(\pi_A,\bar{a}) \equiv_m (\pi_B,\bar{b})$, if the above holds for all $\phi(\bar{x})$ with quantifier rank at most $m$.
\end{definition}

As in classical semantics, isomorphic $\Semi$-interpretations are elementarily equivalent. The converse, however, marks an important difference to Boolean semantics;  it fails for a number of semirings, including all min-max semirings with at least three elements, while it still holds on other semirings such as $\Trop, \Vit, \N$ and $\N[X]$ (see \cite{GraedelMrk21}).

\section{\texorpdfstring{$m$}{m}-turn Ehrenfeucht--Fraïssé games} \label{sec:mTurnEF}

Given that the notion of local isomorphisms  extends in a straightforward way from structures to semiring interpretations, we also obtain
Ehrenfeucht--Fraïssé games $G_m (\pi_A, \pi_B)$ played on $\Semi$-interpretations $\pi_A$, $\pi_B$:
In the $i$-th turn, Spoiler chooses some element $a_i \in A$ or $b_i \in B$, and Duplicator answers with an element in the other $\Semi$-interpretation;
the play then continues with the subgame $G_{m-i} (\pi_A, a_1, \dots, a_i, \pi_B, b_1, \dots, b_i)$.
After $m$ moves, tuples $\bar a=(a_1,\dots,a_m)$ in $A$ and $\bar b=(b_1,\dots,b_m)$ in $B$ have been selected,
and Duplicator wins the play if $\sigma \colon \bar{a} \mapsto \bar{b}$ is a local isomorphism. 
	
However, while classical structures $\AA$ and $\BB$ are separated by a formula $\exists x \psi(x)$ or $\forall x \psi (x)$ if, and only if, there is some $a \in A$ (or $b \in B$) such that for all $b \in B$ (or $a \in A$, respectively) the formula $\psi(x)$ separates $(\AA, a)$ from $(\BB,b)$, neither of the implications translates to semiring semantics. Hence, in contrast to the game $G_m$ played on classical structures, already very simple semiring interpretations\footnote{We describe semiring interpretations over a monadic vocabulary by tables, whose rows are indexed by elements of the universe, and columns by the 
predicate symbols and their negations, such that the entry for row $a$ and column $P$ has the semiring value of the literal $Pa$.}
illustrate that  $G_m (\pi_A, \pi_B)$ is in general neither sound nor complete for $\equiv_m$.
	
\medskip
\begin{minipage}[t]{0.46\textwidth}
	\begin{center}
		{\small
			$(\N, +, \cdot, 0, 1)$ \\
			\ \\
			$\pi_A:$
			\begin{tabular}{c||c|c}
				$A$ & $R$ & $\lnot R$  \\
				\hline
				\hline
				$a_1$ & $1$ & $0$  \\
				\hline
				$a_2$ & $1$ & $0$ \\
				\hline
				$a_3$ & $2$ & $0$ \\
			\end{tabular}
			\hspace*{1mm}
			$\pi_B:$
			\begin{tabular}{c||c|c}
				$B$ & $R$ & $\lnot R$  \\
				\hline
				\hline
				$b_1$ & $1$ & $0$  \\
				\hline
				$b_2$ & $2$ & $0$  \\
				\hline
				$b_3$ & $2$ & $0$  \\
			\end{tabular} 
			\ \\ \ \\
			$\pi_A\llb \exists x Rx \rrb = 4 \neq 5 = \pi_B \llb \exists x Rx \rrb$
		}
	\end{center}
\end{minipage}
\hfill\vline\hfill
\begin{minipage}[t]{0.46\textwidth}
	\begin{center}
		{\small
			$(\{0,1,2,3,4\}, \max, \min, 0, 4)$ \\
			\ \\
			$\pi_A: $
			\begin{tabular}{c||c|c}
				$A$ & $R$ & $\lnot R$  \\
				\hline
				\hline
				$a_1$ & $1$ & $0$  \\
				\hline
				$a_2$ & $2$ & $0$ \\
				\hline
				$a_3$ & $4$ & $0$ \\
			\end{tabular}
			\hspace*{1mm}
			$\pi_B:$
			\begin{tabular}{c||c|c}
				$B$ & $R$ & $\lnot R$  \\
				\hline
				\hline
				$b_1$ & $1$ & $0$  \\
				\hline
				$b_2$ & $3$ & $0$  \\
				\hline
				$b_3$ & $4$ & $0$  \\
			\end{tabular} 
			\ \\ \ \\
			$\pi_A\llb \exists x Rx \rrb =  4 = \pi_B \llb \exists x Rx \rrb$ \\
			$\pi_A\llb \forall x Rx \rrb =  1 = \pi_B \llb \forall x Rx \rrb$
		}
	\end{center}
\end{minipage}

\medskip
This suggests that the direct adaptation of the game rules poses problems and raises the question on which semirings the game $G_m$ is sound, and on which it is complete for $\equiv_m$. In particular, we aim to relate this to the algebraic properties of the underlying semiring.
	
\subsection{Soundness of the games and counting in semirings}
	
The fact that quantifiers in classical semantics do not capture counting is one of the central limitations of the expressive power of  first-order logic.
However, in semiring semantics, this is more complicated: Given a formula $\psi(x)$ and some $s \in \Semi$, the number of $a \in A$ such that $\pi \llb \psi (a) \rrb =s$ may affect both $\pi \llb \exists x \psi (x) \rrb$ and $\pi \llb \forall x \psi (x) \rrb$.
Only in fully idempotent semirings unequal sums or products can be attributed to differing sets of summands or factors, which causes full idempotence to be 
a necessary and sufficient condition for the soundness of $G_m$.

\begin{theorem} \label{thm-soundIdem}
The games $G_m$ are sound for $\equiv_m$ on a semiring $\Semi$ and all $m \in \N$ if, and only if, $\Semi$ is fully idempotent.
\end{theorem}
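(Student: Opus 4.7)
The plan is to handle the two directions separately: induction on quantifier rank for soundness, and explicit counterexamples for its converse.

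For the \emph{if} direction, I would assume $\Semi$ is fully idempotent and prove by induction on $m$ the stronger statement that whenever Duplicator has a winning strategy in the residual game $G_{m-i}(\pi_A, \bar{a}, \pi_B, \bar{b})$ after $i$ moves, then $\pi_A\llb\phi(\bar{a})\rrb = \pi_B\llb\phi(\bar{b})\rrb$ for every $\phi(\bar{x}) \in \fo(\tau)$ with $|\bar{x}|=i$ and $\qr(\phi) \leq m-i$. The base case ($i=m$) amounts to Duplicator winning $G_0$, i.e.\ $\sigma\colon \bar{a}\mapsto\bar{b}$ is a local isomorphism, so literals and equalities evaluate identically, and quantifier-free formulae follow by an obvious induction on their Boolean structure. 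In the inductive step, Boolean connectives are immediate from the induction hypothesis, so the real content lies in the quantifier case.

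For $\phi = \exists y\,\psi(\bar{x}, y)$, I would extract from Duplicator's winning strategy maps $f\colon A \to B$ and $g\colon B \to A$ such that Duplicator still wins the remaining subgame after Spoiler's move $a\in A$ answered by $f(a)$, and symmetrically for $b \in B$ answered by $g(b)$. The induction hypothesis then yields $\pi_A\llb\psi(\bar{a},a)\rrb = \pi_B\llb\psi(\bar{b},f(a))\rrb$ and dually, so the sets $V_A := \{\pi_A\llb\psi(\bar{a},a)\rrb : a \in A\}$ and $V_B := \{\pi_B\llb\psi(\bar{b},b)\rrb : b \in B\}$ coincide. The crucial step is then the identity $\sum_{a \in A} \pi_A\llb\psi(\bar{a},a)\rrb = \sum_{v \in V_A} v$, valid by additive idempotence; the case $\phi = \forall y\,\psi$ is dual and invokes multiplicative idempotence.

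For the \emph{only if} direction, I would pick $s \in S$ violating additive idempotence ($s+s\neq s$) or, if only multiplicative idempotence fails, a witness with $s\cdot s\neq s$. Over the monadic vocabulary $\{R\}$, set $\pi_A$ on $A = \{a_1\}$ by $\pi_A(Ra_1) = s$, $\pi_A(\lnot R a_1) = 0$, and $\pi_B$ on $B = \{b_1, b_2\}$ by $\pi_B(R b_i) = s$, $\pi_B(\lnot R b_i) = 0$. Both interpretations are model-defining, and Duplicator trivially wins $G_1$ since every single move can be answered by an element of value $s$. But $\pi_A\llb\exists x\, Rx\rrb = s \neq s + s = \pi_B\llb\exists x\, Rx\rrb$ in the first case, and $\pi_A\llb\forall x\, Rx\rrb = s \neq s \cdot s = \pi_B\llb\forall x\, Rx\rrb$ in the second, witnessing $\pi_A \not\equiv_1 \pi_B$, so already $G_1$ fails to be sound for $\equiv_1$. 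The hard part is the quantifier step of the induction, where Duplicator's strategy supplies only an equality of \emph{sets} of subformula values, while elementary equivalence demands equality of the actual \emph{sums and products} indexed by the two universes. Full idempotence is precisely the algebraic condition that collapses these aggregations to their support, which is both what makes the induction work and, by the above counterexample, the exact threshold for soundness.
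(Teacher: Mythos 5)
Your proposal is correct and takes essentially the same route as the paper: your direct induction (Duplicator's strategy yields $V_A = V_B$, and full idempotence collapses the sums/products to their value sets) is just the mirror image of the paper's contrapositive argument, in which a separating quantified formula forces $\{\pi_A\llb\psi(\bar a,a)\rrb : a\in A\}\neq\{\pi_B\llb\psi(\bar b,b)\rrb : b\in B\}$ and hence a winning move for Spoiler. Your one-element-versus-two-element counterexample for the converse is the paper's, with the roles of $A$ and $B$ swapped.
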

	
\begin{proof}
	$(\Leftarrow)$: Suppose that $\Semi$ is fully idempotent. 
	Based on a separating formula $\varphi (\bar{x}) \in \fo(\tau)$ with $\pi_A \llb \varphi (\bar{a}) \rrb \neq \pi_B \llb \varphi (\bar{b}) \rrb$ and $\qr (\varphi (\bar{x})) \leq m$ where $\bar{a} \in A^n$ and $\bar{b} \in B^n$, we construct a winning strategy for Spoiler in the game $G_m(\pi_A, \bar{a}, \pi_B, \bar{b})$ by induction.
	We only consider the cases $\varphi(\bar{x}) = Qx \psi (\bar{x},x)$ with $Q \in \{\exists, \forall\}$ where $\qr(\varphi (\bar{x})) \leq m$.
	It holds that
	\begin{align*}
		\pi_A \llb \exists x \psi(\bar{a},x) \rrb = \sum\limits_{a \in A} \pi_A \llb \psi (\bar{a},a) \rrb &\neq \sum\limits_{b \in B} \pi_B \llb \psi (\bar{b},b) \rrb = \pi_B \llb \exists x \psi(\bar{b},x) \rrb \text{ or} \\
		\pi_A \llb \forall x \psi(\bar{a},x) \rrb = \prod\limits_{a \in A} \pi_A \llb \psi (\bar{a},a) \rrb &\neq \prod\limits_{b \in B} \pi_B \llb \psi (\bar{b},b) \rrb = \pi_B \llb \forall x \psi(\bar{b},x) \rrb.
	\end{align*}
	Both cases imply
	$
	\{ \pi_A \llb \psi (\bar{a},a) \rrb \colon a \in A \} \neq \{ \pi_B \llb \psi (\bar{b},b) \rrb \colon b \in B \}
	$
	due to full idempotence.
	Spoiler wins the game $G_m (\pi_A, \bar{a}, \pi_B, \bar{b})$ by choosing some element $a \in A$ or $b \in B$ witnessing this inequality.
	For all possible answers $b \in B$ or $a \in A$, respectively, it holds that $\pi_A \llb \psi (\bar{a}, a) \rrb \neq \pi_B \llb \psi (\bar{b}, b) \rrb$.
	Applying the induction hypothesis yields that Spoiler has a winning strategy for the remaining game $G_{m-1} (\pi_A, \bar{a}, a, \pi_B, \bar{b}, b)$, as $\qr(\psi(\bar{x},x)) \leq m-1$.
	
	$(\Rightarrow)$: If $\Semi$ is not fully idempotent,
	there is some $s \in \Semi$ such that 
	$
	s+s \neq s \; \text{ or } \; s \cdot s \neq s.
	$
	Clearly, Duplicator wins $G_1(\pi_A, \pi_B)$
	on the following $\Semi$-interpretations,
	while $\pi_A \not\equiv_1 \pi_B$ due to
	$
		\pi_A \llb \exists x Rx \rrb = s+s \neq s = \pi_B \llb \exists x Rx \rrb \text{ or }
		\pi_A \llb \forall x Rx \rrb = s \cdot s \neq s = \pi_B \llb \forall x Rx \rrb.
	$
	
	\medskip
		
	\begin{minipage}[c]{0.8\linewidth}
		\centering
		$\pi_A:\quad$
		\begin{tabular}{c || c | c}
			$A$ & $R$ & $\neg R$ \\ \hline\hline
			$a_1$ & s & 0 \\
			$a_2$ & s & 0 \\
		\end{tabular}
		$\quad\quad\quad\pi_B:\quad$
		\begin{tabular}{c || c | c}
			$B$ & $R$ & $\neg R$ \\ \hline\hline
			$b$ & s & 0 \\
		\end{tabular}\hspace*{8mm}
	\end{minipage}
\end{proof}
	
This result motivates the consideration of more powerful games such as $m$-turn bijection games, a variant of the pebble games which, on finite classical structures, characterise $m$-equivalence in $\fo$ with counting quantifiers \cite{Hella92}.

\begin{definition}
		The game $BG_m(\pi_A, \bar{a}, \pi_B, \bar{b})$ uses the same positions and winning condition as $G_m(\pi_A, \bar{a}, \pi_B, \bar{b})$, but in each round Duplicator has to provide a bijection $h \colon A \to B$.
		If such a bijection does not exist, i.e. $|A| \neq |B|$, Spoiler wins immediately.
		Otherwise, Spoiler chooses some $a \in A$ and the pair $(a,h(a))$ is added to the current position.
\end{definition}
	
In contrast to the classical Ehrenfeucht--Fraïssé game, this modification ensures soundness without requiring full idempotence of the underlying semiring.

\begin{theorem} \label{thm-soundBGm}
For every $m\in\N$, the game $BG_m$ is sound for $\equiv_m$ on every semiring $\Semi$.
\end{theorem}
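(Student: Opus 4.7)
The natural plan is to prove the statement by induction on $m$, showing the stronger claim that if Duplicator wins $BG_m(\pi_A, \bar a, \pi_B, \bar b)$ then $(\pi_A, \bar a) \equiv_m (\pi_B, \bar b)$. The base case $m=0$ is immediate: a Duplicator win means $\sigma \colon \bar a \mapsto \bar b$ is a local isomorphism, hence $\pi_A\llb L(\bar a)\rrb = \pi_B\llb L(\bar b)\rrb$ for every literal $L$, and an easy induction on quantifier-free formulae extends this equality to all formulae of quantifier rank $0$.

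For the inductive step, I assume the claim for $m-1$ and prove equality $\pi_A\llb \varphi(\bar a)\rrb = \pi_B\llb \varphi(\bar b)\rrb$ for every $\varphi(\bar x)$ of quantifier rank at most $m$. Since formulae are in negation normal form, the only non-trivial cases are the Boolean connectives and the quantifiers; disjunction and conjunction reduce straightforwardly to subformulae of quantifier rank at most $m$, so the decisive case is $\varphi(\bar x) = Qy\,\psi(\bar x, y)$ with $Q \in \{\exists, \forall\}$ and $\qr(\psi) \leq m-1$. Let $h \colon A \to B$ be the bijection Duplicator plays according to her winning strategy at the first move of $BG_m(\pi_A, \bar a, \pi_B, \bar b)$. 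By the rules of the game, for every $a \in A$ Duplicator still wins the remaining game $BG_{m-1}(\pi_A, \bar a, a, \pi_B, \bar b, h(a))$; the induction hypothesis then yields
\[
    \pi_A\llb \psi(\bar a, a) \rrb \;=\; \pi_B\llb \psi(\bar b, h(a)) \rrb \qquad \text{for every } a \in A.
\]

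Because $h$ is a bijection, commutativity of the semiring operations lets me reindex the quantifier sum or product over $B$ by $b = h(a)$:
\[
    \pi_A\llb \exists y\,\psi(\bar a, y) \rrb \;=\; \sum_{a\in A} \pi_A\llb \psi(\bar a, a)\rrb \;=\; \sum_{a\in A} \pi_B\llb \psi(\bar b, h(a))\rrb \;=\; \sum_{b\in B} \pi_B\llb \psi(\bar b, b)\rrb,
\]
and symmetrically for $\forall y$ using products. This gives the required equality and closes the induction.

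The proof is essentially routine, and the point worth highlighting (rather than a real obstacle) is conceptual: in $G_m$ Duplicator's answer depends on Spoiler's move, so one cannot pair up the summands $\pi_A\llb \psi(\bar a, a)\rrb$ with the summands $\pi_B\llb \psi(\bar b, b)\rrb$ in a coherent way, which is exactly why full idempotence was needed in \cref{thm-soundIdem}. The bijection rule of $BG_m$ forces Duplicator to commit to a single bijection $h$ in advance, and it is precisely this commitment that turns the inductive equality into a term-by-term matching that can be reindexed in any commutative semiring, without any idempotence assumption. No issue arises from possibly infinite universes either, since we only rely on the commutativity and associativity that are built into whatever interpretation of infinite sums and products is used in $\Semi$.
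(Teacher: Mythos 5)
Your proof is correct and rests on exactly the same key observation as the paper's: Duplicator's bijection $h$ lets one match the summands (resp.\ factors) of the two quantifier evaluations term by term and reindex $\sum_{a\in A}\pi_B\llb\psi(\bar b,h(a))\rrb$ as $\sum_{b\in B}\pi_B\llb\psi(\bar b,b)\rrb$. The only difference is presentational — you argue directly from Duplicator's winning strategy to $\equiv_m$, whereas the paper takes the contrapositive (a separating quantified formula gives Spoiler a winning move $a_h$ against any bijection $h$) — so this is essentially the same proof.
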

	
\begin{proof}
Suppose that $\varphi (\bar{x}) = Q x \psi (\bar{x},x)$ with $Q \in \{\exists, \forall\}$ and $\qr (\varphi(\bar{x})) = m$ separates $(\pi_A, \bar{a})$ from $(\pi_B, \bar{b})$.
For any bijection $h \colon A \to B$ Duplicator may choose in the game $BG_m (\pi_A, \bar{a}, \pi_B, \bar{b})$, there must be some $a_h \in A$ such that $\pi_A \llb \psi (\bar{a},a_h) \rrb \neq \pi_B \llb \psi (\bar{b}, h(a_h)) \rrb$, since otherwise
$\sum\nolimits_{a \in A} \pi_A \llb \psi (\bar{a},a) \rrb = \sum\nolimits_{a \in A} \pi_B \llb \psi (\bar{b},h(a)) \rrb = \sum\nolimits_{b \in B} \pi_B \llb \psi (\bar{b},b) \rrb$ and analogously for products. By choosing $a_h$, Spoiler wins the game by induction.
\end{proof}

While demanding a bijection from Duplicator does ensure the soundness of~$BG_m$,
it is often at the expense of completeness.
This is due to the fact that different multiplicities of a semiring value
in two interpretations do not necessarily imply separability by a first-order sentence.
In particular, this is the case for fully idempotent semirings,
on which the games~$G_m$ are already sound,
but the resulting issues concern other semirings as well.
We illustrate this on the semiring $\mathbb{W}[x, y]$
where the precise numbers of occurrences of single semiring values
may differ in their effect on the separability of the resulting interpretations,
as shown below.

\medskip

\begin{minipage}{0.9\linewidth}
	\centering
	\begin{tabular}{c||c|c}
		$A$ & $R$ & $\lnot R$ \\ \hline\hline
		$a_1$ & $x+y$ & $0$ \\
	\end{tabular}
	\hfill $\not\equiv_1$ \hfill
	\begin{tabular}{c||c|c}
		$B$ & $R$ & $\lnot R$ \\ \hline\hline
		$b_1$ & $x+y$ & $0$ \\ \hline
		$b_2$ & $x+y$ & $0$
	\end{tabular}
	\hfill $\equiv_1$ \hfill
	\begin{tabular}{c||c|c}
		$C$ & $R$ & $\lnot R$ \\ \hline\hline
		$c_1$ & $x+y$ & $0$ \\ \hline
		$c_2$ & $x+y$ & $0$ \\ \hline
		$c_3$ & $x+y$ & $0$
	\end{tabular}
\end{minipage}

\medskip

We observe that the semirings $\mathbb{W}[X]$,
while not being fully idempotent,
for instance due to $(x + y)(x + y) = x + xy + y$,
satisfy a weaker idempotence condition.

\begin{definition}
Let $n \in \N$.
A semiring $\Semi$ is $n$-\emph{idempotent}
if $\sum_{i \in I} s = \sum_{j \in J} s$ and
$\prod_{i \in I} s = \prod_{j \in J} s$ for all $s \in \Semi$
and all index sets $I, J$ such that $\vert I \vert \geq n$ and $\vert J \vert \geq n$.
\end{definition}

It can easily be verified that  $\WX$ is $|X|$-idempotent, as monomials can be seen as sets of variables, and multiplication corresponds to their union.
For such semirings, we want to replace the requirement for Duplicator to provide a bijection  by a weaker requirement that still
maintains soundness. For this, we use counting games, introduced  by Immermann and Lander \cite{ImmermanLan90}, which are equivalent to 
bijection games, but admit a parametrisation by the size of the sets that are picked in each turn.

\begin{definition}
	Let $n \in \N$. In each turn of the game $CG_m^n(\pi_A, \bar{a}, \pi_B, \bar{b})$, Spoiler chooses a set $X \subseteq A$ or $X \subseteq B$  with $|X| \leq n$ and Duplicator has to react with a subset $Y$ of the other universe such that $|X|=|Y|$. Afterwards, Spoiler picks some $y \in Y$, Duplicator must respond with some element $x \in X$ and the pair $(x, y)$, or $(y, x)$, is added to the current position. As before, the winning condition is given by local isomorphism.
\end{definition}

Note that the game $CG^1_m$ corresponds to the classical Ehrenfeucht--Fraïssé game $G_m$ and $1$-idempotence coincides with full idempotence. 
\Cref{thm-soundIdem} can be generalised as follows.
	
\begin{restatable}{theorem}{THMsoundCGm} \label{thm-soundCGm}
	The games $CG_m^n$ are sound for $\equiv_m$ exactly on $n$-idempotent semirings $\Semi$.
\end{restatable}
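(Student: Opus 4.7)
The plan is to adapt the two directions of the proof of \cref{thm-soundIdem} to the $n$-idempotent setting, replacing the argument that \emph{equal multisets of semantic values force equal sums and products} by the refined observation that, under $n$-idempotence, these sums and products depend only on the multiplicities truncated at $n$.

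For soundness $(\Leftarrow)$ I would argue by induction on the quantifier rank $m$ of a separating formula $\varphi(\bar{x})\in\fo(\tau)$ with $\pi_A\llb\varphi(\bar{a})\rrb\neq\pi_B\llb\varphi(\bar{b})\rrb$. The propositional steps are identical to those in \cref{thm-soundIdem}, so the substantial case is $\varphi=Qx\,\psi(\bar{x},x)$ with $Q\in\{\exists,\forall\}$. For each $s\in\Semi$ set $A_s:=\{a\in A\mid\pi_A\llb\psi(\bar{a},a)\rrb=s\}$ and $B_s$ analogously. The sum $\sum_{a\in A}\pi_A\llb\psi(\bar{a},a)\rrb$ decomposes as $\sum_s(|A_s|\cdot s)$, where $|A_s|\cdot s$ abbreviates adding $s$ to itself $|A_s|$ times; by $n$-idempotence this contribution depends on $|A_s|$ only through $\min(|A_s|,n)$, and an analogous statement holds for the product decomposition. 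Since the two sums (resp.\ products) differ, there must exist some $s^*\in\Semi$ with $\min(|A_{s^*}|,n)\neq\min(|B_{s^*}|,n)$. Assuming WLOG that $k:=\min(|A_{s^*}|,n)>\min(|B_{s^*}|,n)=:j$, we have $k\leq n$, $|A_{s^*}|\geq k$ and $|B_{s^*}|=j<k$. Spoiler's move is to pick a set $X\subseteq A_{s^*}$ with $|X|=k\leq n$; any Duplicator response $Y\subseteq B$ of size $k$ must contain at least $k-j\geq 1$ elements outside $B_{s^*}$, so Spoiler picks some $y\in Y\setminus B_{s^*}$. Then $\pi_A\llb\psi(\bar{a},x)\rrb=s^*\neq\pi_B\llb\psi(\bar{b},y)\rrb$ for every possible Duplicator pick $x\in X$, and since $\qr(\psi)\leq m-1$ the induction hypothesis finishes the argument.

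For the converse $(\Rightarrow)$, suppose $\Semi$ fails to be $n$-idempotent. The offending element $s\in\Semi$ must be nonzero in the additive case (since $0$ is trivially $n$-idempotent) and different from both $0$ and $1$ in the multiplicative case; moreover, non-constancy of the sequence $(k\cdot s)_{k\geq n}$ (or $(s^k)_{k\geq n}$) guarantees some $k\geq n$ with $k\cdot s\neq(k+1)\cdot s$ (or $s^k\neq s^{k+1}$). I would then exhibit the model-defining $\Semi$-interpretations over the monadic vocabulary $\{R\}$ with universes $A=\{a_1,\dots,a_k\}$ and $B=\{b_1,\dots,b_{k+1}\}$ and $\pi(Rx)=s$, $\pi(\lnot Rx)=0$ everywhere. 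Then either $\exists x\,Rx$ or $\forall x\,Rx$ separates $\pi_A$ from $\pi_B$, so $\pi_A\not\equiv_1\pi_B$. In $CG_1^n(\pi_A,\pi_B)$, however, every size-$\leq n$ set chosen by Spoiler admits an equinumerous response in the other universe (as $k\geq n$), and any single added pair is automatically a local isomorphism because all elements of either universe look identical under $\pi$. Duplicator therefore wins $CG_1^n$, witnessing unsoundness.

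The main obstacle is the soundness direction: one must faithfully translate the failure of $m$-equivalence into a \emph{truncated} multiplicity imbalance at some value $s^*$, and then confirm that Spoiler can exploit precisely this imbalance within the $n$-bounded set-choice rule of $CG_m^n$. The key numerical facts are that $k-j>0$ (so Duplicator is necessarily forced to include a wrong element in $Y$) and $k\leq n$ (so Spoiler's set complies with the game rules); both follow immediately from the min-truncation viewpoint. The converse direction is then the routine construction on a single unary predicate, mirroring the corresponding step in \cref{thm-soundIdem}.
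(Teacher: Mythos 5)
Your proposal is correct and takes essentially the same approach as the paper: the soundness direction isolates a value $s^*$ whose multiplicity, truncated at $n$, differs between the two universes (the paper phrases this as the multiplicities being unequal with at least one of them below $n$) and lets Spoiler play a set of size at most $n$ that witnesses the imbalance, while the converse uses the same monadic construction with two universes of different cardinalities, both at least $n$. The only cosmetic differences are that the paper's Spoiler picks his set (of size one more than the smaller class) inside the universe where $s^*$ occurs more often, and that the paper takes the universes to be the witnessing index sets $I, J$ themselves rather than consecutive finite sizes $k$ and $k+1$, which sidesteps the small extra argument needed to turn a possibly infinite witness of non-$n$-idempotence into a finite one.
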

	
\begin{proof}
	Suppose that $\Semi$ is $n$-idempotent and let $\varphi(\bar{x}) = Qx \psi (\bar{x},x)$ separate $(\pi_A, \bar{a})$ and $(\pi_B, \bar{b})$, i.e.
	$\sum_{a \in A} \pi_A \llb \psi (\bar{a},a) \rrb \neq \sum_{b \in B} \pi_B \llb \psi (\bar{b},b) \rrb$ or
	$\prod_{a \in A} \pi_A \llb \psi (\bar{a},a) \rrb \neq \prod_{b \in B} \pi_B \llb \psi (\bar{b},b) \rrb$.
	Due to associativity and $n$-idempotence, there must be some $s \in S$ such that
	\begin{align*}
		|\underbrace{\{a \in A \colon \pi_A \llb \psi(\bar{a}, a) \rrb = s\}}_{=: A_{\psi, \bar{a}}^s}| \neq |\underbrace{\{b \in B \colon \pi_B \llb \psi(\bar{b}, b) \rrb = s\}}_{=: B_{\psi, \bar{b}}^s}|,
	\end{align*}
	where $|A_{\psi, \bar{a}}^s| < n$ or $|B_{\psi, \bar{b}}^s| < n$. Assume w.l.o.g. that $|A_{\psi, \bar{a}}^s| < |B_{\psi, \bar{b}}^s|$.
	Spoiler wins the game $CG_{\qr(\varphi)}^n(\pi_A, \bar{a}, \pi_B, \bar{b})$ as follows.
	First, he chooses some $B' \subseteq B_{\psi, \bar{b}}^s$ with $|B'| = |A_{\psi, \bar{a}}^s|+1 \leq n$.
	For Duplicator's answer $A' \subseteq A$, there must be some $a \in A'$ such that $\pi_A \llb \psi (\bar{a}, a) \rrb \neq s$, since $|A'| = |B'| > |A_{\psi, \bar{a}}^s|$.
	Spoiler picks this element~$a$. Regardless of Duplicator's response $b \in B_{\psi, \bar{b}}^s$, it holds that $\pi_A \llb \psi (\bar{a}, a) \rrb \neq s = \pi_B \llb \psi (\bar{b}, b) \rrb$ and Spoiler wins the remaining subgame by induction.
	
	If $\Semi$ is not $n$-idempotent, there must be index sets $I,J$ with $|I|, |J| \geq n$ such that $\sum_{i \in I} s \neq \sum_{j \in J} s$ or $\prod_{i \in I} s \neq \prod_{j \in J} s$.
	Let $\pi_A$ and $\pi_B$ be $\Semi$-interpretations over~$\{R\}$ with universes $A \coloneqq \{a_i \colon i \in I\}$ and $B \coloneqq \{b_j \colon j \in J \}$ satisfying $\pi_A \llb R a \rrb = \pi_B \llb R b \rrb = s$ and $\pi_A \llb \lnot R a \rrb = \pi_B \llb \lnot R b \rrb = 0$ for $a \in A, b \in B$, where $R$ is a unary predicate. Every strategy for Duplicator in $CG_1^n (\pi_A, \pi_B)$ is winning, but $\pi_A$ and $\pi_B$ are separable by $\exists x Rx$ or $\forall x Rx$.
\end{proof}
	
\subsection{Completeness and incompleteness}
	
As opposed to a Boolean quantifier or a move in an Ehrenfeucht--Fraïssé game,
a quantifier in semiring semantics does not pick out a specific element of the universe.
Instead, it induces a sum or product over all elements.
As a consequence, completeness of the $m$-turn Ehrenfeucht--Fraïssé game, and thus also completeness of other variants of model comparison games, fail in general.
In particular, this applies to semirings on which elementary equivalence of finite interpretations does not imply isomorphism.
Indeed, on any pair of finite non-isomorphic semiring interpretations, Spoiler wins $G_m$ for sufficiently large $m$ by picking all elements in the larger universe.
A particular example, presented in \cite{GraedelMrk21}, of non-isomorphic but elementarily equivalent $\Semi$-interpretations $\pi_A^{s,t}$ and $\pi_B^{s,t}$
for arbitrary elements $s, t$ of a fully idempotent semiring $\Semi$ is the following:

\begin{center}
	$\pi_A^{s,t}:$ 
	\begin{tabular}{c||c|c|c|c}
		$A$ & $R_1$ & $R_2$ & $\lnot R_1$ & $\lnot R_2$ \\
		\hline
		\hline
		$a_1$ & $0$ & $t$ & $s$ & $0$ \\
		\hline
		$a_2$ & $s$ & $0$ & $0$ & $t$ \\
		\hline
		$a_3$ & $t$ & $s$ & $0$ & $0$ \\
		\hline
		$a_4$ & $0$ & $0$ & $t$ & $s$
	\end{tabular}
	\hspace{1cm}
	$\pi_B^{s, t}:$ 
	\begin{tabular}{c||c|c|c|c}
		$B$ & $R_1$ & $R_2$ & $\lnot R_1$ & $\lnot R_2$ \\
		\hline
		\hline
		$b_1$ & $t$ & $0$ & $0$ & $s$ \\
		\hline
		$b_2$ & $0$ & $s$ & $t$ & $0$ \\
		\hline
		$b_3$ & $s$ & $t$ & $0$ & $0$ \\
		\hline
		$b_4$ & $0$ & $0$ & $s$ & $t$ \\
	\end{tabular}
\end{center}
	
For any $s,t\in \Semi$, we have that  $\pi_A^{s,t}\equiv \pi_B^{s,t}$ \cite[Theorem 13]{GraedelMrk21}, but obviously,  Spoiler even wins the game 
$G_1 (\pi_A^{s,t}, \pi_B^{s, t})$ for distinct and non-zero values $s,t \in S$.
Thus, completeness of~$G_m$ for~$\equiv_m$ and full idempotence
are mutually exclusive on semirings with at least three elements,
while soundness requires full idempotence,
which entails the following result.
	
\begin{theorem} \label{thm-complIdem}
If, for all $m\in\N$, the game $G_m$ is sound and complete for $\equiv_m$ on $\Semi$,
then~$\Semi$ is isomorphic to $\mathbb{B}$.
\end{theorem}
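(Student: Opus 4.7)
The plan is to combine the two ingredients that the paper has already laid on the table: the characterisation of soundness by full idempotence (\cref{thm-soundIdem}) and the explicit pair of elementarily equivalent but locally distinguishable interpretations $\pi_A^{s,t}, \pi_B^{s,t}$ recalled just before the statement. Soundness of every $G_m$ for $\equiv_m$ immediately forces $\Semi$ to be fully idempotent. Completeness of $G_1$ for $\equiv_1$ will then be used to squeeze $\Semi$ down to a two-element semiring, after which a brief algebraic argument identifies $\Semi$ with $\Bool$.

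More concretely, I would proceed in three short steps. First, assume that every $G_m$ is sound and complete for $\equiv_m$ on $\Semi$. Applying the $(\Leftarrow)$ direction of \cref{thm-soundIdem} (or rather its contrapositive from the $(\Rightarrow)$ direction) gives full idempotence of $\Semi$. Second, suppose for contradiction that $|\Semi|\ge 3$. Since $0\neq 1$ in any semiring, there exists an element $t\in\Semi\setminus\{0,1\}$, so $s\coloneqq 1$ and this $t$ are distinct non-zero elements of $\Semi$. The interpretations $\pi_A^{s,t}$ and $\pi_B^{s,t}$ of the excerpt then satisfy $\pi_A^{s,t}\equiv\pi_B^{s,t}$ by \cite[Theorem 13]{GraedelMrk21}, whereas Spoiler trivially wins $G_1(\pi_A^{s,t},\pi_B^{s,t})$: any element Spoiler picks in $\pi_A^{s,t}$ (say $a_1$, whose literal profile is $(R_1{:}0,\,R_2{:}t,\,\neg R_1{:}s,\,\neg R_2{:}0)$) has no partner in $\pi_B^{s,t}$ with the same profile on the chosen literals, because the four profiles occurring in $\pi_B^{s,t}$ all differ from those in $\pi_A^{s,t}$ whenever $s\neq t$ and $s,t\neq 0$. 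This contradicts completeness of $G_1$ for $\equiv_1$, so $|\Semi|\le 2$.

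Third, I identify the unique two-element semiring. Since $0\neq 1$, we must have $\Semi=\{0,1\}$ as a set, and the monoid axioms force $0{+}0=0$, $0{+}1=1{+}0=1$, $0{\cdot}0=0{\cdot}1=1{\cdot}0=0$, $1{\cdot}1=1$. The only freedom is the value of $1+1$. If $1+1=0$ then $\Semi$ is the ring $\mathbb{Z}/2\mathbb{Z}$, which is excluded because we only consider naturally ordered semirings (rings are not naturally ordered). Hence $1+1=1$, which is exactly the Boolean semiring, giving $\Semi\cong\Bool$.

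The only place demanding real care is the second step: one must verify that the counterexample $\pi_A^{s,t},\pi_B^{s,t}$ is indeed well defined in any fully idempotent $\Semi$ with at least three elements (i.e.\ that distinct non-zero $s,t$ can actually be chosen) and that Spoiler's $G_1$-win persists. Both facts are immediate once $|\Semi|\ge 3$ is combined with $0\neq 1$, so no new machinery is required; the theorem reduces to assembling \cref{thm-soundIdem} and the preceding example, together with the elementary classification of two-element naturally ordered semirings.
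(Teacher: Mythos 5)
Your proposal is correct and follows essentially the same route as the paper: soundness of all $G_m$ forces full idempotence via \cref{thm-soundIdem}, the interpretations $\pi_A^{s,t},\pi_B^{s,t}$ with distinct non-zero $s,t$ then contradict completeness of $G_1$ whenever $|\Semi|\ge 3$, and the remaining two-element case is $\Bool$ (indeed, full idempotence already gives $1+1=1$ directly, so you do not even need the natural-order argument). The paper leaves the last classification step implicit, but your spelling it out changes nothing of substance.
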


Several further semirings, such as $\mathbb{L}, \WX, \mathbb{S}[X]$ or $\mathbb{B}[X]$,
admit pairs of finite interpretations that are non-isomorphic but elementarily equivalent,
which immediately disproves completeness of $G_m$ for $\equiv_m$
on those semirings (see \cref{sum-Gm-BGm}).
Moreover, even on semirings such as $\Trop$, $\N$ and $\N[X]$,
where it is known that elementary equivalence does coincide with isomorphism
on finite interpretations \cite{GraedelMrk21},
$G_m$ is not necessarily complete.
As a counterexample on the tropical semiring $\Trop = (\R^\infty_+ , \min, +, \infty, 0)$,
consider the following $\Trop$-interpretations.

\medskip

\begin{minipage}[c]{0.9\linewidth}
\centering
$\pi_A:\quad$
\begin{tabular}{c || c | c}
$A$ & $R$ & $\neg R$ \\ \hline\hline
$a_0$ & 0 & $\infty$ \\
$a_1$ & 1 & $\infty$ \\
$a_2$ & 1 & $\infty$ \\
\end{tabular}
$\quad\quad\quad\pi_B:\quad$
\begin{tabular}{c || c | c}
$B$ & $R$ & $\neg R$ \\ \hline\hline
$b_0$ & 0 & $\infty$ \\
$b_1$ & 2 & $\infty$ \\
\end{tabular}
\end{minipage}

\medskip

Clearly, Spoiler already wins $G_1(\pi_A, \pi_B)$,
but we can show that~$\pi_A \equiv_1 \pi_B$,
thus, the game~$G_1$ is incomplete for~$\equiv_1$ on~$\Trop$.
The $1$-equivalence immediately follows
from the following criterion.
	
\begin{proposition} \label{prop-V1equiv}
	Two $\Trop$-interpretations $\pi_A, \pi_B$ over vocabulary $\tau= \{R\}$, consisting of a single unary relation symbol, are $1$-equivalent if
	\begin{bracketenumerate}
		\item $\pi_A (\neg R a) = \pi_B (\neg R b) = \infty$ for all $a \in A$ and $b \in B$,
		\item $\min_{a \in A} \pi_A (R a) = \min_{b \in B} \pi_B (R b)$ and
		\item $\sum_{a \in A} \pi_A (R a) = \sum_{b \in B} \pi_B (R b)$.
	\end{bracketenumerate}
\end{proposition}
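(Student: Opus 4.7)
The plan is to reduce $1$-equivalence of $\pi_A$ and $\pi_B$ to the evaluation of atomic quantified sub-sentences. Since $\tau = \{R\}$ contains a single unary predicate and no function symbols or constants, every $\fo(\tau)$-sentence of quantifier rank at most $1$ in negation normal form is a Boolean combination of sentences of the form $\exists x \, \alpha(x)$ or $\forall x \, \alpha(x)$, where $\alpha(x)$ is quantifier-free with the single free variable $x$. Because $\vee$ and $\wedge$ are interpreted in $\Trop$ as $\min$ and real addition, and both operations respect equality of semiring values, it suffices to show that every such atomic quantified sub-sentence takes the same value in $\pi_A$ as in $\pi_B$.

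First I would analyse $\pi\llb \alpha(a) \rrb$ as a function of $\pi(Ra)$. Writing $\alpha$ in DNF as $\bigvee_k \gamma_k$, each conjunction $\gamma_k$ is built from literals in $\{Rx, \neg Rx, x = x, x \neq x\}$. By (i), $\pi\llb \neg Ra \rrb = \infty$, and $\pi\llb a \neq a \rrb = \infty$ by definition, so any $\gamma_k$ containing a negated or an inequality literal evaluates to $\infty$ (the tropical zero) and is irrelevant to the outer $\min$. Each surviving $\gamma_k$ is a product of $Rx$- and $x = x$-literals, so if it contains exactly $n_k$ copies of $Rx$ it evaluates to $n_k \cdot \pi(Ra)$. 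Since $\pi(Ra) \geq 0$, I obtain
\[
\pi\llb \alpha(a) \rrb \;=\; \min_k \, n_k \cdot \pi(Ra) \;=\; c_\alpha \cdot \pi(Ra),
\]
where $c_\alpha \coloneqq \min_k n_k \in \N$ depends only on $\alpha$, with the convention $c_\alpha = \infty$ if no conjunct survives (in which case the value is identically $\infty$).

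It then follows that $\pi\llb \exists x \, \alpha(x) \rrb = \min_a \pi\llb \alpha(a) \rrb = c_\alpha \cdot \min_a \pi(Ra)$ and $\pi\llb \forall x \, \alpha(x) \rrb = \sum_a \pi\llb \alpha(a) \rrb = c_\alpha \cdot \sum_a \pi(Ra)$, with the right-hand sides interpreted in the natural way when $c_\alpha \in \{0, \infty\}$. Hypotheses (ii) and (iii) then guarantee that both values coincide for $\pi_A$ and $\pi_B$, and the Boolean combination argument from the first paragraph completes the proof.

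The main point requiring care is the initial structural reduction: it uses crucially that $\tau$ has no constants or function symbols, so that a sentence of quantifier rank at most $1$ cannot mix quantifiers and literals in more elaborate ways. Granted this, the remaining work is a routine case analysis on the DNF of $\alpha$, which collapses under (i) because every negated literal evaluates to the tropical zero and hence kills the conjunct containing it.
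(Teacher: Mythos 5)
Your proof is correct and follows essentially the same route as the paper: the paper reduces every quantifier-free $\psi(x)$ by induction to a representative in $\{x=x,\, x\neq x\} \cup \{(Rx)^n : n \geq 1\}$, which is exactly your classification of $\pi\llb \alpha(a)\rrb$ as $c_\alpha \cdot \pi(Ra)$ with $c_\alpha \in \N \cup \{\infty\}$, obtained via DNF instead of structural induction. Both arguments then conclude identically from hypotheses (ii) and (iii) applied to the $\exists$- and $\forall$-values.
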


\begin{proof}
	Let $\Phi := \{x=x, x \neq x\} \cup \{(Rx)^n \colon n \in \mathbb{N}_{>0}\}$ where $(Rx)^n$ denotes the $n$-fold conjunction of  $Rx$.
	It follows by induction that for all quantifier-free formulae $\psi(x)$, there is some $\psi^*(x) \in \Phi$ such that $\pi_A \llb \psi (a) \rrb = \pi_A \llb \psi^* (a) \rrb$ for all $a \in A$ and $\pi_B \llb \psi (b) \rrb = \pi_B \llb \psi^* (b) \rrb$ for all $b \in B$.
	If there is a sentence of quantifier rank $1$ which separates $\pi_A$ and $\pi_B$, then there must be a separating sentence of the form $Q x \psi (x)$ with $Q \in \{\exists, \forall \}$.
	But this implies that~$\pi_A$ and~$\pi_B$ can be separated by some formula $Q x \psi^* (x)$ where $\psi^* (x) \in \Phi$. This yields a contradiction, since 
	$\min_{a \in A} \pi_A \llb \psi^* (a) \rrb = \min_{b \in B} \pi_B \llb \psi^* (b) \rrb$ and $\sum_{a \in A} \pi_A \llb \psi^* (a) \rrb = \sum_{b \in B} \pi_B \llb \psi^* (b) \rrb$,
	so a separating formula of quantifier rank $1$ cannot exist, i.e., $\pi_A \equiv_1 \pi_B$.
\end{proof}

On the other side, in contrast to the classical $m$-turn Ehrenfeucht--Fraïssé game, there are semirings other than $\Bool$, namely $\N$ and $\N[X]$,
on which the $m$-turn bijection game is both sound and complete.
While soundness of $BG_m$ holds on any semiring, we prove completeness on $\N$ by refining the characteristic 
sentences from $\cite{GraedelMrk21}$ and making use of the following combinatorial lemma.

\begin{lemma} \label{lem-complN}
	For each $\ell, d \in \N$, there is a sufficiently large $e \in \N$ such that for all $(r_1, \dots r_{\ell'}), (s_1, \dots , s_{\ell'}) \in \N^{\ell'}$ with $\ell' < \ell$ and $r_i, s_i < d$ for $1 \leq i \leq \ell'$,
	\[
	\sum\limits_{i = 1}^{\ell'} r_i^e = \sum\limits_{i=1}^{\ell'} s_i^e
	\]
	implies that there is a permutation $\sigma \in S_{\ell'}$ such that $r_i = s_{\sigma(i)}$ for all $1 \leq i \leq  \ell'$.
\end{lemma}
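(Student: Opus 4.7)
\emph{Proof plan.} The strategy is to argue by contradiction, using the elementary fact that for large $e$ a single $M^e$ dominates any bounded number of strictly smaller powers $v^e$. Denote by $m_r(v)$ and $m_s(v)$ the multiplicities of $v \in \{0, 1, \dots, d-1\}$ in the two tuples, suppose the two multisets $\{\!\{r_1, \dots, r_{\ell'}\}\!\}$ and $\{\!\{s_1, \dots, s_{\ell'}\}\!\}$ differ, and let $M$ be the greatest value on which $m_r$ and $m_s$ disagree.

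First I would observe that necessarily $M \geq 1$: since $0^e = 0$ and both tuples have the same length $\ell'$, the multiplicities at $0$ are forced to agree once all other multiplicities do. Assuming without loss of generality that $m_r(M) > m_s(M)$, the difference of the two power sums can be estimated by
\[
\sum_{i=1}^{\ell'} r_i^e - \sum_{i=1}^{\ell'} s_i^e = (m_r(M) - m_s(M))\, M^e + \sum_{v < M} (m_r(v) - m_s(v))\, v^e \;\geq\; M^e - 2\ell\,(M-1)^e,
\]
using $m_r(M) - m_s(M) \geq 1$, the bound $|m_r(v) - m_s(v)|\, v^e \leq (m_r(v) + m_s(v))\,(M-1)^e$ for $v < M$, and $\sum_v m_r(v) + \sum_v m_s(v) = 2\ell' < 2\ell$.

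It therefore suffices to choose $e$ so that $M^e > 2\ell\,(M-1)^e$ for every $M \in \{1, \dots, d-1\}$, as this contradicts the hypothesised equality of the two power sums. For $M = 1$ the inequality reduces to $1 > 0$, and for $M \geq 2$ it is equivalent to $(M/(M-1))^e > 2\ell$; since $M/(M-1)$ is decreasing in $M$, the tightest instance is $M = d-1$, which is satisfied as soon as $e > \log(2\ell)/\log((d-1)/(d-2))$. The cases $d \leq 2$ are trivial, since then $\sum_i r_i^e$ merely counts the $r_i$ equal to $1$, which together with $\ell'$ already determines the multiset.

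The argument is a routine exponential-dominance estimate, so I do not foresee a serious obstacle; the only subtlety is ruling out the case $M = 0$, which is handled cleanly by the shared length of the two tuples.
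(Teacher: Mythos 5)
Your proof is correct and is the natural exponential-dominance argument for this lemma: isolating the largest value $M$ on which the multiplicities disagree, ruling out $M=0$ via the equal tuple lengths, and choosing $e$ with $(\frac{d-1}{d-2})^e > 2\ell$ so that $M^e$ outweighs the at most $2\ell'$ remaining terms, each bounded by $(M-1)^e$. The edge cases ($M=1$, $d\leq 2$) are handled cleanly, and the chosen $e$ depends only on $\ell$ and $d$ as required, so the argument is complete.
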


In the classical  Ehrenfeucht--Fraïssé theory, characteristic sentences~$\chi^m_\AA$ for a structure~$\AA$ (with quantifier rank $m$) 
formulate the existence of $m$-back-and-forth systems in the sense that $\BB\models\chi^m_\AA$ if, and only if,
such a system exists for $\AA$ and $\BB$. In the context of semiring interpretations, there are somewhat similar, but slightly more involved constructions
to describe certain semiring interpretations up to $m$-equivalence.
Our goal here is to construct characteristic formulae $\chi_{\bar{c}}^m(x_1, \dots, x_n)$ of quantifier rank $m$ depending on a pair of constants $\bar{c} = (c_1, c_2) \in \N^2$ such that $\pi_A \llb \chi_{\bar{c}}^m(\bar{a}) \rrb = \pi_B \llb \chi_{\bar{c}}^m(\bar{b}) \rrb$ ensures that Duplicator wins $BG_m (\pi_A , \bar{a}, \pi_B ,  \bar{b})$ if $\pi_A$ and $\pi_B$ only include valuations smaller than $c_1$ and if their universes are of cardinality less than $c_2$.
For this purpose, we use auxiliary formulae $\vartheta_{\bar{c}}^m(x_1, \dots, x_n)$, which are meant to ensure Duplicator's victory in $BG_m$ under the assumption that $|A| = |B|$.
Accordingly, $\vartheta_{\bar{c}}^0(\bar{x})$ shall characterise the winning condition of the bijection game.
To implement this, we fix an enumeration $L_1 (\bar{x}), \dots, L_k (\bar{x})$ of the $\tau$-literals in $\lit_n(\tau)$ and represent the valuations of the $\tau$-literals as digits in a number system. Choosing the radix large enough ensures that the single valuations coincide in $\pi_A$ and $\pi_B$. 
$$
\vartheta_{\bar{c}}^0 (\bar{x})  := \bigvee\limits_{1 \leq i \leq k} \underbrace{(L_i (\bar{x}) \vee \dots \vee L_i (\bar{x}))}_{c_1^{i-1} \text{ times}}
$$
Based on $\vartheta^{m-1}_{\bar{c}} (\bar{x},x)$, we define $\vartheta^{m}_{\bar{c}} (\bar{x})$ such that $\pi_A \llb \vartheta^m_{\bar{c}}(\bar{a}) \rrb = \pi_B \llb \vartheta^m_{ \bar{c}}(\bar{b}) \rrb$ ensures that $(\pi_A \llb \vartheta_{ \bar{c}}^{m-1}(\bar{a},a) \rrb)_{a \in A \setminus \{a_1, \dots, a_n\}}$ and $(\pi_B \llb \vartheta_{\bar{c}}^{m-1}(\bar{b},b) \rrb)_{b \in B \setminus \{b_1, \dots, b_n\}}$ only differ by some permutation according to \cref{lem-complN}. Let
$$
\vartheta_{\bar{c}}^m (\bar{x}) := \exists x ((\bigwedge\limits_{1 \leq i \leq n} x \neq x_i \wedge \vartheta_{ \bar{c}}^{m-1}(\bar{x},x))^{e_{m-1}}),
$$
where $e_{m-1}$ is chosen according to \cref{lem-complN} with respect to $\ell:=\max(c_2,4)$ and $d_{m-1}$ which is inductively defined by $d_0 := c_1^{k+1}$ and $d_{i+1} := c_2 \cdot d_{i}^{e_{i}}$ for $i>0$. Note that this definition ensures that $d_m > \pi_A \llb \vartheta^m_{\bar{c}} (\bar{a}) \rrb$ and $d_m > \pi_B \llb \vartheta^m_{\bar{c}} (\bar{b}) \rrb$ for all $m \in \N$.

In order to drop the assumption $|A|=|B|$ the formulae $\vartheta^m_{\bar{c}}(\bar{x})$ rely on, since \cref{lem-complN} presumes tuples of the same length, we additionally encode in $\chi^m_{\bar{c}} (\bar{x})$ that the universes must be of the same cardinality. Having defined the sequence $(e_{m})_{m \in \N}$ of exponents with respect to tuples of length smaller than $\max(c_2, 4)$ allows us to reuse them for this purpose. Let $\chi_{\bar{c}}^0 (\bar{x}) := \vartheta_{\bar{c}}^0 (\bar{x})$ and, for $m>0$,
$$
\chi_{\bar{c}}^m (\bar{x}) := (\exists x (x=x) \vee \exists x (x=x) \vee \vartheta^m_{\bar{c}} (\bar{x}) )^{e_{m}}.
$$

\begin{theorem} \label{thm-complN}
	Let $c_1,c_2,m \in \N$. For all (finite) $\N$-interpretations $\pi_A$, $\pi_B$ and $\bar{a} \in A^n$, $\bar{b} \in B^n$ such that $\max (\img (\pi_A) \cup \img (\pi_B)) < c_1$ and $|A| < c_2$, $|B| < c_2$, the following are equivalent:
	\begin{bracketenumerate}
		\item Duplicator wins $BG_m(\pi_A, \bar{a}, \pi_B, \bar{b})$;
		\item $\pi_A \llb \chi^m_{\bar{c}} (\bar{a}) \rrb = \pi_B \llb \chi^m_{\bar{c}} (\bar{b}) \rrb$;
		\item $(\pi_A, \bar{a}) \equiv_m (\pi_B, \bar{b})$.
	\end{bracketenumerate}
\end{theorem}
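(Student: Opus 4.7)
The direction (1) $\Rightarrow$ (3) is exactly the soundness of $BG_m$ established in Theorem~\ref{thm-soundBGm}, and (3) $\Rightarrow$ (2) is immediate because $\chi^m_{\bar c}$ has quantifier rank at most $m$. The substantive implication is (2) $\Rightarrow$ (1), which I would prove by induction on $m$, applying Lemma~\ref{lem-complN} twice at each inductive step: once to the outer formula $\chi^m_{\bar c}$ and once to the inner formula $\vartheta^m_{\bar c}$.

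For the base case $m=0$ we have $\chi^0_{\bar c}=\vartheta^0_{\bar c}$, and this formula evaluates in $\N$ to $\sum_{i=1}^{k}c_1^{i-1}\pi(L_i(\bar a))$, i.e.\@ the integer whose base-$c_1$ digits are the literal valuations. Since every such valuation is strictly smaller than $c_1$ by hypothesis, the representation is unique, so $\pi_A\llb\chi^0_{\bar c}(\bar a)\rrb=\pi_B\llb\chi^0_{\bar c}(\bar b)\rrb$ forces $\pi_A(L_i(\bar a))=\pi_B(L_i(\bar b))$ for every literal. Consequently $\bar a\mapsto\bar b$ is a local isomorphism and Duplicator wins $BG_0$ trivially; the three statements coincide at level $0$.

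For the inductive step I would first unpack $\pi\llb\vartheta^m_{\bar c}(\bar a)\rrb = \sum_{a\in A\setminus\{a_1,\dots,a_n\}}(\pi\llb\vartheta^{m-1}_{\bar c}(\bar a,a)\rrb)^{e_{m-1}}$, verify inductively that every summand is bounded by $d_{m-1}$ (hence the whole value is bounded by $d_m=c_2\cdot d_{m-1}^{e_{m-1}}$), and note that the number of summands is at most $|A|<c_2\le\max(c_2,4)$. Applying Lemma~\ref{lem-complN} to the outer formula $\chi^m_{\bar c}$---whose value is designed, via the doubled disjunct $\exists x(x=x)\vee\exists x(x=x)$ together with $\vartheta^m_{\bar c}$, to encode the pair $(|A|,\pi\llb\vartheta^m_{\bar c}\rrb)$ as a short tuple of length at most $3<\max(c_2,4)$---extracts both $|A|=|B|$ and $\pi_A\llb\vartheta^m_{\bar c}(\bar a)\rrb=\pi_B\llb\vartheta^m_{\bar c}(\bar b)\rrb$ from the assumed equality of $\chi^m_{\bar c}$-values. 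A second invocation of Lemma~\ref{lem-complN}, this time on the inner sum of $e_{m-1}$-th powers, then produces a bijection $h_0\colon A\setminus\{a_1,\dots,a_n\}\to B\setminus\{b_1,\dots,b_n\}$ with $\pi_A\llb\vartheta^{m-1}_{\bar c}(\bar a,a)\rrb=\pi_B\llb\vartheta^{m-1}_{\bar c}(\bar b,h_0(a))\rrb$ for every $a$ in its domain.

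Duplicator's strategy in $BG_m(\pi_A,\bar a,\pi_B,\bar b)$ is then to play the bijection $h\colon A\to B$ extending $h_0$ by $h(a_i)=b_i$. Whatever element $a\in A$ Spoiler picks next, the inductive hypothesis reduces the task of winning $BG_{m-1}(\pi_A,\bar a,a,\pi_B,\bar b,h(a))$ to verifying the single equality $\pi_A\llb\chi^{m-1}_{\bar c}(\bar a,a)\rrb=\pi_B\llb\chi^{m-1}_{\bar c}(\bar b,h(a))\rrb$. This in turn follows from $|A|=|B|$ (already established) together with the pointwise equality $\pi_A\llb\vartheta^{m-1}_{\bar c}(\bar a,a)\rrb=\pi_B\llb\vartheta^{m-1}_{\bar c}(\bar b,h(a))\rrb$ (by construction of $h$), since these two data determine the evaluation of $\chi^{m-1}_{\bar c}$. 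The main obstacle I anticipate is the careful bookkeeping required to guarantee, at every level, that the bound $d_m$ actually dominates $\pi\llb\vartheta^m_{\bar c}\rrb$ and that the parameter $\ell=\max(c_2,4)$ is large enough for both applications of Lemma~\ref{lem-complN}; in particular, one must verify that the multiplicity-two appearance of $\exists x(x=x)$ in $\chi^m_{\bar c}$ is what forces the element $|A|$ to be uniquely identified within the associated multiset, so that $|A|=|B|$ is genuinely extracted rather than conflated with $\pi\llb\vartheta^m_{\bar c}\rrb$.
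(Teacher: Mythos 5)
Your proposal is correct and follows essentially the same route as the paper's proof: soundness of $BG_m$ for (1) $\Rightarrow$ (3), quantifier rank for (3) $\Rightarrow$ (2), and for (2) $\Rightarrow$ (1) an induction on $m$ with two applications of \cref{lem-complN} per step (first to the triple encoded by $\chi^m_{\bar c}$, where the doubled $\exists x(x=x)$ disjunct indeed forces $|A|=|B|$ and $\pi_A\llb\vartheta^m_{\bar c}(\bar a)\rrb=\pi_B\llb\vartheta^m_{\bar c}(\bar b)\rrb$ to be extracted separately, then to the inner sum of $e_{m-1}$-th powers to obtain the bijection). The bookkeeping you flag is exactly what the paper's choice of $d_m$ and $\ell=\max(c_2,4)$ handles, and your extension of $h_0$ to the full bijection matches the paper's strategy for Duplicator.
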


\begin{proof}
	By \cref{thm-soundBGm}, it holds that $(1) \Rightarrow (3)$. Since $\qr (\chi^m_{\bar{c}}(\bar{x}))=m$ by definition, we can immediately infer implication $(3) \Rightarrow (2)$.
	It remains to show $(2) \Rightarrow (1)$, which we prove by induction on $m$ for all tuples $\bar{a} \in A^n$ and $\bar{b} \in B^n$ simultaneously.
	To this end, let $\pi_A, \pi_B$ and~$\bar{c}=(c_1,c_2)$ be given as above.
	
	In the base case $m=0$, as $\pi_A (L_i (\bar{a})) < c_1$ and $\pi_B (L_i (\bar{b})) < c_1$ for all $1 \leq i \leq k$ by assumption, $\pi_A \llb \chi^0_{\bar{c}} (\bar{a}) \rrb = \pi_B \llb \chi^0_{\bar{c}} (\bar{b}) \rrb$ implies that $\pi_A (L_i (\bar{a}))= \pi_B (L_i(\bar{b}))$ for all $1 \leq i \leq k$, which is why Duplicator wins the game $BG_0(\pi_A, \bar{a}, \pi_B, \bar{b})$.
	
	Now, for the induction step, we assume $m>0$. Due to the definition of $d_m$, it holds that $d_m \geq c_2 > \max(|A|,|B|) = \max(\pi_A \llb \exists x (x=x) \rrb, \pi_B \llb \exists x (x=x) \rrb)$.
	Further, ${d_{m} > \max(\pi_A \llb \vartheta^m_{\bar{c}} (\bar{a}) \rrb, \pi_B \llb \vartheta^m_{\bar{c}} (\bar{b}) \rrb)}$. Since $e_{m}$ was chosen with respect to $\ell= \max(c_2,4)> 3$ and $d_{m}$, we infer by \cref{lem-complN} that the triples
	$(\pi_A \llb \exists x (x=x) \rrb, \pi_A \llb \exists x (x=x) \rrb, \pi_A \llb \vartheta^{m}_{\bar{c}} (\bar{a}) \rrb)$ and
	$(\pi_B \llb \exists x (x=x) \rrb, \pi_B \llb \exists x (x=x) \rrb, \pi_B \llb \vartheta^{m}_{\bar{c}} (\bar{b}) \rrb)
	$
	only differ by some permutation $\sigma$. Suppose that $\sigma$ is not the identity mapping.
	Then, $\pi_A \llb \exists x (x=x) \rrb = \pi_B \llb \exists x (x=x) \rrb = \pi_B \llb \vartheta^{m}_{\bar{c}} (\bar{b}) \rrb$ and $\pi_A \llb \vartheta^{m}_{\bar{c}} (\bar{a}) \rrb = \pi_B \llb \exists x (x=x) \rrb$ follows, which also implies that the tuples have to coincide.
	Hence, we can conclude that $\pi_A \llb \exists x (x=x) \rrb = \pi_B \llb \exists x (x=x) \rrb$, which is equivalent to $|A|=|B|$, and $\pi_A \llb \vartheta^m_{\bar{c}} (\bar{a}) \rrb =\pi_B \llb \vartheta^m_{\bar{c}} (\bar{b}) \rrb$.
	By definition, the latter implies that
	$$
	\sum\limits_{a \in A\setminus \{a_1, \dots, a_n\}} \pi_A \llb \vartheta^{m-1}_{\bar{c}}(\bar{a},a) \rrb^{e_{m-1}} = \sum\limits_{b \in B\setminus \{b_1, \dots, b_n\}} \pi_B \llb \vartheta^{m-1}_{\bar{c}}(\bar{b},b) \rrb^{e_{m-1}}.
	$$
	Since $\ell > |A|-n=|B|-n$ and $d_{m-1} > \max (\pi_A \llb \vartheta^{m-1}_{\bar{c}}(\bar{a},a) \rrb,\pi_B \llb \vartheta^{m-1}_{\bar{c}}(\bar{b},b) \rrb)$ for all elements $a \in A \setminus \{a_1, \dots, a_n\}$ and $b \in B \setminus \{b_1, \dots, b_n \}$, by \cref{lem-complN} there is a bijection $h \colon A \setminus \{a_1, \dots, a_n\} \to B \setminus \{b_1, \dots, b_n\}$ such that 
	$
	\pi_A \llb \vartheta^{m-1}_{\bar{c}}(\bar{a},a) \rrb =  \pi_B \llb \vartheta^{m-1}_{\bar{c}}(\bar{b}, h(a)) \rrb
	$
	for all $a \in A \setminus \{a_1, \dots, a_n\}$, which implies $\pi_A \llb \chi^{m-1}_{\bar{c}}(\bar{a},a) \rrb =  \pi_B \llb \chi^{m-1}_{\bar{c}}(\bar{b}, h(a)) \rrb$ with $|A|=|B|$. 
	Duplicator can win the game $BG_m (\pi_A, \bar{a}, \pi_B, \bar{b})$ as follows: She provides the bijection $h'$ which extends $h$ to the domain $A$ by $h'(a_i) = b_i $ for all $1\leq i \leq n$.
	W.l.o.g. we can assume that Spoiler picks some $a \in A \setminus \{a_1, \dots, a_n\}$.
	We obtain for the updated position $(\bar{a},a, \bar{b},h(a))$ that $\pi_A \llb \chi^{m-1}_{\bar{c}}(\bar{a},a) \rrb =  \pi_B \llb \chi^{m-1}_{ \bar{c}}(\bar{b}, h(a)) \rrb$ must hold.
	By induction hypothesis, Duplicator has a strategy to win the remaining subgame $BG_{m-1}(\pi_A, \bar{a}, \pi_B, h(a))$. 
\end{proof}

By invoking the following lemma, we can transfer our result concerning $\N$-interpretations to the semiring of polynomials $\N [X]$.

\begin{lemma} \label{lem-complNX}
	Let $\N[X](c,e) \subseteq \N[X]$ denote the set of polynomials with coefficients less than $c$ and exponents smaller than $e$. There is a variable assignment $\alpha \colon X \to \N$ inducing a homomorphism $h \colon \N[X] \to \N$ whose restriction $h\vert_{\N[X](c,e)}$ is a bijection from $\N[X](c,e)$ to~$\{0,\dots, c^{e^{|X|}}-1\}$.
\end{lemma}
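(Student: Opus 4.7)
The plan is to use a tower of positional number systems. Write $X=\{x_1,\dots,x_k\}$ with $k=|X|$ and define the assignment
\[
\alpha(x_i) := c^{\,e^{\,i-1}} \quad\text{for } 1\le i\le k,
\]
extending it uniquely to a semiring homomorphism $h\colon\N[X]\to\N$. I would then check the conclusion in three short steps.

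First, a counting check: the monomials with each exponent strictly smaller than $e$ are in bijection with $\{0,\dots,e-1\}^k$, so there are exactly $e^k$ of them, and a polynomial in $\N[X](c,e)$ is determined by a coefficient in $\{0,\dots,c-1\}$ on each such monomial. Hence $|\N[X](c,e)|=c^{e^k}=c^{e^{|X|}}$, matching the cardinality of the target set.

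Second, I would evaluate a monomial. For $\vec d=(d_1,\dots,d_k)$ with $0\le d_i<e$,
\[
h(x_1^{d_1}\cdots x_k^{d_k})=\prod_{i=1}^k c^{\,d_i\,e^{\,i-1}}=c^{\,j(\vec d)}, \qquad j(\vec d):=\sum_{i=1}^k d_i\,e^{\,i-1}.
\]
Because $0\le d_i<e$, the map $\vec d\mapsto j(\vec d)$ is the standard base-$e$ encoding and is a bijection between $\{0,\dots,e-1\}^k$ and $\{0,1,\dots,e^k-1\}$. Therefore distinct monomials in the support of $\N[X](c,e)$ are sent by $h$ to distinct powers $c^{\,j}$ with $j\in\{0,\dots,e^k-1\}$.

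Third, I would evaluate a general polynomial $p=\sum_{\vec d} a_{\vec d}\, x_1^{d_1}\cdots x_k^{d_k}\in\N[X](c,e)$, writing $a_j:=a_{\vec d}$ whenever $j=j(\vec d)$. Then
\[
h(p)=\sum_{j=0}^{e^k-1} a_j\,c^{\,j},\qquad 0\le a_j<c.
\]
This is exactly a base-$c$ expansion of length $e^k$, and by uniqueness of base-$c$ representations it defines a bijection $\N[X](c,e)\to\{0,1,\dots,c^{e^k}-1\}$, which is the claim. There is no real obstacle: the only subtlety is choosing the exponents of $\alpha$ in the right order — inner base $e$ (to separate monomials) and outer base $c$ (to separate coefficient patterns), so that the two bijections compose cleanly.
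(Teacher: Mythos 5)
Your construction is correct: the assignment $\alpha(x_i) = c^{\,e^{\,i-1}}$ turns monomials with exponents below $e$ into the distinct powers $c^{\,j}$ for $j \in \{0,\dots,e^{|X|}-1\}$ via base-$e$ encoding of exponent vectors, and then a polynomial with coefficients below $c$ evaluates to its base-$c$ expansion, which is the intended nested positional-number-system argument (the paper leaves this lemma unproved but uses the same radix idea in the definition of $\vartheta^0_{\bar c}$). Nothing is missing; the uniqueness of base-$c$ representations gives both injectivity and surjectivity onto $\{0,\dots,c^{e^{|X|}}-1\}$ in one step.
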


\begin{theorem} \label{thm-complNX}
	Let $c,e,c_2,m \in \N$. For all finite $\N[X]$-interpretations $\pi_A$, $\pi_B$ and $\bar{a} \in A^n$, $\bar{b} \in B^n$ such that $\img (\pi_A) \cup \img (\pi_B) \subseteq \N[X](c,e)$ and $|A| < c_2$, $|B| < c_2$, the following are equivalent:
	\begin{bracketenumerate}
		\item Duplicator wins $BG_m(\pi_A, \bar{a}, \pi_B, \bar{b})$;
		\item $\pi_A \llb \chi^m_{(c_1, c_2)} (\bar{a}) \rrb = \pi_B \llb \chi^m_{(c_1, c_2)} (\bar{b}) \rrb$ where $c_1:= c^{e^{|X|}}$;
		\item $(\pi_A, \bar{a}) \equiv_m (\pi_B, \bar{b})$.
	\end{bracketenumerate}
\end{theorem}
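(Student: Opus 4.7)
The plan is to reduce the three-way equivalence to Theorem~\ref{thm-complN} via the evaluation homomorphism $h \colon \N[X] \to \N$ provided by Lemma~\ref{lem-complNX}. The implications $(1) \Rightarrow (3)$ and $(3) \Rightarrow (2)$ are immediate: the first is the soundness statement of Theorem~\ref{thm-soundBGm} applied to $\N[X]$, while the second follows since $\chi^m_{(c_1,c_2)}(\bar{x})$ has quantifier rank $m$ by construction.

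For the substantive direction $(2) \Rightarrow (1)$, I would invoke Lemma~\ref{lem-complNX} to obtain a variable assignment inducing a homomorphism $h \colon \N[X] \to \N$ whose restriction to $\N[X](c,e)$ is a bijection onto $\{0,\dots, c_1-1\}$, where $c_1 = c^{e^{|X|}}$. Since $\img(\pi_A) \cup \img(\pi_B) \subseteq \N[X](c,e)$, the compositions $h \circ \pi_A$ and $h \circ \pi_B$ are finite $\N$-interpretations with $\max(\img(h \circ \pi_A) \cup \img(h \circ \pi_B)) < c_1$ and universes of cardinality less than $c_2$, so the hypotheses of Theorem~\ref{thm-complN} are satisfied for the pair $\bar{c} = (c_1, c_2)$. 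Applying the Fundamental Property to assumption~(2) yields
\[
(h \circ \pi_A) \llb \chi^m_{\bar{c}}(\bar{a}) \rrb = h(\pi_A \llb \chi^m_{\bar{c}}(\bar{a}) \rrb) = h(\pi_B \llb \chi^m_{\bar{c}}(\bar{b}) \rrb) = (h \circ \pi_B) \llb \chi^m_{\bar{c}}(\bar{b}) \rrb,
\]
so Theorem~\ref{thm-complN} provides a winning strategy for Duplicator in $BG_m(h \circ \pi_A, \bar{a}, h \circ \pi_B, \bar{b})$.

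The only subtle step, which I expect to be the main obstacle, is to transfer this strategy back to the game played on the original $\N[X]$-interpretations. The rules of $BG_m$ prescribe only choices of bijections between universes and of elements thereof, and are therefore independent of the semiring in which the interpretations take values; only the winning condition refers to the interpretations, demanding that the final pairing $\sigma \colon \bar{a} \mapsto \bar{b}$ be a local isomorphism, i.e.\@ $\pi_A(L) = \pi_B(\sigma(L))$ for every literal $L$ over $\bar{a}$. Because $h$ is injective on $\N[X](c,e)$ and every such value lies in this set, this is equivalent to $(h \circ \pi_A)(L) = (h \circ \pi_B)(\sigma(L))$. Consequently, Duplicator's winning strategy in the transformed $\N$-game is simultaneously winning in the original $\N[X]$-game, which establishes $(2) \Rightarrow (1)$ and completes the proof.
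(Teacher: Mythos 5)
Your proposal is correct and follows essentially the same route as the paper: it dispatches $(1)\Rightarrow(3)\Rightarrow(2)$ exactly as in \cref{thm-complN}, and for $(2)\Rightarrow(1)$ it uses the homomorphism from \cref{lem-complNX} together with the Fundamental Property to invoke \cref{thm-complN} on $h \circ \pi_A$ and $h \circ \pi_B$, then pulls the winning strategy back via the injectivity of $h\vert_{\N[X](c,e)}$. The observation that the game moves are semiring-independent and only the local-isomorphism winning condition needs the injectivity argument is precisely the paper's final step.
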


\begin{proof}
	Following the same reasoning as in \cref{thm-complN}, it suffices to prove $(2) \Rightarrow (1)$. Let $h \colon \N[X] \to \N$ be a homomorphism according to \cref{lem-complNX}.
	Due to the fundamental property, $(2)$ implies ${(h \circ \pi_A) \llb \chi^m_{\bar{c}} (\bar{a}) \rrb =  (h \circ \pi_B) \llb \chi^m_{\bar{c}} (\bar{b}) \rrb}$.
	Further, it must hold that $\max (\img(h \circ \pi_A) \cup \img (h \circ \pi_A)) < c_1$, because of \cref{lem-complNX} and the assumption that $\img (\pi_A) \cup \img (\pi_B) \subseteq \N[X](c,e)$.
	By \cref{thm-complN}, this implies that Duplicator has a winning strategy for $BG_m(h \circ \pi_A, \bar{a}, h \circ \pi_B, \bar{b})$.
	The strategy ensures that any reachable position $(\bar{a}, a_{n+1}, \dots, a_{n+m}, \bar{b}, b_{n+1}, \dots, b_{n+m})$ induces a local isomorphism between $\pi_A$ and $\pi_B$, i.e., for each $L (\bar{x}) \in \lit_{n+m} (\tau)$ it holds that $h \circ \pi_A (L(a_1, \dots, a_{n+m})) = h \circ \pi_B (L(b_1, \dots, b_{n+m})))$.
	Due to injectivity of $h\vert_{\N[X](c,e)}$, we can derive $\pi_A (L(a_1, \dots, a_{n+m})) = \pi_B (L(b_1, \dots, b_{n+m}))$ for all $L(\bar{x}) \in \lit_{n+m} (\tau)$. Hence, the strategy must also be winning for Duplicator in the game $BG_m(\pi_A, \bar{a}, \pi_B, \bar{b})$.
\end{proof}

\begin{corollary} \label{thm-soundcomplNNX}
	For every $m\in\N$, the bijection game $BG_m$ is sound and complete for $\equiv_m$ on $\N$ and $\N[X]$.
\end{corollary}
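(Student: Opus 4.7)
The plan is to combine the preceding results: soundness is already furnished in full generality by \cref{thm-soundBGm}, which applies on every semiring and in particular on $\N$ and $\N[X]$, so only the completeness direction requires argument. The key observation is that $\N$ and $\N[X]$ do not support infinite sums or products, hence every $\N$- or $\N[X]$-interpretation must have a finite universe and consequently a finite image. This lets us fit any given pair of $m$-equivalent interpretations into the hypotheses of \cref{thm-complN} respectively \cref{thm-complNX}.

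For the completeness part on $\N$, given $\pi_A \equiv_m \pi_B$ with universes $A$ and $B$, I would first choose $c_2 \in \N$ with $c_2 > \max(|A|,|B|)$ and $c_1 \in \N$ with $c_1 > \max(\img(\pi_A) \cup \img(\pi_B))$; both maxima exist by finiteness of the universes. \Cref{thm-complN} then yields the equivalence of its three statements for these values, and the implication $(3) \Rightarrow (1)$ directly provides the required winning strategy for Duplicator in $BG_m(\pi_A, \pi_B)$. For $\N[X]$ the reasoning is analogous: the finite set $\img(\pi_A) \cup \img(\pi_B)$ of polynomials lies in $\N[X](c,e)$ for suitably chosen $c, e \in \N$ (taking $c$ strictly larger than every coefficient and $e$ strictly larger than every exponent appearing in the image), which places us in the situation of \cref{thm-complNX}, and again the implication from $m$-equivalence to a winning strategy for Duplicator gives the conclusion.

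The main technical work has already been discharged in the proofs of \cref{thm-complN} and \cref{thm-complNX}, so at the level of the corollary there is no substantive obstacle; the only point worth flagging is that the constants $c_1, c_2$ (respectively $c, e, c_2$) are allowed to depend on the particular pair of interpretations under consideration. The corollary asserts soundness and completeness for each fixed $m$ and for every pair $\pi_A, \pi_B$ separately, not uniformly in a single choice of constants, so this dependence causes no issue.
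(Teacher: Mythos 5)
Your proposal is correct and matches the intended derivation: the paper states this corollary without a separate proof precisely because it follows from \cref{thm-soundBGm} for soundness and from the implication $(3) \Rightarrow (1)$ of \cref{thm-complN} and \cref{thm-complNX} for completeness, after choosing the constants $c_1, c_2$ (respectively $c, e, c_2$) large enough for the given finite pair of interpretations. Your remark that the constants may depend on the pair, and that this is harmless because the corollary quantifies over pairs individually, is exactly the right point to flag.
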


\Cref{fig:landscape} summarises the results from this section
by subdividing pairs of $\Semi$-interpretations
based on $m$-equivalence and the outcomes of
model comparison games~$G_m$, $CG_m^n$~and~$BG_m$.
Soundness of the Ehrenfeucht--Fraïssé game~$G_m$ on~$\Semi$ holds
whenever the lower right gray quadrant is empty,
and completeness holds whenever the upper left gray quadrant is empty.
Results for counting games and bijection games,
where the winning condition for Duplicator is more restrictive in general,
are illustrated in the lower left corner.

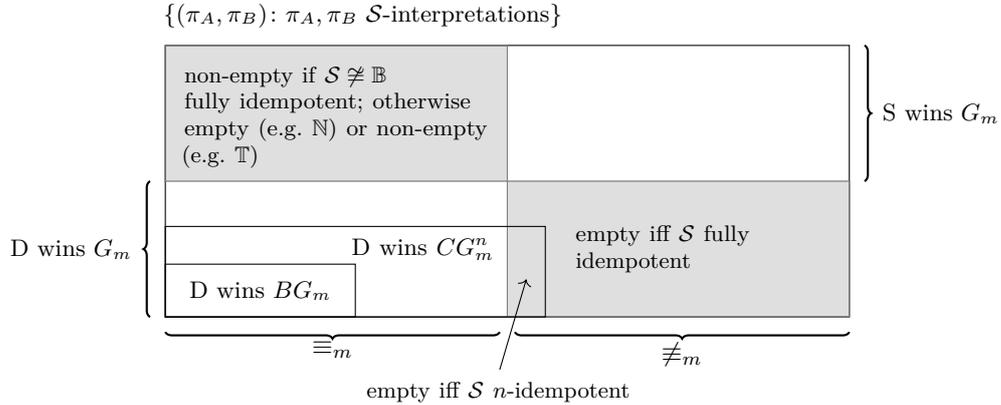
\begin{figure}[h]
	\centering
	\begin{tikzpicture}
	\tikzstyle{bracket}=[decorate, decoration={brace}, thick]
	
	\draw (0,0) rectangle ++(9,3.6);
	\draw[fill=gray, opacity=.25] (4.5,0) rectangle ++(4.5,1.8);
	\draw[fill=gray, opacity=.25] (0,1.8) rectangle ++(4.5,1.8);
	\draw[draw=gray] (4.5,0) rectangle ++(4.5,1.8);
	\draw[draw=gray] (0,1.8) rectangle ++(4.5,1.8);
	
	\draw (0,0) rectangle node {\small D wins $BG_m$} ++(2.5,.7);
	\draw (0,0) rectangle ++(5,1.2);
	\node[text width=2.5cm, execute at begin node=\setlength{\baselineskip}{1em}] at (3.7,.9) {\small{D wins $CG_m^n$}};
	
	\draw[bracket] (-.2,0) -- node[left] {\small D wins $G_m$ \ } (-.2,1.8);
	\draw[bracket] (9.2,3.6) -- node[right] {\small \ S wins $G_m$} (9.2,1.8);
	\draw[bracket] (4.4,-.2) -- node[below] {$\equiv_m$} (0,-.2);
	\draw[bracket] (9,-.2) -- node[below] {$\not\equiv_m$} (4.6,-.2);
	
	\node[text width=2.6cm, execute at begin node=\setlength{\baselineskip}{1em}] at (6.7,.9) {\footnotesize{empty iff $\Semi$ fully idempotent}};
	\node[text width=4.5cm, execute at begin node=\setlength{\baselineskip}{1em} ] at (2.5,2.65) {\footnotesize{non-empty if $\Semi \not\cong \Bool$ \\ fully idempotent; otherwise empty (e.g.\ $\N$) or non-empty (e.g.\ $\Trop$)}};
	
	\node at (2.6,4) {\small $\{(\pi_A, \pi_B) \colon \pi_A, \pi_B \ \Semi\text{-interpretations}\}$};
	
	\draw[->] (4.4,-0.7) to (4.75,0.5);
	\node[text width=3.5cm, execute at begin node=\setlength{\baselineskip}{1em}] at (4.4,-1) {\footnotesize{empty iff $\Semi$ $n$-idempotent}};
\end{tikzpicture}
	\caption{Landscape of pairs of $\Semi$-interpretations,
	subdivided by $m$-equivalence and outcomes of model comparison games.}
	\label{fig:landscape}
\end{figure}
	
	\section{Characterising elementary equivalence} \label{sec:ElEquiv}
	
The Ehrenfeucht--Fraïssé theorem also provides a game-theoretic characterisation of elementary equivalence via the game $G(\AA, \BB)$ where Spoiler chooses the number of turns at the beginning of each play. We now discuss soundness and completeness of $G$ for $\equiv$ on semirings. For classical structures, soundness and completeness of $G$ for $\equiv$ is equivalent to soundness and completeness of $G_m$ for $\equiv_m$, for all $m$, 
but this is in general not the case on semirings.
	
For the study of the game $G$, interpretations on infinite universes are of particular interest.
This especially applies to soundness, which is trivial in the finite case,
since a winning strategy for Duplicator already implies isomorphism on finite interpretations.
Semiring semantics for infinite interpretations requires sum and product operators on infinite families $(s_i)_{i \in I} \subseteq S$ of semiring elements.
There are certain semirings such as $\N, \N[X], \Bool [X]$ and $\Sorb[X]$ which do not admit a reasonable definition of such infinitary operations,
and we thus have to restrict ourselves to finite universes.
Otherwise, we make use of the natural order and interpret infinite sums according to $\sum_{i \in I} s_i := \sup \{\sum_{i \in I'} s_i | I' \subseteq I \text{ finite}\}$.
For infinitary products we distinguish the case of absorptive semirings, where multiplication is decreasing and we thus
interpret the product as the \emph{infimum} of the finite subproducts, and the cases, such as $\Ninf$ or $\WX$, where multiplication is increasing and we
replace infima by suprema.
Previous results such as the soundness of $G_m$ on fully idempotent semirings straightforwardly extend to infinite interpretations by transferring semiring properties such as full idempotence to the infinitary operations.
	
\subsection{Soundness of the game \texorpdfstring{$G$}{G}}  \label{subsec:soundG}

Soundness of~$G$ for~$\equiv$ holds whenever Spoiler wins $G(\pi_A, \pi_B)$
for all first-order separable interpretations~$\pi_A$ and~$\pi_B$.
Thus, the following question is essential:
Given $\pi_A$, $\pi_B$ and a separating sentence~$\psi$,
is the required number of turns for Spoiler to win $G(\pi_A, \pi_B)$ bounded in advance?
On fully idempotent semirings,
$m \coloneqq \qr(\psi)$ turns suffice for Spoiler to win $G(\pi_A, \pi_B)$ since~$G_m$ is sound for~$\equiv_m$,
which immediately implies soundness of~$G$ on all fully idempotent semirings.
However, full idempotence is not a necessary condition,
soundness of~$G$ is still preserved
on many semirings that admit a weaker bound than~$m$.
For instance, on any $n$-idempotent semiring for some $n \in \N$,
$n \cdot m$ turns suffice to ensure Spoiler's victory.
	
\begin{proposition} \label{thm-soundWXG}
	Let $\Semi$ be $n$-idempotent for some $n \in \N$. For any $\Semi$-interpretations $\pi_A$ and $\pi_B$ it holds that $\pi_A \equiv_m \pi_B$ if Duplicator wins the game $G_{nm} (\pi_A, \pi_B)$. In particular, the game $G$ is sound for $\equiv$ on $\Semi$.
\end{proposition}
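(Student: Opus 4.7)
The plan is to strengthen the statement so that induction can carry parameters: for every $m \in \N$, every pair of $\Semi$-interpretations $\pi_A, \pi_B$ and every tuples $\bar{a} \in A^n, \bar{b} \in B^n$, a separating formula $\varphi(\bar{x})$ with $\qr(\varphi) \leq m$ furnishes Spoiler with a winning strategy in $G_{nm}(\pi_A, \bar{a}, \pi_B, \bar{b})$. I would then induct on $m$. The base case $m = 0$ is immediate: a quantifier-free separator shows that $\bar{a} \mapsto \bar{b}$ fails to be a local isomorphism, so Spoiler already wins $G_0$ without making any move.

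For the inductive step, assume $\varphi(\bar{x}) = Q x\, \psi(\bar{x}, x)$ with $Q \in \{\exists, \forall\}$ separates and has quantifier rank at most $m$. Arguing exactly as in the proof of \cref{thm-soundCGm}, the inequality of the sums or of the products together with $n$-idempotence of $\Semi$ forces the existence of some $s \in \Semi$ for which the sets $A^s_{\psi,\bar{a}} = \{a \in A : \pi_A \llb \psi(\bar{a}, a) \rrb = s\}$ and $B^s_{\psi, \bar{b}} = \{b \in B : \pi_B \llb \psi(\bar{b}, b) \rrb = s\}$ have different cardinalities, with at least one of them strictly smaller than $n$. Assume w.l.o.g.\ that $|A^s_{\psi, \bar{a}}| < |B^s_{\psi, \bar{b}}|$ and put $k := |A^s_{\psi, \bar{a}}| + 1 \leq n$. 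Spoiler's strategy is to spend $k$ consecutive turns picking pairwise distinct elements $b_1, \dots, b_k \in B^s_{\psi, \bar{b}}$. Duplicator must respond with pairwise distinct $a_1, \dots, a_k \in A$, since otherwise the current position already fails to be a local isomorphism and Spoiler wins trivially. By pigeonhole, some $a_\ell \notin A^s_{\psi, \bar{a}}$, so that $\pi_A \llb \psi(\bar{a}, a_\ell) \rrb \neq s = \pi_B \llb \psi(\bar{b}, b_\ell) \rrb$. Reading $\psi$ as a formula on the extended tuples (via the projection onto the $\ell$-th new variable) yields a separator of quantifier rank at most $m - 1$ at the position $(\bar{a}, a_1, \dots, a_k, \bar{b}, b_1, \dots, b_k)$, and the induction hypothesis provides Spoiler with a winning strategy in the remaining $nm - k \geq n(m-1)$ turns.

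For the ``in particular'' clause, if Duplicator wins $G(\pi_A, \pi_B)$, then she wins $G_k(\pi_A, \pi_B)$ for every $k$, including $k = nm$ for every $m$. Hence no $\fo$-formula of any quantifier rank separates $\pi_A$ from $\pi_B$, so $\pi_A \equiv \pi_B$. The main subtlety is the infinite case: the extraction of the witness $s$ from an inequality of possibly infinitary sums or products requires that the stabilisation of $ns$ under further accumulation carries over to the infinitary operations defined via suprema and infima. This follows from $n$-idempotence applied to arbitrary truncations and the conventions for infinite operations laid out at the beginning of \cref{subsec:soundG}; otherwise the combinatorial heart of the argument --- the ``$k \leq n$ probes suffice to expose a discrepancy'' step --- is entirely insensitive to cardinality.
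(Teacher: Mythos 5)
Your proposal is correct and follows essentially the same route as the paper: the paper proves this by observing that Spoiler can simulate his winning strategy for $CG^n_m$ (from \cref{thm-soundCGm}) in $G_{nm}$ by drawing the at most $n$ elements of each chosen set one by one, which is exactly the strategy you spell out (including the pigeonhole step and the $nm-k\ge n(m-1)$ bookkeeping). Your explicit remark about transferring $n$-idempotence to the infinitary operations in the infinite case is a detail the paper handles only by a general comment at the start of \cref{subsec:soundG}, but it is the same argument.
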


This follows from soundness
of $n$-counting games as stated in \cref{thm-soundCGm}.
If $\pi_A \not\equiv_m \pi_B$, Spoiler wins $G_{nm}(\pi_A, \pi_B)$
by adapting his winning strategy for $CG_m^n(\pi_A, \pi_B)$:
Instead of drawing $n$-element sets, he draws~$n$ elements one by one.
Note that the bound $n \cdot m$ does not depend on~$\pi_A$ and~$\pi_B$ at all,
but only on the quantifier rank $m$ and the semiring.

However, other semirings, such as $\N^{\infty}$,
may not admit an inherent bound $t(m) \in \N$
such that  a winning strategy of Duplicator for
$G_{t(m)}(\pi_A, \pi_B)$ always implies $\pi_A \equiv_m \pi_B$.
To demonstrate this, consider a pair of sets
($\N^{\infty}$-interpretations with empty vocabulary)
with~$t(m)$ and $t(m) + 1$ elements, respectively.
Clearly, Duplicator wins on those sets for up to $t(m)$ turns,
but the sentence $\exists x (x = x)$ with quantifier rank~$1$ suffices to separate them.

In order to prove that the game~$G$ is still sound for~$\equiv$ on $\N^{\infty}$,
it is crucial to observe that two separable interpretations~$\pi_A, \pi_B$
with $\pi_A \llb \psi \rrb \neq \pi_B \llb \psi \rrb$
admit a parameter~$k$ that induces an upper bound
on the number of moves required by Spoiler to win $G(\pi_A, \pi_B)$.
On~$\mathbb{N}^{\infty}$, this parameter is easily obtained by observing
that $\pi_A \llb \psi \rrb$ or $\pi_B \llb \psi \rrb$ is finite.
	
Two first-order separable $\N^{\infty}$-interpretations
$\pi_A, \pi_B$ admit a separating sentence~$\psi$.
Hence, soundness of~$G$ on~$\N^{\infty}$
follows directly from \cref{thm-soundNinfG}
with $k \coloneqq \min \{\pi_A \llb \psi \rrb, \pi_B \llb \psi \rrb\} + 1$.

\begin{restatable}{theorem}{THMsoundNinfG} \label{thm-soundNinfG}
	Let $\pi_A$ and $\pi_B$ be $\mathbb{N}^\infty$-interpretations with elements
	$\bar{a} \in A^n$, $\bar{b} \in B^n$ and $k \geq 1$. If there is a separating formula $\varphi(\bar{x})$ with $\qr (\varphi(\bar{x})) \leq m$ such that $\pi_A \llb \varphi(\bar{a}) \rrb < k$ or $\pi_B \llb \varphi(\bar{b}) \rrb < k$, then Spoiler wins $G_{k m}(\pi_A, \bar{a}, \pi_B, \bar{b})$.
\end{restatable}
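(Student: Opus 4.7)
The plan is to induct on the quantifier rank $m$, proving the statement simultaneously for all $n \in \N$, all tuples $\bar a, \bar b$, all $\N^\infty$-interpretations, and all $k \geq 1$. In the base case $m = 0$ the separating formula is quantifier-free, and since any local isomorphism preserves the $\pi$-values of every quantifier-free formula, the hypothesis $\pi_A \llb \varphi(\bar a) \rrb \neq \pi_B \llb \varphi(\bar b) \rrb$ forces $\bar a \mapsto \bar b$ not to be a local isomorphism; Spoiler has already won $G_0$. For the induction step, write $\varphi(\bar x) = Qx\, \psi(\bar x, x)$ with $\qr(\psi) \leq m - 1$, and WLOG assume $p := \pi_A \llb \varphi(\bar a) \rrb < k$. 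Spoiler's strategy will spend at most $k$ of his $km$ moves to reach a subposition where the induction hypothesis guarantees a win in $k(m-1)$ further moves; the remaining $km - k \geq k(m-1)$ moves suffice, since a non-local isomorphism forced at any stage cannot be repaired by later additions to the tuples.

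In the existential case, the support $M_A := \{a \in A : \pi_A \llb \psi(\bar a, a) \rrb > 0\}$ has size at most $p < k$, because every nonzero element of $\N^\infty$ is at least $1$. Spoiler picks every element of $M_A$ in turn; let $b_1, \dots, b_{|M_A|}$ be Duplicator's responses. If these responses are not injective, equality preservation fails and Duplicator loses immediately. If some $(a_i, b_i)$ has mismatched $\psi$-values, the induction hypothesis applied at the extended position with formula $\psi$ (and bound $k$, since $\pi_A \llb \psi(\bar a, a_i) \rrb \leq p < k$) closes the game. Otherwise, the matched $\psi$-values sum to $p$ on the $B$-side, forcing $\pi_B \llb \varphi(\bar b) \rrb > p$ (a smaller total would contradict that $p$ already lies within $\sum_B$ via injective matched responses). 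Hence some $b^* \in B \setminus \{b_1, \dots, b_{|M_A|}\}$ has $\pi_B \llb \psi(\bar b, b^*) \rrb > 0$; Spoiler picks $b^*$, for a total of at most $|M_A| + 1 \leq k$ moves. Any response inside $M_A$ contradicts functionality (since the associated $b_i$ differs from $b^*$), and any response outside $M_A$ has $A$-value $0$ mismatching the positive $B$-value, so the induction hypothesis applies with bound $k$.

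In the universal case, $p = \prod_{a \in A} \pi_A \llb \psi(\bar a, a) \rrb$. If $p = 0$, pick any $a_0$ with $\pi_A \llb \psi(\bar a, a_0) \rrb = 0$; every $B$-value is then $\geq 1$, and one move invokes the induction hypothesis. If $p \geq 1$, every $A$-value is $\geq 1$, and the set $N_A := \{a : \pi_A \llb \psi(\bar a, a) \rrb \geq 2\}$ has size at most $\log_2 p < k$. Spoiler picks all of $N_A$; as before, non-injective responses lose immediately and value mismatches trigger the induction hypothesis. In the remaining subcase the matched product on the $B$-side equals $p$, so the discrepancy $\pi_B \llb \varphi(\bar b) \rrb \neq p$ must live outside $\{b_1, \dots, b_{|N_A|}\}$. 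The subcase $0 < \pi_B \llb \varphi(\bar b) \rrb < p$ is impossible, because in $\N^\infty$ products of factors $\geq 1$ are monotone, yielding $\prod_B \geq \prod_{\{b_i\}} = p$. Otherwise Spoiler locates a $b^* \notin \{b_i\}$ whose $\psi$-value is either $0$ (when $\pi_B \llb \varphi(\bar b) \rrb = 0$) or at least $2$ (when $\pi_B \llb \varphi(\bar b) \rrb > p$, possibly $\infty$), and picks it; the response is either in $N_A$ (functionality fails) or outside $N_A$, yielding $A$-value $1$ against a $B$-value in $\{0\} \cup \{2, \dots, \infty\}$, and the induction hypothesis applies with bound $k$.

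The main obstacle I anticipate is the universal case, where the multiplicative structure compels the sharper bound $|N_A| \leq \log_2 p$ rather than $|N_A| \leq p$, and a subtle interplay between $0$-, $1$-, finite-$\geq 2$-, and $\infty$-valued elements on both sides has to be closed off. The asymmetry provided by the hypothesis $p < k$ on the $A$-side, together with monotonicity of finite products of positive integers and the fact that $\infty$-valued elements can only appear on the $B$-side, is what keeps Spoiler's per-quantifier budget within $k$ moves and drives every subcase to either an immediate Duplicator loss or a valid application of the induction hypothesis.
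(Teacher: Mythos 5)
Your treatment of the two quantifier cases is correct and follows essentially the same strategy as the paper's proof: the hypothesis $\pi_A \llb \varphi(\bar a) \rrb < k$ bounds the number of "relevant" elements ($|M_A| \le p$, resp. $|N_A| \le \log_2 p$), Spoiler exhausts them in at most $k$ moves, and either a mismatch or the pigeonhole on the remaining sum/product hands him a position where the induction hypothesis (with the same bound $k$, since subvalues cannot exceed $p$) finishes in $k(m-1)$ further moves. In fact your universal case is worked out in more detail than the paper, which only remarks that it is "treated similarly" to the existential one using that multiplication is increasing away from $0$; your split into the sub-cases $p=0$, $\pi_B\llb\varphi(\bar b)\rrb=0$, and $\pi_B\llb\varphi(\bar b)\rrb>p$ is exactly the right bookkeeping.

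There is, however, a gap in the induction scheme itself. You induct on the quantifier rank $m$ and, in the induction step, assume $\varphi(\bar x) = Qx\,\psi(\bar x, x)$. But a formula of quantifier rank $m \ge 1$ need not begin with a quantifier: $\varphi = \exists x\,\psi_1 \vee \exists x\,\psi_2$ has quantifier rank $m$ and is covered neither by your base case (it is not quantifier-free) nor by your step (it is not of the form $Qx\,\psi$), and you cannot apply the induction hypothesis to its disjuncts since they have the same quantifier rank. The paper avoids this by doing a structural induction on $\varphi$ with explicit cases for $\vee$ and $\wedge$: since addition is increasing in $\Ninf$, $\pi_A\llb\varphi_1\vee\varphi_2(\bar a)\rrb < k$ forces both $\pi_A\llb\varphi_i(\bar a)\rrb < k$, and one of the two disjuncts must separate; the conjunction case is analogous but needs a word about zero factors (if $\pi_A\llb\varphi_1(\bar a)\rrb = 0$ then $\varphi_1$ itself separates with value $0 < k$, since the product on the $B$-side is nonzero; otherwise every factor is bounded by the product and hence by $k$). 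Your argument is repaired by switching to structural induction and inserting these two routine cases; as written, the case analysis is not exhaustive.
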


\begin{proof}
	We proceed by induction on $\varphi(\bar{x})$.
	If $\varphi(\bar{x})$ is a literal the claim holds trivially.
	
	Let $\varphi(\bar{x}) = \varphi_1(\bar{x}) \vee \varphi_2 (\bar{x})$. W.l.o.g. let $\pi_A \llb \varphi(\bar{a}) \rrb < k$.
	Since addition is increasing in $\mathbb{N}^\infty$, we have that $\pi_A \llb \varphi_1(\bar{a}) \rrb < k$ and $\pi_A \llb \varphi_2(\bar{a}) \rrb < k$. Moreover, $\pi_A \llb \varphi (\bar{a}) \rrb \neq \pi_B \llb \varphi (\bar{b}) \rrb$ implies $\pi_A \llb \varphi_i (\bar{a}) \rrb \neq \pi_B \llb \varphi_i (\bar{b}) \rrb$ for some $i \in \{1,2\}$. By induction hypothesis, it follows that Spoiler wins $G_{k  m}(\pi_A, \bar{a}, \pi_B, \bar{b})$. 
	The case $\varphi(\bar{x}) = \varphi_1(\bar{x}) \wedge \varphi_2 (\bar{x})$ is analogous.
	
	Now, consider $\varphi(\bar{x}) = \exists y \psi(\bar{x}, y)$ and
	suppose w.l.o.g. that $\pi_A \llb \varphi(\bar{a}) \rrb < \pi_B \llb \varphi(\bar{b}) \rrb$. 
	Let $A' := \{ a \in A \colon \pi_A \llb \psi(\bar{a},a) \rrb > 0 \}$. 
	Clearly, $|A'| < k$, since $\pi_A \llb \varphi(\bar{a}) \rrb < k$ by assumption. 
	In the game $G_{k m}(\pi_A, \bar{a}, \pi_B, \bar{b})$, Spoiler successively draws all elements $a \in A'$.
	If Duplicator manages to find for each $a \in A'$ a unique duplicate $b \in B$ such that $\pi_A \llb \psi (\bar{a}, a) \rrb = \pi_B \llb \psi (\bar{b}, b) \rrb$, then there must be an $(|A'|+1)$-th element in $b \in B$ with $\pi_B \llb \psi (\bar{b}, b) \rrb > 0$, because $\pi_A \llb \varphi(\bar{a}) \rrb < \pi_B \llb \varphi(\bar{b}) \rrb$.
	Hence, if Duplicator was able to duplicate all previous choices, Spoiler additionally chooses such an element $b \in B$ afterwards.
	In any case, after at most~$k$ turns, a pair~$(a, b)$ was picked such that
	$\pi_A \llb \psi(\bar{a}, a) \rrb \neq \pi_B \llb \psi(\bar{b}, b) \rrb$.
	Since $\pi_A \llb \varphi(\bar{a}) \rrb < k$ by assumption, it holds that $\pi_A \llb \psi(\bar{a},a) \rrb < k$ for all $a \in A$ and the induction hypothesis can be applied to $\psi(\bar{x}, x)$ with instantiations $(\bar{a}, a)$ and $(\bar{b}, b)$. We obtain that Spoiler wins the game $G_{k(m-1)}(\pi_A, \bar{a}, a, \pi_B, \bar{b}, b)$,
	hence, he wins the remaining subgame.
	The case for universally quantified formulae $\varphi(\bar{x}) = \forall x \psi(\bar{x}, y)$ is treated similarly with slight modifications, since multiplication is also increasing on~$\N^{\infty}$ (if we exclude~$0$).
\end{proof}

It turns out that a similar approach
is applicable to the semiring $\Sinf[X]$,
which extends the semiring $\mathbb{S}[X]$
of absorptive polynomials to allow infinite exponents (and thus infinite products),
albeit the derivation of a suitable parameter is more involved.
Recall that a monomial~$m$ \emph{absorbs} a monomial~$m'$
if the exponents satisfy $m(x) \le m'(x)$ for all $x \in X$
and that absorptive polynomials only retain absorption-dominant monomials.
We say that a monomial~$m$ \emph{separates} polynomials~$p$ and~$q$
if $m \in p$ and~$m$ is not absorbed by any monomial from~$q$.

These concepts can be extenuated to any subset $Y \subseteq X$: 
$m$ $Y$-absorbs~$m'$ iff $m(x) \le m'(x)$ for $x \in Y$,
and it is $Y$-separating for~$p$ and~$q$
if it is contained in one of the polynomials
but not $Y$-absorbed by any of the monomials from the other polynomial.
Finally, we can parametrise monomials~$m$
by adding their exponents $e_Y(m) \coloneqq \sum_{x \in Y} m(x)$
for all the variables $x \in Y$.
Now, we can extract a finite parameter
from any pair of distinct polynomials~$p, q$ as follows.

\begin{restatable}{lemma}{LEMsoundSinfG} \label{lem-soundSinfG}
For any two distinct polynomials $p, q \in \Sinf[X]$,
there is a set $Y \subseteq X$ and a $Y$-separating monomial~$m$
such that the parameter~$e_Y(m)$ is finite.
\end{restatable}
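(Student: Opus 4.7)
The plan is to first extract a separating monomial $m$ by exploiting the antisymmetry of the natural order on the absorptive semiring $\Sinf[X]$, and then to define $Y$ as a tailored set of witnesses keeping both $|Y|$ and every individual exponent $m(x)$ for $x \in Y$ controlled.

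\emph{Finding the separating monomial.} In any absorptive semiring, the natural order on polynomials satisfies $p \leq q$ iff $p + q = q$, which is in turn equivalent to the statement that every monomial occurring in $p$ is absorbed by some monomial occurring in $q$ (any non-absorbed monomial of $p$ would have to survive as a summand of $p+q \neq q$). Since $p \neq q$ and the natural order is antisymmetric, at least one of the inequalities $p \leq q$ or $q \leq p$ must fail. After swapping $p$ and $q$ if necessary --- the notion of a $Y$-separating monomial is symmetric in the two polynomials --- I may assume that there is some $m \in p$ that is not absorbed by any $m' \in q$; this $m$ will be the required $Y$-separating monomial.

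\emph{Constructing $Y$.} The failure of absorption yields, for each $m' \in q$, some $x_{m'} \in X$ with $m'(x_{m'}) > m(x_{m'})$. A crucial observation is that strict inequality in $\N^\infty$ forces $m(x_{m'}) \in \N$, because no element of $\N^\infty$ exceeds $\infty$. Setting $Y := \{x_{m'} : m' \in q\}$, every $m' \in q$ has its witness $x_{m'} \in Y$ certifying that $m'$ does not $Y$-absorb $m$, so $m$ is $Y$-separating. Since each $\Sinf[X]$-polynomial is a finite sum of monomials, $|q|$ is finite and hence so is $|Y|$; combined with the finiteness of each exponent $m(x_{m'})$, this yields $e_Y(m) = \sum_{x \in Y} m(x) < \infty$.

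\emph{Main difficulty.} The proof hinges on the joint use of two independent finiteness facts: $|Y|$ is finite because polynomials have finitely many summands, and every $m(x_{m'})$ is finite because strict inequality in $\N^\infty$ precludes the value $\infty$. Each fact alone is insufficient --- an infinite $Y$ with only finite entries could still have $e_Y(m) = \infty$, while a single infinite entry would ruin $e_Y(m)$ even for $|Y|=1$ --- so the selection of one witness per monomial of $q$ is essential. Crude alternatives, such as taking $Y$ to be the full support of $m$ or the set of all variables where $m$ disagrees with some $m' \in q$, can easily produce an infinite sum and would not deliver the desired bound.
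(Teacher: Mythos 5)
Your proof is correct, and it follows the same overall strategy as the paper's: extract a monomial $m$ of one polynomial that is not absorbed by any monomial of the other (both arguments get this from $p\neq q$ via mutual absorption / antisymmetry of the natural order), and then shrink to a variable set $Y$ on which all exponents of $m$ are finite while non-absorption is preserved. The only genuine difference is the choice of $Y$. The paper takes $Y=\{x\in X : m(x)\neq\infty\}$, so that finiteness of $e_Y(m)$ follows from the finiteness of $X$ alone, and non-$Y$-absorption follows by contraposition: outside $Y$ the exponent of $m$ is $\infty$, so a $Y$-absorber would be a full absorber. You instead pick one witness variable $x_{m'}$ per monomial $m'$ of $q$, getting finiteness of each $m(x_{m'})$ from the strictness of the inequality and finiteness of $|Y|$ from the finiteness of the antichain $q$ (a consequence of Dickson's lemma for finite $X$). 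Your variant is slightly more economical in the size of $Y$ and makes the non-absorption of each $m'$ visible directly rather than by contraposition, at the cost of invoking the finiteness of antichains in $\Sinf[X]$, which the paper's choice of $Y$ does not need. Both constructions are sound in the intended setting of a finite indeterminate set $X$.
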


\begin{proof}
	Clearly, there is a monomial~$m$ in either~$p$ or~$q$ that is
	not absorbed by any monomial from the other polynomial,
	otherwise, $p$ and $q$ would absorb each other,
	which would imply $p = q$.
	Pick $Y \coloneqq \{x \in X \mid m(x) \neq \infty\}$.
	It follows that $e_Y(m)$ is finite and
	that~$m$ is not $Y$-absorbed by any monomial from the other polynomial,
	since any~$m'$ that $Y$-absorbs~$m$ would also absorb~$m$ entirely.
\end{proof}

For example, $x^n y^\infty$ and $x^\infty y^\infty$
are $\{x\}$-separated by $m \coloneqq x^n y^\infty$ with $e_{\{x\}}(x^n y^\infty) = n$.
Now, soundness of~$G$ on~$\Sinf[X]$ follows by proving that the parameter from \cref{lem-soundSinfG}
can be exploited to limit the number of turns
required by Spoiler to win $G(\pi_A, \pi_B)$ on separable $\Sinf[X]$-interpretations
in a similar fashion as already described in \cref{thm-soundNinfG} for~$\N^{\infty}$.
	
\begin{restatable}{theorem}{THMsoundSinfG} \label{thm-soundSinfG}
	Fix some $k \ge 1$.
	Let $\pi_A$ and $\pi_B$ be $\Sinf[X]$-interpretations with
	elements $\bar{a} = (a_1, \dots, a_n)$ and $\bar{b} = (b_1, \dots, b_n)$.
	If there is a separating formula $\varphi(\bar{x})$ with $\qr (\varphi(\bar{x})) \leq m$,
	a set $Y \subseteq X$ and a separating monomial~$m$
	for $\pi_A \llb \varphi(\bar{a}) \rrb$ and $\pi_B \llb \varphi(\bar{b}) \rrb$
	such that $e_Y(m) < k$, then Spoiler wins $G_{k m}(\pi_A, \bar{a}, \pi_B, \bar{b})$.
\end{restatable}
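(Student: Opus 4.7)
The plan is to proceed by induction on the structure of the separating formula $\varphi(\bar x)$, closely following the pattern of \cref{thm-soundNinfG} but replacing magnitude estimates in $\N^\infty$ with $Y$-absorption arguments in $\Sinf[X]$. The basic facts I will use repeatedly are: an absorptive sum $p+q$ in $\Sinf[X]$ is the absorption-dominant union of the monomials of $p$ and $q$; a product $p\cdot q$ is obtained by taking pairwise monomial products and then removing absorbed ones; and $e_Y$ is additive with respect to monomial multiplication. Throughout, I may assume without loss of generality that the separating monomial $m$ lies in $\pi_A\llb\varphi(\bar a)\rrb$ and is not $Y$-absorbed by any monomial of $\pi_B\llb\varphi(\bar b)\rrb$.

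For the literal case Spoiler wins with no moves. For $\varphi=\varphi_1\vee\varphi_2$, the monomial $m$ survives in $\pi_A\llb\varphi_i(\bar a)\rrb$ for some $i$, and any hypothetical $Y$-absorber in $\pi_B\llb\varphi_i(\bar b)\rrb$ would either survive into the right-hand sum or be itself absorbed by a still smaller (hence still $Y$-absorbing) monomial there, contradiction; thus $\psi_i$ is $Y$-separating with the same parameter, and the induction hypothesis applies. For $\varphi=\varphi_1\wedge\varphi_2$, decompose $m=m_1\cdot m_2$ with $m_i\in\pi_A\llb\varphi_i(\bar a)\rrb$; since $e_Y(m_1)+e_Y(m_2)=e_Y(m)<k$, each $e_Y(m_i)<k$, and if both $m_i$ were $Y$-absorbed by some $m_i'\in\pi_B\llb\varphi_i(\bar b)\rrb$ then $m_1'\cdot m_2'$ would $Y$-absorb $m$ in $\pi_B\llb\varphi(\bar b)\rrb$, a contradiction; hence the induction hypothesis applies to one of the conjuncts.

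For the existential case $\varphi=\exists y\,\psi$, use that the sum is the absorption-dominant union to pick $a^*\in A$ with $m\in\pi_A\llb\psi(\bar a,a^*)\rrb$. Spoiler plays $a^*$. For any Duplicator response $b^*\in B$, any monomial of $\pi_B\llb\psi(\bar b,b^*)\rrb$ that $Y$-absorbed $m$ would (again by the survive-or-absorb argument) propagate a $Y$-absorber of $m$ into $\pi_B\llb\varphi(\bar b)\rrb$, contradicting the choice of $m$. So $\psi(\bar x,x)$ is $Y$-separating at the updated position with the same parameter, and the induction hypothesis yields a win in the remaining $G_{k(m-1)}$ subgame.

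The hard part, and the genuine obstacle, is the universal case $\varphi=\forall y\,\psi$. Here $m=\prod_a m_a$ with $m_a\in\pi_A\llb\psi(\bar a,a)\rrb$, and since $e_Y(m)=\sum_a e_Y(m_a)<k$, the set $A'\coloneqq\{a:e_Y(m_a)>0\}$ has $|A'|<k$. Following the $\N^\infty$ template, Spoiler plays the elements of $A'$ one after another (using at most $k$ moves in this quantifier layer) and, if Duplicator ever chooses a response $b$ such that the current $\psi$ remains $Y$-separating with small parameter, invokes the induction hypothesis; otherwise, after probing all of $A'$, Spoiler forces a $b\in B$ for which no $m_a$-compatible response is possible and wins by IH. Formalising this requires two delicate ingredients: first, showing that if every $a\in A'$ admits a Duplicator reply preserving pointwise $Y$-compatibility, then the resulting right-hand product $Y$-absorbs $m$ (here one must handle infinite universes and the fact that multiplication in $\Sinf[X]$ is neither idempotent nor decreasing in $Y$-exponents, so multiplicities of factors outside $A'$ must be tracked); second, choosing an appropriate sub-monomial at each step so that $e_Y$ strictly decreases when passing to the subgame, which keeps the parameter bound intact across the $k$ moves spent in this layer and yields the overall bound of $km$.
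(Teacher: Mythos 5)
Your handling of the literal, disjunction, conjunction and existential cases matches the paper's proof and is correct: in each case the $Y$-separating monomial (or one of its factors) is propagated to a subformula with the same parameter bound, using exactly the survive-or-be-absorbed-by-something-smaller argument and the additivity of $e_Y$ under monomial multiplication.

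The universal case, however, is where the entire difficulty of the theorem lives, and you have left precisely that step unproved. You write that after probing all of $A'=\{a : e_Y(m_a)>0\}$, ``Spoiler forces a $b\in B$ for which no $m_a$-compatible response is possible,'' but you never establish that such a $b$ exists, and you flag the needed argument as a ``delicate ingredient'' to be formalised rather than supplying it. The missing argument is the following. Suppose Duplicator answers $b_1,\dots,b_\ell$ to the elements of $A'$ and each $m_{a_i}$ is $Y$-absorbed by some $m_{b_i}\in\pi_B\llb\psi(\bar b,b_i)\rrb$. If every remaining $b\in B\setminus\{b_1,\dots,b_\ell\}$ contributed some monomial $m'$ with $e_Y(m')=0$ to $\pi_B\llb\psi(\bar b,b)\rrb$, then the product of the $m_{b_i}$ with these $e_Y$-zero monomials would be a monomial of $\prod_{b\in B}\pi_B\llb\psi(\bar b,b)\rrb$ whose $Y$-exponents are bounded by those of $m=\prod_{a\in A}m_a$ (the factors outside $A'$ contribute nothing on $Y$ on either side), i.e.\ it would $Y$-absorb $m$; and even if that product monomial is itself absorbed away in $\pi_B\llb\varphi(\bar b)\rrb$, its absorber still $Y$-absorbs $m$, contradicting that $m$ is $Y$-separating. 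Hence Spoiler has one further move available: he picks a $b$ all of whose monomials have $e_Y>0$; Duplicator must answer with a fresh $a\notin A'$, for which $e_Y(m_a)=0$, so $m_a$ cannot be $Y$-absorbed by any monomial of $\pi_B\llb\psi(\bar b,b)\rrb$ and the induction hypothesis applies. This uses $\ell+1\le k$ moves in the current layer, giving the $km$ bound. Note also that your second ``ingredient'' --- forcing $e_Y$ to strictly decrease when descending into the subgame --- is not needed and is not how the bound works: the paper only uses $e_Y(m_{a_i})\le e_Y(m)<k$, keeping the parameter $k$ fixed while the quantifier rank drops by one.
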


\begin{proof}
	We show the claim by structural induction
	on the separating formula $\varphi(\bar{x})$.
	Since~$\pi_A$ and~$\pi_B$ are interchangeable,
	we may assume w.l.o.g.\@ that the $Y$-separating monomial~$m$
	is part of~$\pi_A \llb \varphi(\bar{a}) \rrb$.
	If $\varphi(\bar{x})$ is a literal, Spoiler wins immediately.
	
	\begin{itemize}
		\item
		If $\varphi (\bar{x}) = \varphi_1 (\bar{x}) \lor \varphi_2 (\bar{x})$,
		the $Y$-separating monomial~$m$ must be part of
		$\pi_A \llb \varphi_i(\bar{a}) \rrb$ for some $i \in \{1, 2\}$,
		but by definition, it cannot be $Y$-absorbed by any monomial in
		$\pi_B \llb \varphi_i(\bar{b}) \rrb$.
		Thus, $m$ $Y$-separates
		$\pi_A \llb \varphi_i(\bar{a}) \rrb$ from $\pi_B \llb \varphi_i(\bar{b}) \rrb$
		and the claim follows by induction.
		
		\item
		If $\varphi(\bar{x}) = \exists x \psi(\bar{x}, x)$,
		then $\pi_A \llb \varphi(\bar{a}) \rrb = \sum_{a \in A} \pi_A \llb \psi(\bar{a}, a) \rrb$,
		and analogously to the previous case,
		we observe that~$m$ is part of
		$\pi_A \llb \psi(\bar{a}, a) \rrb$ for some $a \in A$,
		but not $Y$-absorbed by any $\pi_B \llb \psi(\bar{b}, b) \rrb$ for $b \in B$,
		thus, Spoiler can pick such an element~$a \in A$ and win
		the remaining subgame by induction hypothesis.
		
		\item		
		If $\varphi (\bar{x}) = \varphi_1 (\bar{x}) \land \varphi_2 (\bar{x})$,
		the $Y$-separating monomial~$m = m_1 \cdot m_2$ is obtained by multiplying
		two monomials with $m_i \in \pi_A \llb \varphi_i(\bar{a}) \rrb$ for $i \in \{1, 2\}$.
		There is at least one $i \in \{1, 2\}$ such that $m_i$ $Y$-separates
		$\pi_A \llb \varphi_i(\bar{a}) \rrb$ from $\pi_B \llb \varphi_i(\bar{b}) \rrb$,
		otherwise, each $m_i$ would be $Y$-absorbed by
		some $m_i' \in \pi_B \llb \varphi_i(\bar{b}) \rrb$,
		which would yield a contradiction,
		since $m' = m_1' \cdot m_2' \in \pi_B \llb \varphi(\bar{b}) \rrb$
		would $Y$-absorb~$m$.
		Clearly, $e_Y(m_i) \le e_Y(m) < k$,
		hence Spoiler wins by invoking the induction hypothesis on the suitable subformula.
		
		\item		
		If $\varphi(\bar{x}) = \forall x \psi(\bar{x}, x)$,
		then $\pi_A \llb \varphi(\bar{a}) \rrb =
		\prod_{a \in A} \pi_A \llb \psi(\bar{a}, a) \rrb$.
		Decompose the monomial~$m$ into $m = \prod_{a \in A} m_a$
		such that $m_a \in \pi_A \llb \psi(\bar{a}, a) \rrb$ holds for all $a \in A$.
		It follows that $e_Y(m) = \sum_{a \in A} e_Y(m_a) < k$,
		thus~$e_Y(m_a)$ is nonzero for $\ell < k$
		elements $a_1, \dots, a_{\ell} \in A$ and zero otherwise.
		Spoiler picks those elements and Duplicator replies with $b_1, \dots, b_{\ell}$.
		\begin{itemize}
			\item
			If there is any $1 \le i \le \ell$ such that~$m_{a_i}$
			is not $Y$-absorbed by any monomial in $\pi_B \llb \psi(\bar{b}, b_i) \rrb$,
			then~$m_{a_i}$ $Y$-separates $\pi_A \llb \psi(\bar{a}, a_i) \rrb$
			from $\pi_B \llb \psi(\bar{b}, b_i) \rrb$,
			and together with $e_Y(m_{a_i}) \le e_Y(m) < k$,
			we can apply the induction hypothesis.
			\item
			Otherwise, each $m_{a_i}$ is $Y$-absorbed
			by some $m_{b_i} \in \pi_B \llb \psi(\bar{b}, b_i) \rrb$.
			Since $\prod_{i = 1}^{\ell} m_{b_i}$ $Y$-absorbs~$m$,
			it is impossible that each $\pi_B \llb \psi(\bar{b}, b) \rrb$
			for $b \in B \setminus \{b_1, \dots, b_{\ell}\}$
			contains some monomial~$m'$ with $e_Y(m') = 0$.
			Otherwise, those monomials would not contribute anything to
			the exponents of variables $x \in Y$,
			and their product together with $m_{b_1}, \dots, m_{b_{\ell}}$
			would result in a monomial $m'' \in \pi_B \llb \varphi(\bar{b}) \rrb$
			that $Y$-absorbs~$m$, contradicting the definition of~$m$.
			Now, it only remains for Spoiler to pick
			some~$b \in B \setminus \{b_1, \dots, b_{\ell}\}$
			such that $\pi_B \llb \psi(\bar{b}, b) \rrb$
			only contains monomials~$m'$ with $e_Y(m') > 0$.
			Duplicator must answer $a \in A \setminus \{a_1, \dots, a_{\ell}\}$,
			but then $e_Y(m_a) = 0$, hence $m_a$ $Y$-separates
			$\pi_A \llb \psi(\bar{a}, a) \rrb$ from $\pi_B \llb \psi(\bar{b}, b) \rrb$
			and we can apply the induction hypothesis. \qedhere
		\end{itemize}
	\end{itemize}
\end{proof}

\begin{corollary}
	The game $G$ is sound for $\equiv$ on the semirings $\WX, \N^\infty$ and $\Sinf[X]$.
\end{corollary}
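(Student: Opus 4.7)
The plan is to derive soundness of the unrestricted game $G$ on each of the three semirings from the bounded-game soundness results already established, by exhibiting, for every separable pair $\pi_A\not\equiv\pi_B$, a finite turn bound $k$ that depends only on a chosen separating sentence. Contrapositively, once such a bound is in hand for every separating sentence, a winning strategy of Duplicator in $G(\pi_A,\pi_B)$ forbids any separating sentence, which yields $\pi_A\equiv\pi_B$.

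For $\WX$ I would not have to do any work beyond citation: since $\WX$ is $|X|$-idempotent, the second clause of \cref{thm-soundWXG} already asserts soundness of $G$ for $\equiv$ on any $n$-idempotent semiring, with the global bound $k=|X|$ independent of the interpretations.

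For $\N^\infty$ I would fix, for any pair $\pi_A\not\equiv\pi_B$, a separating sentence $\psi$ of quantifier rank $m$. Since the element $\infty$ is absorbing under both $+$ and $\cdot$ (on nonzero factors), two values that are both $\infty$ necessarily coincide, so at least one of $\pi_A\llb\psi\rrb$ and $\pi_B\llb\psi\rrb$ must lie in $\N$. Setting $k:=\min\{\pi_A\llb\psi\rrb,\pi_B\llb\psi\rrb\}+1\in\N$ and invoking \cref{thm-soundNinfG} then gives Spoiler a winning strategy in $G_{km}(\pi_A,\pi_B)$, hence in $G(\pi_A,\pi_B)$.

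For $\Sinf[X]$ I would apply \cref{lem-soundSinfG} to the distinct absorptive polynomials $p:=\pi_A\llb\psi\rrb$ and $q:=\pi_B\llb\psi\rrb$ to extract a subset $Y\subseteq X$ and a $Y$-separating monomial $\mu$ whose exponent sum $e_Y(\mu)$ is finite. Putting $k:=e_Y(\mu)+1$ and invoking \cref{thm-soundSinfG} yields a winning strategy for Spoiler in $G_{km}(\pi_A,\pi_B)$. The corollary is therefore a routine assembly of the preceding results; the only place where any genuine content is required is the extraction of a finite parameter from a separating witness, and this is delivered respectively by the $|X|$-idempotence of $\WX$, by the absorbing nature of $\infty$ in $\N^\infty$, and by \cref{lem-soundSinfG} for $\Sinf[X]$.
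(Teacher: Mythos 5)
Your proposal is correct and follows exactly the paper's route: $\WX$ is handled by the second clause of \cref{thm-soundWXG} via $|X|$-idempotence, $\N^\infty$ by observing that a separating sentence must take a finite value in at least one interpretation and invoking \cref{thm-soundNinfG} with $k=\min\{\pi_A\llb\psi\rrb,\pi_B\llb\psi\rrb\}+1$, and $\Sinf[X]$ by extracting a finite parameter via \cref{lem-soundSinfG} and invoking \cref{thm-soundSinfG}. No gaps.
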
	
	
However, $G$ is unsound for some important semirings.
We construct a counterexample in the tropical semiring (which is isomorphic to the Viterbi semiring $\V$) and transfer it to the isomorphic variant  $\mathbb{D}$ of $\mathbb{L}$ by making sure that the valuations are in the interval $[0,1]$, and that the separating formula does not evaluate to a semiring element greater than $1$ in both interpretations.
The main idea behind the construction is that, given a sequence $(s_i)_{\geq 1}$ of edge labels, Spoiler cannot distinguish an infinite star with exactly $i$ edges labelled with $s_i \in \Trop$ for each $i \in \N$ from an infinite star where $\min(i,m)$ edges are labelled with $s_i$ (see \cref{fig:soundVG}).
However, for an appropriate sequence of edge labels such star graphs with distinguished centre nodes can be separated in $\fo$ by summing up all edge labels using the formula $\psi (x) = \forall y (x=y \vee Exy)$.

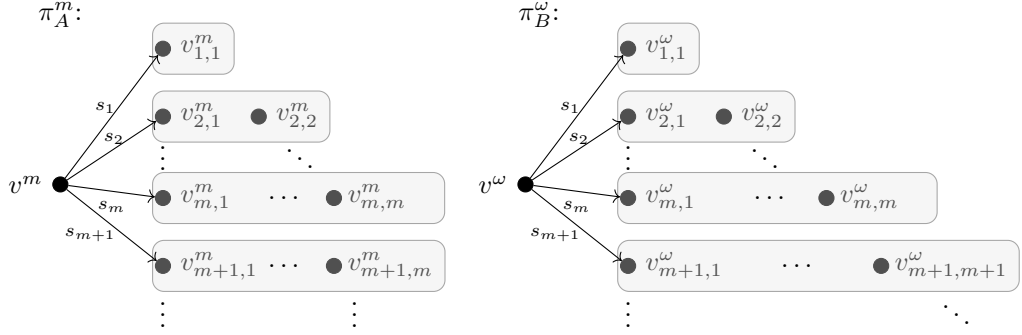
\begin{figure}[ht]
	\centering
	\begin{tikzpicture}[scale=.9]
	\tikzstyle{vertex}=[shape=circle,draw=black, fill=black, inner sep=2pt]
	\tikzstyle{box}=[draw, fill=gray!20, opacity=.35, rounded corners=1ex, inner sep=0.3mm]
	
	\node (PIA) at (-.5,2.5) {$\pi_A^{m}$:};
	
	\node[vertex] (C) at (-.5,0) {};
	\node[vertex] (CA) at (1,2) {};
	\node[vertex] (CB) at (1,1) {};
	\node[vertex] (CBB) at (2.4,1) {};
	\node[vertex] (CC) at (1,-.2) {};
	\node[vertex] (CCC) at (3.5,-.2) {};
	\node[vertex] (CD) at (1,-1.2) {};
	\node[vertex] (CDD) at (3.5,-1.2) {};
	
	\node[left=0cm of C] (C') {$v^m$};
	\node[right=0cm of CA] (CA') {$v^m_{1,1}$};
	\node[right=0cm of CB] (CB') {$v^m_{2,1}$};
	\node[right=0cm of CBB] (CBB') {$v^m_{2,2}$};
	\node[right=0cm of CC] (CC') {$v^m_{m,1}$};
	\node[right=-.05cm of CCC] (CCC') {$v^m_{m,m}$ \ \ \ };
	\node[right=0cm of CD] (CD') {$v^m_{{m+1}, 1}$};
	\node[right=-.05cm of CDD] (CDD') {$v^m_{{m+1},m}$};
	
	\node[box, fit=(CA) (CA') (CA) (CA')](R1) {};
	\node[box, fit=(CB) (CBB') (CB) (CBB')](R2) {};
	\node[box, fit=(CC) (CCC') (CC) (CCC')](R3) {};
	\node[box, fit=(CD) (CDD') (CD) (CDD')](R4) {};
	
	\node (vP1) at (1,.5) {$\vdots$};
	\node (vP2) at (1,-1.8) {$\vdots$};
	\node (vP3) at (3.8,-1.8) {$\vdots$};
	\node (dP1) at (3,.5) {$\ddots$};
	\node (hP1) at (2.8,-.2) {$\dots$};
	\node (hP2) at (2.8,-1.2) {$\dots$};
	
	\path [->] (C) edge node[above] (al1) {} (CA); 
	\path [->] (C) edge node[above] (al2) {} (CB);
	\path [->] (C) edge node[below] (alm) {} (CC);
	\path [->] (C) edge node[below] (alm1) {} (CD);
	
	\node[left=-.3cm of al1] (lal1) {\scriptsize $s_1$};
	\node[below right=-.5cm of al2] (lal2) {\scriptsize $s_2$};
	\node[above=-.4cm of alm] (lalm) {\scriptsize $s_m$};
	\node[left=-.3cm of alm1] (lalm1) {\scriptsize $s_{m+1}$};
	
	
	\node (PIB) at (6.5,2.5) {$\pi_B^{\omega}$:};
	
	\node[vertex] (nC) at (6.3,0) {};
	\node[vertex] (nCA) at (7.8,2) {};
	\node[vertex] (nCB) at (7.8,1) {};
	\node[vertex] (nCBB) at (9.2,1) {};
	\node[vertex] (nCC) at (7.8,-.2) {};
	\node[vertex] (nCCC) at (10.7,-.2) {};
	\node[vertex] (nCD) at (7.8,-1.2) {};
	\node[vertex] (nCDD) at (11.5,-1.2) {};
	
	\node[left=0cm of nC] (nC') {$v^\omega$};
	\node[right=0cm of nCA] (nCA') {$v^\omega_{1,1}$};
	\node[right=0cm of nCB] (nCB') {$v^\omega_{2,1}$};
	\node[right=0cm of nCBB] (nCBB') {$v^\omega_{2,2}$};
	\node[right=0cm of nCC] (nCC') {$v^\omega_{m,1}$};
	\node[right=-.05cm of nCCC] (nCCC') {$v^\omega_{m,m}$ \ \ \ };
	\node[right=0cm of nCD] (nCD') {$v^\omega_{{m+1}, 1}$};
	\node[right=-.05cm of nCDD] (nCDD') {$v^\omega_{{m+1},m+1}$};
	
	\node[box, fit=(nCA) (nCA') (nCA) (nCA')](nR1) {};
	\node[box, fit=(nCB) (nCBB') (nCB) (nCBB')](nR2) {};
	\node[box, fit=(nCC) (nCCC') (nCC) (nCCC')](nR3) {};
	\node[box, fit=(nCD) (nCDD') (nCD) (nCDD')](nR4) {};
	
	\node (nvP1) at (7.8,.5) {$\vdots$};
	\node (nvP2) at (7.8,-1.8) {$\vdots$};
	\node (ndP1) at (9.8,.5) {$\ddots$};
	\node (ndP2) at (12.6,-1.8) {$\ddots$};
	\node (nhP1) at (9.9,-.2) {$\dots$};
	\node (nhP2) at (10.3,-1.2) {$\dots$};
	
	\path [->] (nC) edge node[above] (bl1) {} (nCA); 
	\path [->] (nC) edge node[above] (bl2) {} (nCB);
	\path [->] (nC) edge node[below](blm) {} (nCC);
	\path [->] (nC) edge node[below](blm1) {} (nCD);
	
	\node[left=-.3cm of bl1] (lbl1) {\scriptsize $s_1$};
	\node[below right=-.5cm of bl2] (lbl2) {\scriptsize $s_2$};
	\node[above=-.4cm of blm] (lblm) {\scriptsize $s_m$};
	\node[left=-.3cm of blm1] (lblm1) {\scriptsize $s_{m+1}$};
\end{tikzpicture}
	\caption{Infinite star graphs used to construct a counterexample against the soundness of the game $G$ with respect to $\mathbb{T}$- and $\mathbb{D}$-interpretations. The grey boxes are meant to indicate $\pi^{m} \llb E v^m v^m_{i,j} \rrb = \pi^{\omega} \llb E v^m v^m_{i,j} \rrb = s_i$ for each $j$. Non-edges are assigned their Boolean truth value.}
	\label{fig:soundVG}
\end{figure}

\begin{lemma} \label{lem-soundVG}
	There is a sequence $(s_i)_{i \geq 1}$ of real numbers in $[0,1]$ such that for each $m \in \N_{>0}$
	$$
	1> \sum\limits_{i \geq 1} i \cdot s_i > \sum\limits_{i \geq 1} \min(i,m) \cdot  s_i.
	$$
\end{lemma}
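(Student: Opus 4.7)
\textbf{Proof plan for \cref{lem-soundVG}.}
The plan is to exhibit an explicit sequence $(s_i)_{i \geq 1}$ with values in $[0,1]$ and verify the two required inequalities by elementary calculation. The first key observation is that, for any $m \in \N_{>0}$, the difference
\[ \sum_{i \geq 1} i \cdot s_i - \sum_{i \geq 1} \min(i,m) \cdot s_i = \sum_{i > m} (i - m) \cdot s_i \]
is strictly positive whenever there is at least one index $i > m$ with $s_i > 0$. Thus, if we take $s_i > 0$ for \emph{every} $i \geq 1$, the second strict inequality is automatic for all $m$. The remaining task is to make the $s_i$ decay fast enough so that $\sum_{i \geq 1} i \cdot s_i$ converges to a value strictly less than~$1$.

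The concrete choice I would make is $s_i := 2^{-(i+2)}$ for all $i \geq 1$. Clearly $s_i \in (0, 1/8] \subseteq [0,1]$, so the range condition is satisfied. For the first inequality, I would use the standard identity $\sum_{i \geq 1} i \cdot 2^{-i} = 2$ (obtained, e.g., by differentiating the geometric series) to compute
\[ \sum_{i \geq 1} i \cdot s_i = \frac{1}{4} \sum_{i \geq 1} i \cdot 2^{-i} = \frac{1}{2} < 1. \]
In particular, the series converges absolutely, so all rearrangements used above are justified.

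The second inequality then follows from the observation made at the outset: for every $m \geq 1$,
\[ \sum_{i \geq 1} i \cdot s_i - \sum_{i \geq 1} \min(i,m) \cdot s_i = \sum_{i > m} (i - m) \cdot 2^{-(i+2)} > 0, \]
since the sum on the right is a series of strictly positive terms. There is no real obstacle here; the only thing to be slightly careful about is to present the computation in a way that makes the convergence of the infinite sums transparent (which is why an explicit geometric-style sequence is preferable to a more ad hoc construction). Any sequence satisfying $s_i > 0$ for all $i$ together with $\sum i \cdot s_i < 1$ would work equally well, e.g.\ $s_i := c \cdot i^{-3}$ for a sufficiently small constant $c > 0$.
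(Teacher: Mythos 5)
Your proof is correct and follows essentially the same route as the paper's: choose an explicit positive, rapidly decaying sequence so that $\sum_{i\geq 1} i\cdot s_i$ evaluates to $\tfrac12<1$, and observe that the difference of the two sums is the strictly positive tail $\sum_{i>m}(i-m)\cdot s_i$. The paper happens to use $s_i=\frac{1}{i\cdot 2^{i+1}}$ rather than $2^{-(i+2)}$ (a choice it re-fixes anyway in the subsequent construction), but this is immaterial to the lemma as stated.
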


\begin{proof}
	We prove the claim for $(s_i)_{i \geq 1}$ where $s_i := \frac{1}{i \cdot 2^{i+1}}$. Due to convergence of the geometrical series we obtain that $\sum_{i \in \N_{>0}} i \cdot s_i = 0.5$. Further,
	$$
	\sum\limits_{i \geq 1} i \cdot s_i = \sum\limits_{i \geq 1} \min(i,m) \cdot s_i + \underbrace{\sum\limits_{i > m } (m-i) \cdot s_i}_{>0} > \sum\limits_{i \geq 1} \min(i,m) \cdot s_i,
	$$
	which implies the claim.
\end{proof}

In order to ensure that Duplicator wins the game $G_m$ for each $m \in \N$ on single semiring interpretations $\pi$ and $\pi'$, we combine the star graphs $\pi^m$ for arbitrarily large $m$.
The idea is to include in both $\pi$ and $\pi'$ the star graphs $\pi^m$ for each $m \in \N$ as disjoint subgraphs, and to add an additional copy of $\pi^\omega$ to $\pi'$ only.
Using the sequence of edge labels satisfying $\sum_{i\leq 1} i \cdot s_i > \sum_{i\leq 1} \min (i,m) \cdot s_i$ for each $m \in \N_{>0}$ yields $\pi^\omega \llb \psi (v^\omega) \rrb > \pi^m \llb \psi (v^m) \rrb$, so the additional subgraph $\pi^\omega$ in $\pi'$ would not contribute to the valuation of the sentence $\exists x \psi (x)$.
Hence, we add additional vertices to the star graphs $\pi^m$ in both $\pi$ and $\pi'$ which increase the sum over all outgoing edges and cause $\exists x \psi (x)$ to separate the resulting semiring interpretations.

\begin{restatable}{theorem}{THMsoundVG} \label{thm-soundVG}
	The game $G$ is not sound for $\equiv$ on $\Trop, \Doubt, \Vit$ and $\Lukas$.
\end{restatable}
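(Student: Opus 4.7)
The plan is to construct explicit $\Trop$-interpretations $\pi_A$ and $\pi_B$ that are elementarily inequivalent, yet on which Duplicator has a winning strategy for every $G_m$, thus proving unsoundness of $G$ on $\Trop$. The results transfer via the isomorphism $\Trop \cong \Vit$, and for $\Doubt$ and $\Lukas$ (themselves isomorphic) the strict bound $\sum_{i \geq 1} i s_i < 1$ from \cref{lem-soundVG} ensures that all valuations encountered remain inside $[0,1]$, so that the capped multiplication $\oplus$ (resp.\ $\odot$) never differs from ordinary addition, and the same construction carries over.

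First, pick $(s_i)_{i \geq 1}$ from \cref{lem-soundVG} together with a constant $c \in (0,1)$ satisfying
\[
\sum_{i \geq 1} i s_i < c + \sum_{i \geq 1} \min(i, m) s_i \leq 1 \quad \text{for every } m \geq 1;
\]
such a $c$ exists by \cref{lem-soundVG}, since the gap $\sum_i i s_i - \sum_i \min(i,m) s_i$ is maximised at $m=1$ and $1 - \sum_i i s_i > 0$. Let $\hat\pi^m$ denote the enhanced star obtained from $\pi^m$ by attaching a single extra leaf to the centre $v^m$ via an edge of weight $c$. Define $\pi_A$ as the disjoint union of all $\hat\pi^m$ for $m \geq 1$ and $\pi_B := \pi_A \sqcup \pi^\omega$, assigning Boolean truth values to non-edges. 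With the sentence $\varphi := \exists x \psi(x)$ for $\psi(x) := \forall y(x = y \vee E x y)$ as described in the text, so that $\psi$ evaluated at a centre captures the total outgoing edge weight of its star, one obtains $\pi_B \llb \varphi \rrb = \sum_i i s_i$ whereas $\pi_A \llb \varphi \rrb = \min_m \bigl(c + \sum_i \min(i,m) s_i\bigr)$. By the choice of $c$ these values differ, proving $\pi_A \not\equiv \pi_B$.

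For Duplicator's winning strategy in $G_m$ the key idea is that the infinite family $\{\hat\pi^k : k \geq 1\}$ in $\pi_A$ lets her simulate $\pi^\omega$ locally. In any shared copy $\hat\pi^k$ Duplicator plays the identity between the two sides. Whenever Spoiler first picks inside $\pi_B$'s extra copy of $\pi^\omega$, Duplicator commits to a fresh index $M \geq m$ (possible since only finitely many components have been touched so far) and henceforth mirrors $\pi^\omega$ to $\hat\pi^M \subseteq \pi_A$ by mapping $v^\omega \mapsto v^M$ and each newly chosen leaf $v^\omega_{i,j}$ to a yet unused leaf of $\hat\pi^M$ of matching label $s_i$. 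Since $\hat\pi^M$ contains $\min(i, M) \geq \min(i, m)$ leaves labelled $s_i$ and Spoiler performs at most $m$ picks in total, this supply is never exhausted. Verifying the local-isomorphism condition on the resulting partial map reduces to a straightforward case analysis: literal values agree within each component by construction, and pairs across distinct components carry the common "false" value on both sides. Symmetric picks on $\pi_A$'s side are handled analogously.

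The main obstacle is to verify that interleavings of Spoiler's moves between shared components and the extra $\pi^\omega$ copy do not cause collisions, in particular at the extra leaf $w^M$ added inside $\hat\pi^M$: since $w^M$ has no counterpart in $\pi^\omega$, Spoiler can only pick it by acting in the shared copy of $\hat\pi^M$, and Duplicator then responds via the identity (not the mirror map), so no conflict arises. The transfer to $\Doubt$ and $\Lukas$ is secured by checking inductively that every subformula of $\varphi$ evaluates into $[0,1]$ in the analogous $\Doubt$-interpretations, which follows from our choice of $c$ together with the bound $\sum_i i s_i < 1$.
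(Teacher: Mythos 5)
Your overall blueprint matches the paper's (star graphs $\pi^m$ vs.\ $\pi^\omega$, the sequence $(s_i)$ from the lemma, extra leaves to push the minimum of $\exists x\,\psi(x)$ above $0.5$, and a component-reassignment strategy for Duplicator), but your specific modification breaks the construction. You attach to each centre $v^m$ a \emph{single} extra leaf $w^m$ with a \emph{fixed} weight $c$. Now consider $G_2(\pi_A,\pi_B)$: Spoiler picks $v^\omega\in B$. Duplicator must answer with some centre $v^M\in A$ (answering with a leaf loses immediately, since leaves have no outgoing edges of value $s_1$). Spoiler then picks $w^M\in A$, and Duplicator needs some $u\in B$ with $\pi_B(Ev^\omega u)=c$; but the edges out of $v^\omega$ take only the values $s_i$ (to its own leaves) and $1$ (to other components), never $c$. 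If instead you choose $c=s_1$ to dodge this, then $\hat\pi^M$ has two $s_1$-leaves while $v^\omega$ has only one, and Spoiler wins $G_3$ by picking both. Either way Spoiler wins the unbounded game $G(\pi_A,\pi_B)$, so your interpretations do not witness unsoundness. This is exactly why the paper adds $f(m)$ extra leaves \emph{labelled $s_m$} to $\pi^m$: the surplus is hidden inside a label class that already has $m$ members on both sides, so any count discrepancy between matched components is invisible within $m$ moves (``equal cardinality or both $\ge m$''). A label class of size $1$ versus $0$ can never be hidden.

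Two further points. First, ``assigning Boolean truth values to non-edges'' in a disjoint union makes every cross-component atom false (semiring zero), whence $\psi(v)=\forall y(v=y\vee Evy)$ evaluates to the semiring zero at \emph{every} centre and your claimed values $\pi_B\llb\varphi\rrb=\sum_i i s_i$, $\pi_A\llb\varphi\rrb=\min_m(c+\sum_i\min(i,m)s_i)$ are wrong; as in the paper, all cross-component pairs emanating from centres must be edges of weight $1$. Second, your Duplicator strategy (identity on shared components, plus a mirror of $\pi^\omega$ onto a fresh $\hat\pi^M\subseteq\pi_A$) collides with itself: $\hat\pi^M$ also sits inside $\pi_B$, so if Spoiler plays both $v^\omega$ and $v^M$ on the $\pi_B$ side, your prescription sends both to $v^M\in A$, violating injectivity after two moves. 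The correct strategy fixes, as a function of $m$, a global bijection between component classes that shifts indices (sending $\hat\pi^n$ to $\hat\pi^{n-1}$ for $n>m$ and $\hat\pi^m$ to $\pi^\omega$), rather than keeping the identity on all shared components.
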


\begin{proof}
	Let $\Semi \in \{\Trop, \Doubt\}$ and $(s_i)_{i \geq 1}$ be defined by $s_i:=  \frac{1}{i \cdot 2^{i+1}}$.
	Further, let $s_\infty^m$ denote $\sum_{i \geq 1} \min(i,m) \cdot s_i$ for each $m \geq 1$.
	We inductively define a function $f \colon \N \setminus \{0\} \to  \N \setminus \{0\}$ which determines the number of additional nodes that are added to the star graphs. Let $f(1)$ be the smallest number such that $s_\infty^1 + f(1) \cdot s_1 > 0.5$.
	For $m>1$, we define $f(m)$ as the minimum number yielding $s_\infty^m + f(m) \cdot s_m \geq s_\infty^{m-1} + f(m-1) \cdot s_{m-1}$.
	Since $0< s_i < 1$ for all $i \geq 1$, $f$ is well-defined.
	Hence, we obtain a chain $s_\infty^1 + f(1) \cdot s_1 \leq s_\infty^2 + f(2) \cdot s_2 \leq \dots$ which is strictly upper bounded by $0.5$.
	Based on $f$ and $(s_i)_{i \geq 1}$, we construct $\Semi$-interpretations $\pi$ and $\pi'$ over the vocabulary $\tau = \{E\}$ consisting of a binary relation symbol. The universes $V$ and $V'$ are composed as follows.
	\begin{align*}
		V &= \ \{v^m \colon m \geq 1\} \cup \{v^m_{i,j} \colon j \leq \min(i,m)\} \cup \{v^m_{m,m+j} \colon j \leq f(m)\} \\
		V' &= V \ \cup \{v^\omega\} \cup \{v^\omega_{i,j} \colon j \leq i\}
	\end{align*}
	The valuations in $\pi$ and $\pi'$ are defined according to the following rules, which apply to all $m,n,i,j \in \N_{>0}$ with $m \neq n$ such that the respective nodes are contained in $V$ or $V'$.
	\begin{itemize}
		\item $\pi (E v^m v^m_{i,j})= \pi' (E v^m v^m_{i,j}) = \pi' (E v^\omega v^\omega_{i,j}) = s_i$
		\item $\pi (E v^m v^n_{i,j})= \pi' (E v^m v^n_{i,j}) = \pi' (E v^\omega v^m_{i,j}) = \pi' (E v^m v^\omega_{i,j}) = 1$
		\item $\pi (E v^m v^n)= \pi' (E v^m v^n) = \pi' (E v^\omega v^m)= \pi' (E v^m v^\omega) = 1$
	\end{itemize}
	Further, the negations of the instantiated $\tau$-literals defined above are valuated with $0$. All remaining unnegated $\tau$-literals over $V$ and $V'$ are valuated with $0$ and their negations with $1$.
	In both $\Trop$ and $\Doubt$, we obtain the following valuations of the formula $\psi(x) = \forall y (x = y \vee Exy)$.
	\begin{itemize}
		\item $\pi \llb \psi(v^m_{i,j} ) \rrb = \pi' \llb \psi(v^m_{i,j} ) \rrb = \pi' \llb \psi(v^\omega_{i,j} ) \rrb = 0$
		\item $\pi \llb \psi( v^m ) \rrb = \pi' \llb \psi( v^m ) \rrb = s_\infty^m + f(m) \cdot s_m$
		\item $\pi' \llb \psi( v^\omega ) \rrb = 0.5$
	\end{itemize}
	By construction of $f$, this implies
	$
	\pi_A \llb \exists x \psi(x) \rrb = s_\infty^1 + f(1) \cdot s_1 > 0.5 = \pi_B \llb \exists x \psi(x) \rrb,
	$
	hence $\pi_A \not\equiv_2 \pi_B$.
	In order to construct a winning strategy for Duplicator in the game $G(\pi, \pi')$, let $V^n_0 = \{v^n\}$ and $V^n_i$ for $i \geq 1$ contain all elements $v^n_{i,j}$ in $V$. We consider the partition $\mathcal{P} :=\{V^n_i \colon n \geq 1, i \geq 0\}$ of $V$ and $\mathcal{P}' := \mathcal{P} \cup \{V^\omega_i \colon i \geq 0\}$ of $V'$.
	Based on the number of turns $m$ Spoiler chooses in the game $G(\pi_A, \pi_B)$, we define a bijection $g_m \colon \mathcal{P} \to \mathcal{P}'$ as follows.
	\begin{align*}
		g_m(V^n_i) := \left\{\begin{array}{ll} V^n_i, & n<m \\
			V^\omega_i, & n=m \\
			V^{n-1}_i, & n>m \end{array}\right.
	\end{align*}
	Duplicator wins the game $G_m(\pi, \pi')$ by responding to any element in $V^n_i \subseteq V$ with an arbitrary element in $g_m(V^n_i)$ and every element in $V^n_i \subseteq V'$ with any element in $g_m^{-1} (V_i^n)$, merely making sure that (in)equalities with regard to the elements that have already been chosen are respected. This is possible, because for each $V^n_i$ we have that $|V^n_i| = |g_m(V^n_i)|$ or that $|V^n_i| \geq m$ and $|g_m(V^n_i)| \geq m$.
\end{proof}
	
\subsection{Completeness and incompleteness of the game \texorpdfstring{$G$}{G}}
	
We now turn to the study of completeness.
Analogously to $m$-turn Ehrenfeucht--Fraïssé games,
the game $G$ cannot be complete for semirings where
elementary equivalence and isomorphism of finite interpretations do not coincide,
since Duplicator clearly loses~$G$ on non-isomorphic finite interpretations.
In the remaining cases, $G$ must be complete with respect to finite interpretations,
because Spoiler winning the game implies non-isomorphism,
but on finite interpretations, this already implies separability by a first-order formula.
	
	\begin{proposition}
		Let $\Semi \in \{\Trop, \Vit, \N, \N [X]\}$.
		If Spoiler wins $G(\pi_A, \pi_B)$ and $\pi_A$, $\pi_B$ are finite
		$\Semi$-interpretations, then $\pi_A \not\equiv \pi_B$.
	\end{proposition}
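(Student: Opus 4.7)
The plan is to combine two observations, neither of which requires new technical machinery. The first is that on any semiring, Spoiler winning $G(\pi_A, \pi_B)$ implies that $\pi_A$ and $\pi_B$ are not isomorphic: if $\sigma \colon A \to B$ were an isomorphism of $\Semi$-interpretations, then Duplicator would have the trivial winning strategy of answering any element $a \in A$ picked by Spoiler with $\sigma(a) \in B$, and any $b \in B$ with $\sigma^{-1}(b) \in A$. Regardless of how many rounds $m$ Spoiler chooses at the outset of $G(\pi_A, \pi_B)$, the final position $\bar{a} \mapsto \bar{b}$ is then a restriction of $\sigma$, and hence a local isomorphism by definition. Note that this step uses neither finiteness of the universes nor any algebraic property of $\Semi$.

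The second observation is that for each of the four semirings $\Trop, \Vit, \N$ and $\N[X]$, it is shown in \cite{GraedelMrk21} that elementary equivalence of finite $\Semi$-interpretations coincides with isomorphism. Taking the contrapositive, $\pi_A \not\cong \pi_B$ implies $\pi_A \not\equiv \pi_B$ for finite $\Semi$-interpretations on these semirings. Chaining the two implications then yields the proposition: Spoiler winning $G(\pi_A, \pi_B)$ forces $\pi_A \not\cong \pi_B$, which in turn forces $\pi_A \not\equiv \pi_B$.

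There is no serious obstacle here; both ingredients are already available. The substantive content of the statement is that the gap between elementary equivalence and isomorphism closes on finite interpretations over precisely these four semirings, which is where completeness of $G$ for $\equiv$ comes for free. This argument breaks down on semirings such as $\Lukas, \Doubt$, or the min-max semirings, where finite non-isomorphic $\Semi$-interpretations can nevertheless be elementarily equivalent; addressing completeness of $G$ on those semirings would require genuinely different ideas and, indeed, is the reason why completeness fails there as recorded in \cref{sum-Gm-BGm}.
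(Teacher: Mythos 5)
Your argument is correct and is essentially the paper's own: Spoiler winning $G(\pi_A,\pi_B)$ rules out an isomorphism (Duplicator could otherwise play along it indefinitely), and on $\Trop, \Vit, \N$ and $\N[X]$ the results of \cite{GraedelMrk21} show that elementary equivalence of finite interpretations coincides with isomorphism, so non-isomorphism yields a separating sentence. The paper justifies the proposition by exactly this two-step reduction in the paragraph preceding its statement.
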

	
The question arises whether this completeness result  can be lifted to infinite semiring interpretations.
For the tropical semiring $\Trop$ we  describe a counterexample
which proves that $G$ is incomplete for $\equiv$ on $\Trop$
(and hence also on~$\V$ due to $\V \cong \Trop$).

\begin{restatable}{theorem}{THMcomplVG} \label{thm-complVG}
		There are $\Trop$-interpretations $\pi_A, \pi_B$ such that Spoiler wins $G_1(\pi_A, \pi_B)$ although $\pi_A \equiv \pi_B$.
	\end{restatable}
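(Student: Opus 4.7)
The plan is to construct two $\Trop$-interpretations $\pi_A$ and $\pi_B$ over the monadic vocabulary $\tau = \{R\}$, both with infinite universes, that are elementarily equivalent yet differ in one local type so that Spoiler wins $G_1$. Concretely, take $A = \{a_n : n \in \N\}$ with $\pi_A(R a_0) = 0$ and $\pi_A(R a_n) = 1$ for $n \geq 1$, and $B = \{b_n : n \in \N\} \cup \{b_*\}$ with $\pi_B(R b_n) = \pi_A(R a_n)$ for all $n \in \N$ and $\pi_B(R b_*) = 2$. All negative literals are assigned the value $\infty = 0_\Trop$, so both interpretations are model-defining. Spoiler wins $G_1(\pi_A, \pi_B)$ immediately by choosing $b_*$: no element of $A$ has $R$-value $2$, so no singleton map $a \mapsto b_*$ extends to a local isomorphism.

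The interesting part is to verify $\pi_A \equiv \pi_B$. I would prove by simultaneous structural induction two statements. The first is a \emph{domination lemma}: for every NNF formula $\chi(\bar x, y)$ and every tuple $\bar b \in (B \setminus \{b_*\})^n$,
\[
\pi_B \llb \chi(\bar b, b_*) \rrb \; \geq \; \pi_B \llb \chi(\bar b, b_m) \rrb
\]
for any $m \geq 1$ with $b_m \notin \bar b$. The second is the \emph{equivalence statement}: $\pi_A \llb \phi(\bar a) \rrb = \pi_B \llb \phi(\bar b) \rrb$ whenever $\bar a \in A^n$ and $\bar b \in (B \setminus \{b_*\})^n$ correspond under the index-preserving bijection $a_i \mapsto b_i$.

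The domination lemma is established by exploiting monotonicity of NNF-formulas in literal values (under the usual order on $\R^\infty_+$), combined with a witness-mapping argument for nested quantifiers: for each $z$ appearing in the infimum or sum of a subformula evaluated at $b_*$, one selects a companion element $z'$ in the corresponding subformula evaluated at $b_m$ such that the literals of $\chi'$ are pointwise dominated. For $z \in B \setminus \{b_*, b_m\}$ one takes $z' = z$; for $z = b_m$ one takes a fresh $b_{m'}$ with $m' \geq 1$ and $m' \notin \bar b \cup \{m\}$; and for $z = b_*$ one takes $z' = b_m$. In each case the equality-type of the pair $(z',y)$ matches that of $(z,y)$ while the $R$-values can only grow on the left, so the inductive hypothesis applies.

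The equivalence then follows. The case $\phi = \exists y\,\psi$ uses that the infimum over $B$ coincides with the infimum over $B \setminus \{b_*\}$ by the domination lemma, and the latter equals the infimum over $A$ by the inductive hypothesis applied via the bijection. The case $\phi = \forall y\,\psi$ splits into two: either both arithmetic sums diverge to $\infty$ — which happens whenever the generic element (a $b_n$ or $a_n$ with $n \geq 1$ unconstrained by the parameter tuple) contributes a strictly positive value to $\psi$ — or both sums equal $0$. The second subcase reduces to a \emph{zero-propagation property}: if an NNF subformula evaluates to $0$ at an instantiation whose $R$-value is strictly positive, then the $0$ must be reached through equality literals or trivial constants rather than through the $R$-literal on $y$, so the same $0$-route survives when the element is replaced by $b_*$. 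The main obstacle of the whole argument is precisely this zero-propagation analysis, which demands a careful traversal of the tropical min-tree of $\psi$ in presence of arbitrarily nested quantifiers and further parameters, showing that no such $0$-route can spuriously depend on the concrete nonzero $R$-value of the element it is evaluated at.
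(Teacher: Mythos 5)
Your construction is different from the paper's (which takes universes where every sentence evaluates to $0$ or $\infty$ and concludes via the doubling endomorphism $s \mapsto 2s$ of $\Trop$ and the fundamental property), and it does work: for your $\pi_A,\pi_B$ one indeed has $\pi_A \equiv \pi_B$ while Spoiler wins $G_1$ by playing $b_*$. The overall architecture — domination to remove $b_*$ from infima, genericity plus zero-propagation to remove it from sums — is the right one. But as written the argument has a genuine gap in the universal-quantifier case of the domination lemma. For $\chi(\bar x,y)=\forall z\,\theta(\bar x,y,z)$ you must show $\sum_{z\in B}\pi_B\llb\theta(\bar b,b_*,z)\rrb \ge \sum_{z\in B}\pi_B\llb\theta(\bar b,b_m,z)\rrb$. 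Your companion map (identity off $\{b_*,b_m\}$, $b_m\mapsto b_{m'}$, $b_*\mapsto b_m$) is neither injective nor surjective, so it does not reindex one sum against the other; worse, no pointwise-dominating correspondence exists at all, because the right-hand sum contains the term $\theta(\bar b,b_m,b_*)$ with $R$-profile $(1,2)$, and every left-hand term with the same equality type has profile $(2,Rz)$ with $z\ne b_*$, hence second coordinate at most $1$. Concretely, already for $\theta=Rz$ the comparison your map would require at that slot reads $1\ge 2$. The lemma is still true, but its $\forall$-case needs exactly the case split you reserve for the equivalence statement: either the generic terms are positive and both sums are $\infty$, or they vanish, in which case the problematic off-diagonal terms $\theta(\bar b,b_*,b_m)$ and $\theta(\bar b,b_m,b_*)$ are both $0$ by zero-propagation and only pointwise-comparable terms remain. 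So the zero-propagation lemma (which needs only: same equality type and same positions of the unique element $b_0$ with $R$-value $0$) must be proved first and fed into both inductions; it cannot be an afterthought confined to the top-level $\forall$ case.

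A second, smaller flaw: the dichotomy ``either both sums diverge to $\infty$ or both sums equal $0$'' is false as stated. When the generic terms vanish, the sum collapses to a finite sum over the parameters, $b_0$, and $b_*$, and these can contribute positive finite amounts — e.g.\ $\forall y\,(x\ne y \vee Ry)$ evaluates to $1$ at any generic parameter $x$. The conclusion you need (equality of the two sums) still holds, because the surviving terms on the $A$- and $B$-sides match pairwise by the inductive hypothesis and the extra $b_*$-term is $0$ by zero-propagation; but the claim should be ``both sums are equal,'' not ``both are $0$.'' If you prefer to avoid this bookkeeping altogether, the paper's route — choosing the interpretations so that a single $\Trop$-endomorphism carries $\pi_A$ onto $\pi_B$ and all sentence values land in $\{0,\infty\}$ — disposes of the whole induction in a few lines.
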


	\begin{proof}
	Let $\pi_A$ and $\pi_B$ be $\Trop$-interpretations with just one unary predicate $R$ and  universes $A:=\{a_i :  i \in \N\}$ and $B:=\{b_i : i \in \N\}$,
	whose valuations are $\pi_A(Ra_i) = \pi_B (Rb_i) = 0$ if $i$ is even, while $\pi_A(Ra_i) = 1$ and $\pi_B(Rb_i) = 2$ for all odd $i$; since the interpretations are assumed to be model-defining this implies that $\pi_A(\lnot Ra_i) = \pi_B(\lnot Rb_i) = \infty$ for all $i \in \N$.
	Clearly, Spoiler wins $G_1(\pi_A, \pi_B)$.  To prove that $\pi_A \equiv \pi_B$, we first show that for each formula
	$\phi(\bar x)$ the valuations $\pi_A \llb \varphi(\bar{a}) \rrb$ and  $\pi_B \llb \varphi(\bar{b}) \rrb$
	can only take the values $0$ and $\infty$ if the tuples $\bar a$ and $\bar b$ only  consist of even elements
	$a_{2\ell}$ and $b_{2\ell}$. The reasoning is identical for both interpretations so we just consider $\pi_A$,
	and proceed by induction on $\phi(\bar x)$.
	For literals the claim holds by definition and for conjunctions and disjunctions it follows since $\{0,\infty\}$ is closed under the operations $\min$ and +.
	
	Consider $\varphi (\bar{x}) = \exists y \psi (\bar{x},y)$.
	For all $a \in A$ with $\pi_A (Ra)=0$, it follows by the induction hypothesis that $\pi_A \llb \psi(\bar{a},a) \rrb \in \{0,\infty\}$.
	If there is some $a \in A$ such that $\pi_A \llb \psi(\bar{a},a) \rrb=0$, it immediately follows that $\pi_A \llb \varphi (\bar{a}) \rrb = \inf_{a \in A} \pi_A \llb \psi (\bar{a},a) \rrb = 0$. Hence, it remains to show the claim for the case $\pi_A \llb \psi(\bar{a},a) \rrb = \infty$ for all $a \in A$ with $\pi_A (Ra)=0$.
	Fix some $c \in A$ that is not contained in $\bar a$ such that  $\pi_A (R c) = 0$. 
	For each $a \in A$ with $\pi_A (R a) = 1$ it holds, by monotonicity of the semiring operations,  that $\pi_A \llb \psi(\bar{a},a) \rrb \geq \pi_A \llb \psi(\bar{a},c) \rrb $ with respect to the usual order on $\R^\infty_+$ (which is the inverse of the natural order on $\Trop$)
	and since $\pi_A \llb \psi(\bar{a},c) \rrb=\infty $ we have that $\pi_A \llb \varphi (\bar{a}) \rrb = \inf_{a \in A} \pi_A \llb \psi (\bar{a},a) \rrb =\infty$.
	
	Finally, let $\varphi(\bar{x}) = \forall y \psi (\bar{x}, y)$. 
	Again, for all $a \in A$ with $\pi_A (Ra)=0$ it holds that $\pi_A \llb \psi(\bar{a},a) \rrb \in \{0,\infty\}$ by induction hypothesis.
	If there is an $a \in A$ such that $\pi_A \llb \psi(\bar{a},a) \rrb=\infty$, it immediately follows that $\pi_A \llb \varphi (\bar{a}) \rrb = \sum_{a \in A} \pi_A \llb \psi (\bar{a},a) \rrb = \infty$.
	Therefore it remains to show the claim for the case that $\pi_A \llb \psi(\bar{a},a) \rrb= 0$ for all $a \in A$ with $\pi_A (Ra)=0$.
	We observe that for all $a,a' \in A$ that do not occur in $\bar a$ with $\pi_A (Ra) = \pi_A (Ra')$ it holds that $(\pi_A , \bar{a}, a) \cong (\pi_A, \bar{a}, a')$.
	Hence, if there was some $a \in A$ with $\pi_A (Ra) = 1$ such that $\pi_A \llb \psi (\bar{a},a) \rrb = s$ for some $s>0$, then $\pi_A \llb \psi (\bar{a},a) \rrb =s$ would hold for all $a \in A$ with $\pi_A (Ra) = 1$, which implies $\pi_A \llb \varphi (\bar{a}) \rrb = \sum_{a \in A} \pi_A \llb \psi (\bar{a},a) \rrb = \infty$.
	Otherwise, we have that $\pi_A \llb \psi (\bar{a},a) \rrb = 0$ for all $a \in A$, thus $\pi_A \llb \varphi (\bar{a}) \rrb = 0$, which completes the induction.
	
	\medskip
	In particular we have for every sentence  $\varphi \in \fo (\{R\})$ that $\pi_A \llb \varphi \rrb, \pi_B \llb \varphi \rrb \in \{0,\infty\}$.
	We claim that $\pi_A \llb \varphi \rrb =\pi_B \llb \varphi \rrb$. 
	The function $h \colon \Trop \to \Trop$ defined by $s \mapsto 2s$ 
	is an endomorphism on $\Trop$ that is compatible with the infinitary operations, and obviously, $(h \circ \pi_A) \cong \pi_B$.
	If $\pi_A \llb \phi \rrb = 0$, then $\pi_B \llb \phi \rrb = 2 \cdot 0=0$ due to the fundamental property.
	Otherwise, $\pi_A \llb \phi \rrb = \infty = 2 \cdot \infty = \pi_B \llb \phi \rrb$, hence $\pi_A \equiv \pi_B$.
\end{proof}

The natural semiring does not admit infinitary operations, so we consider its extension~$\N^\infty$ instead. But on~$\N^{\infty}$,
counterexamples disproving completeness also exist.  
Despite the completeness of $m$-turn bijection games on the natural semiring $\N$, one can construct elementarily equivalent semiring interpretations 
with infinite universes on the extended semiring $\Ninf$, on
which Spoiler even wins the game $G_1$.  To prove this, we make use of the truncated  semiring $\N_{\leq 2}$ which
only contains the elements $\{0,1,2\}$ and sum and product of elements $s,t$ are given by $\min(s + t, 2)$ and $\min(s \cdot t, 2)$. Using the fact that the mapping $h \colon \N^\infty \to \N_{\leq 2}$ defined by $h \colon s \mapsto \min(s, 2)$ is a homomorphism, we can derive a method for proving elementary equivalence of infinite $\N^\infty$-interpretations as follows.
	
\begin{proposition} \label{prop-complNinf}
	Let $\pi_A$, $\pi_B$ be $\N^\infty$-interpretations and $h \colon \N^\infty \to \N_{\leq 2}$ with $n \mapsto \min(n,2)$. If the universes of $\pi_A$ and $\pi_B$ can be partitioned into infinite sets $\{A_i \colon i \in I\}$ and $\{B_i \colon i \in I\}$ such that for each $i \in I$ and  all $a,a' \in A_i$ and $b,b' \in B_i$,
	\begin{align*}
		(h \circ \pi_A, a) \equiv (h \circ \pi_A, a') \equiv (h \circ \pi_B, b) \equiv (h \circ \pi_B, b')
	\end{align*}
	then $\pi_A \equiv \pi_B$.
\end{proposition}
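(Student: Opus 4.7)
The plan is to exploit the semiring homomorphism $h\colon \N^\infty \to \N_{\leq 2}$, $n \mapsto \min(n, 2)$, which is compatible with the infinitary sum and product operations of $\N^\infty$. For any sentence $\phi$, applying the hypothesis with the trivial one-variable formula $\chi(x) \coloneqq \phi$ (whose free variable $x$ does not actually occur) and invoking the fundamental property supplies
\[
h(\pi_A \llb \phi \rrb) \;=\; (h \circ \pi_A) \llb \phi \rrb \;=\; (h \circ \pi_B) \llb \phi \rrb \;=\; h(\pi_B \llb \phi \rrb),
\]
so the task reduces to upgrading this $h$-level agreement to equality in $\N^\infty$.

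I would proceed by structural induction on $\phi$. The quantifier-free case collapses to $\top$ and $\bot$ (the vocabulary being relational and without constants) and is immediate. The Boolean cases $\phi_1 \wedge \phi_2$ and $\phi_1 \vee \phi_2$ follow from the inductive hypothesis applied componentwise, since both $+$ and $\cdot$ preserve equality in $\N^\infty$.

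The heart of the argument is the quantifier case $\phi = Qy\, \psi(y)$ with $Q \in \{\exists, \forall\}$ and $\psi(y)$ a formula in the single free variable $y$. Applying the hypothesis to $\psi(y)$ itself yields a class-invariant value $h_i \coloneqq (h \circ \pi_A) \llb \psi(a) \rrb$ for $a \in A_i$, which also equals $(h \circ \pi_B) \llb \psi(b) \rrb$ for $b \in B_i$. Since $h^{-1}(0) = \{0\}$, $h^{-1}(1) = \{1\}$, and $h^{-1}(2) = \{2, 3, \dots, \infty\}$, the value $h_i$ forces $\pi_A \llb \psi(a) \rrb$ to be identically $0$ on $A_i$ if $h_i = 0$, identically $1$ if $h_i = 1$, and uniformly $\ge 2$ if $h_i = 2$. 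The infinity of each class is then decisive: the partial sum over $A_i$ is $0$ when $h_i = 0$ and $\infty$ when $h_i \ge 1$ (infinitely many summands each $\ge 1$ diverge in $\N^\infty$), while the partial product is $0$, $1$, or $\infty$ according to whether $h_i$ equals $0$, $1$, or $2$ (using that multiplication is increasing on $\N^\infty$ for nonzero factors, so the supremum of finite subproducts of factors $\ge 2$ diverges). Since the family $(h_i)_{i \in I}$ is shared between $\pi_A$ and $\pi_B$, aggregating over $i \in I$ gives $\pi_A \llb \phi \rrb = \pi_B \llb \phi \rrb$, with both values landing in $\{0, 1, \infty\}$ on which $h$ is injective.

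The main obstacle is obtaining tight enough control over the range of sentence values in $\N^\infty$ without access to any multi-variable strengthening of the hypothesis. This is achieved by exploiting the infinity of each class: any nonzero class-wise contribution in a sum forces $\infty$, and any class-wise factor $\ge 2$ in a product also forces $\infty$. These divergences confine every quantified sub-sentence to $\{0, 1, \infty\}$, which is precisely where $h$ is injective and where the coarse $h$-level agreement from the hypothesis lifts to genuine equality in $\N^\infty$.
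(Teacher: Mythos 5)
Your proof is correct and follows essentially the same route as the paper's: apply the hypothesis (via the fundamental property for $h$) to get the class-invariant trichotomy of values into $\{0\}$, $\{1\}$ and $[2,\infty]$, use the infiniteness of each class to force the class-wise sums and products into $\{0,1,\infty\}$, and aggregate; the paper merely reduces directly to separating sentences of the form $Qx\,\psi(x)$ instead of wrapping the argument in a full structural induction.
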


\begin{proof}
	Let $\pi_A$, $\pi_B$ and partitions $\{A_i \colon i \in I\}$ and $\{B_i \colon i \in I\}$ be given as above. Applying the fundamental property yields for each $i \in I$ that
	\begin{bracketenumerate}
		\item $\pi_A \llb \psi(a) \rrb = 0$ $\Leftrightarrow$ $\pi_A \llb \psi(a') \rrb = 0$ $\Leftrightarrow$ $\pi_B \llb \psi(b) \rrb = 0$ $\Leftrightarrow$ $\pi_B \llb \psi(b') \rrb = 0$,
		\item $\pi_A \llb \psi(a) \rrb = 1$ $\Leftrightarrow$ $\pi_A \llb \psi(a') \rrb = 1$ $\Leftrightarrow$ $\pi_B \llb \psi(b) \rrb = 1$ $\Leftrightarrow$ $\pi_B \llb \psi(b') \rrb = 1$ and
		\item $\pi_A \llb \psi(a) \rrb \geq 2$ $\Leftrightarrow$ $\pi_A \llb \psi(a') \rrb \geq 2$ $\Leftrightarrow$ $\pi_B \llb \psi(b) \rrb \geq 2$ $\Leftrightarrow$ $\pi_B \llb \psi(b') \rrb \geq 2$
	\end{bracketenumerate}
	for all formulae $\psi(x) \in \fo(\tau)$ and all $a,a' \in A_i$ and $b,b' \in B_i$. This implies for each $i \in I$ and all formulae $\psi(x)$ that
	\begin{align*}
		\sum\limits_{a \in A_i} \pi_A \llb \psi(a) \rrb = \sum\limits_{b \in B_i} \pi_B \llb \psi(b) \rrb \ \text{ and } \ \prod\limits_{a \in A_i} \pi_A \llb \psi(a) \rrb = \prod\limits_{b \in B_i} \pi_B \llb \psi(b) \rrb,
	\end{align*}
	since each $A_i$ and $B_i$ is infinite. Thus, we have that $\pi_A \llb \varphi \rrb = \pi_B \llb \varphi \rrb$ for all sentences $\varphi = Q x \psi (x)$ with $Q \in \{\exists, \forall\}$. If $\pi_A$ and $\pi_B$ were not elementarily equivalent, they would be separable by a sentence of this form. Hence, it must hold that $\pi_A \equiv \pi_B$.
\end{proof}

With \cref{prop-complNinf} it is straightforward to construct elementarily equivalent $\N^\infty$-interpretations on which Spoiler wins the game $G$, or even $G_1$.
For instance, we can fix arbitrary infinite $\N^\infty$-interpretations $\pi_A$ and $\pi_B$ of vocabulary $\tau = \{R\}$ where $R$ is a unary relation symbol such that $\pi_A (Ra) \geq 2$, $\pi_B (Rb) \geq 2$ and $\pi_A (\lnot Ra) = \pi_B (\lnot Rb) = 0$ for all $a \in A$ and $b \in B$. This ensures $(h \circ \pi_A, a) \cong (h \circ \pi_A, a') \cong (h \circ \pi_B, b) \cong (h \circ \pi_B, b')$
for all $a,a' \in A$ and $b,b' \in B$.
By the isomorphism lemma, we can apply \cref{prop-complNinf} without partitioning $A$ and $B$ into smaller sets. We obtain that $\pi_A \equiv \pi_B$, regardless of whether $\pi_A$ and $\pi_B$ share even a single valuation with regard to $R$.

\begin{corollary}\label{cor-complNinf}
	There are $\mathbb{N}^\infty$-interpretations $\pi_A$ and $\pi_B$ such that Spoiler wins $G_1(\pi_A, \pi_B)$ although $\pi_A \equiv \pi_B$.
\end{corollary}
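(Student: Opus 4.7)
The plan is to construct the example explicitly along the lines of the paragraph preceding the corollary and then derive elementary equivalence as a direct application of \cref{prop-complNinf}. I would fix the vocabulary $\tau = \{R\}$ with $R$ unary and pick two disjoint countably infinite universes $A$ and $B$. Define $\pi_A$ by $\pi_A(Ra) = 2$ and $\pi_A(\lnot Ra) = 0$ for every $a \in A$, and $\pi_B$ by $\pi_B(Rb) = 3$ and $\pi_B(\lnot Rb) = 0$ for every $b \in B$. Both interpretations are model-defining.

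First I would verify that Spoiler wins $G_1(\pi_A, \pi_B)$. After any first move in which Spoiler picks an element $a \in A$ (or symmetrically $b \in B$), Duplicator's answer must induce a local isomorphism between a singleton and a singleton, which would require $\pi_A(Ra) = \pi_B(Rb)$; but the left side is always $2$ and the right side always $3$, so Duplicator cannot win.

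For elementary equivalence I would invoke \cref{prop-complNinf} with the homomorphism $h\colon \N^\infty \to \N_{\leq 2},\ n \mapsto \min(n,2)$, and with the trivial one-block partitions $\{A\}$ of $A$ and $\{B\}$ of $B$. Both blocks are infinite as required. The interpretation $h \circ \pi_A$ then assigns the constant value $2$ to every atom $Ra$ and $0$ to every $\lnot Ra$, while $h \circ \pi_B$ assigns $\min(3,2) = 2$ to every atom $Rb$ and $0$ to every $\lnot Rb$. Hence any bijection $A \to B$ is an isomorphism between $h \circ \pi_A$ and $h \circ \pi_B$, and any bijection of $A$ (respectively $B$) fixing its image is an automorphism. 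By the isomorphism lemma, we get $(h \circ \pi_A, a) \equiv (h \circ \pi_A, a') \equiv (h \circ \pi_B, b) \equiv (h \circ \pi_B, b')$ for all $a,a' \in A$ and $b,b' \in B$. The hypothesis of \cref{prop-complNinf} is therefore satisfied, which yields $\pi_A \equiv \pi_B$.

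There is essentially no obstacle here: all the work has already been done in \cref{prop-complNinf}, and the corollary is a matter of exhibiting suitable witnesses. The only subtlety worth flagging is that the truncation $h$ collapses the distinction between $2$ and $3$ but preserves the $0$-versus-nonzero distinction, which is precisely what is needed so that a single partition block suffices and the elementary-equivalence hypothesis of \cref{prop-complNinf} is trivially met.
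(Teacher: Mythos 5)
Your proposal is correct and follows essentially the same route as the paper: the text preceding the corollary constructs exactly this kind of example (arbitrary infinite interpretations with $\pi_A(Ra)\geq 2$, $\pi_B(Rb)\geq 2$ and zero-valued negations), applies the truncation homomorphism $h$ with the trivial one-block partition, and concludes via \cref{prop-complNinf}. Your choice of the concrete values $2$ and $3$ is just a specific instance of that recipe, and your check that Spoiler wins $G_1$ is the same observation the paper makes.
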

	
Consequently, completeness of $G$ for $\equiv$ also fails on any semiring which extends $\N[X]$ and admits infinitary operations if it contains $\N^\infty$ as a subsemiring.

\section{The homomorphism game}
\label{sect:homgame}

Finally, we propose a new kind of model comparison games referred to as \emph{homomorphism games}. 
The idea is to reduce a given pair of $\Semi$-interpretations to $\Bool$-interpretations via homomorphisms.
In general, the resulting $\Bool$-interpretations are no longer model-defining, which is why their $m$-equivalence is not captured by $G_m$.
While soundness of $G_m$ for $\equiv_m$ on fully idempotent semirings $\Semi$ does not rely on the assumption that the $\Semi$-interpretations are model-defining, completeness for $\equiv_m$ even fails on $\Bool$, because a priori there is no connection between literals and their negations (see \cref{sec:1equivIncomplBint} for a counterexample). 
Thus, we consider a one-sided variant of the Ehrenfeucht--Fraïssé game, which yields a characterisation of $m$-equivalence for $\Bool$-interpretations without requiring them to be model-defining.

\subsection{One-sided games and separating homomorphism sets}

Consider two $\Semi$-interpretations $\pi_A$, $\pi_B$ and
let~$\Semi$ be naturally ordered by~$\le$.
We say that $(\pi_A, \bar{a}) \leq (\pi_B, \bar{b})$ if for every literal $L(\bar{x})$ we have $\pi_A (L(\bar{a})) \leq \pi_B (L(\bar{b}))$.
Further, we say that $(\pi_A, \bar{a}) \preceq_m (\pi_B, \bar{b})$ if $\pi_A \llb \varphi (\bar{a}) \rrb \leq \pi_B \llb \varphi (\bar{b}) \rrb$ holds for any formula $\phi (\bar{x})$ of quantifier rank at most $m$.

\begin{definition}
The one-sided game $G^\leq_m (\pi_A, \pi_B)$ is played
in the same way as $G_m (\pi_A, \pi_B)$,
but the winning condition for Duplicator,
assuming that the tuples $\bar{a}, \bar{b}$ were chosen after $m$ moves,
is extenuated to $(\pi_A, \bar{a}) \leq (\pi_B, \bar{b})$
instead of $(\pi_A, \bar{a}) \equiv_0 (\pi_B, \bar{b})$.
\end{definition}

Using monotonicity of both semiring operations with respect to the natural order, we obtain the following soundness result, which can be proved analogously to \cref{thm-soundIdem}.

\begin{proposition}
	Let $\Semi$ be any fully idempotent semiring. Then $G^\leq_m$ is sound for $\preceq_m$ on $\Semi$.
\end{proposition}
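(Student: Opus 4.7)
The plan is to mirror the inductive proof of \cref{thm-soundIdem}, systematically replacing every occurrence of $\neq$ with the strict failure $\not\leq$ of the natural order. Specifically, by contraposition, I would prove by structural induction on a formula $\varphi(\bar{x})$ with $\qr(\varphi) \leq m$ that if $\pi_A \llb \varphi(\bar{a}) \rrb \not\leq \pi_B \llb \varphi(\bar{b}) \rrb$, then Spoiler has a winning strategy in $G^\leq_m(\pi_A, \bar{a}, \pi_B, \bar{b})$.

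The literal base case follows directly from the modified winning condition. For $\varphi = \psi_1 \lor \psi_2$, monotonicity of addition in the natural order propagates the strict failure: if $\pi_A \llb \psi_i(\bar{a}) \rrb \leq \pi_B \llb \psi_i(\bar{b}) \rrb$ held for both $i \in \{1, 2\}$, monotonicity would force $\leq$ on the sums, contradicting the hypothesis; hence at least one disjunct still separates, and its quantifier rank being at most $m$ lets the induction hypothesis produce Spoiler's strategy. The conjunction case is symmetric via monotonicity of multiplication.

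The quantifier cases exploit full idempotence to reduce sums and products to functions of the underlying sets $V_A \coloneqq \{\pi_A \llb \psi(\bar{a}, a) \rrb : a \in A\}$ and $V_B \coloneqq \{\pi_B \llb \psi(\bar{b}, b) \rrb : b \in B\}$ of valuations. In the existential case, additive idempotence in a naturally ordered semiring identifies the sum with the join in the natural order, so $\pi_A \llb \exists x\, \psi \rrb \not\leq \pi_B \llb \exists x\, \psi \rrb$ yields some $a \in A$ with $\pi_A \llb \psi(\bar{a}, a) \rrb$ not dominated by any element of $V_B$; Spoiler plays this $a$, and the induction hypothesis applied to $\psi$ (of quantifier rank at most $m-1$) finishes the subgame for any Duplicator response.

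The universal case is where I expect the main obstacle, since multiplication in a fully idempotent semiring need not correspond to a meet in the natural order. Here I would argue by contradiction: if no single move by Spoiler were winning, then every $v \in V_A$ would be dominated by some $v' \in V_B$ and every $v' \in V_B$ would dominate some $v \in V_A$. Using monotonicity of multiplication to transport each factor into the other side, combined with idempotence (which collapses any multi-indexed product to the product over its underlying set of values), these mutual cofinality conditions force $\prod V_A \leq \prod V_B$, contradicting the hypothesis $\pi_A \llb \forall x\, \psi(\bar{a}, x) \rrb \not\leq \pi_B \llb \forall x\, \psi(\bar{b}, x) \rrb$. Hence Spoiler has a separating move, and the strict failure $\not\leq$ persists at the new position, so the induction hypothesis applied to $\psi$ completes the argument.
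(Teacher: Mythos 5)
Your proposal is correct and is essentially the argument the paper intends: the paper only remarks that the statement ``can be proved analogously to \cref{thm-soundIdem} using monotonicity of both semiring operations,'' and your contrapositive induction, with $\not\leq$ in place of $\neq$, is exactly that adaptation. The one step worth writing out explicitly is your universal-quantifier case: the two cofinality conditions give $\prod V_A \leq \prod V_A \cdot \prod V_B \leq \prod V_B$, where the left inequality replaces each factor of $\prod V_B$ by a dominated element of $V_A$ and collapses the resulting multiset via multiplicative idempotence, and the right inequality performs the symmetric replacement into $V_B$ --- note that the naive one-sided transport $\prod V_A \leq \prod f(V_A)$ with $f(V_A) \subseteq V_B$ would not suffice on its own, since a fully idempotent semiring need not be absorptive and a product over a subset of $V_B$ need not lie below $\prod V_B$.
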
  

On $\Bool$, the one-sided game $G^\leq_m$ is also complete for $\preceq_m$ even for $\Bool$-interpretations that are not model-defining.
To prove this, we inductively construct characteristic formulae $\chi^m_{\pi_A, \bar{a}} (\bar{x})$ analogous to the classical Ehrenfeucht--Fraïssé theorem, but we omit literals $\lnot R \bar{x}$ in $\chi^0_{\pi_A, \bar{a}} (\bar{x})$ if $\pi_A (R \bar{a}) = 0$. Let $\varphi^=_{\bar{a}}(\bar{x})$ define the equalities and inequalities of the elements in $\bar{a}$.
\begin{alignat*}{1}
	\chi^0_{\pi_A, \bar{a}} (\bar{x}) &\coloneqq  \varphi^=_{\bar{a}}(\bar{x}) \wedge \bigwedge \{ L(\bar{x}) \in \operatorname{Lit}_n(\tau) : \pi_A (L(\bar{a})) = 1\}
	\\
	\chi^{m+1}_{\pi_A, \bar{a}} (\bar{x}) &\coloneqq \bigwedge\limits_{a \in A} \exists x \ \chi^m_{\pi_A, \bar{a},a} (\bar{x}, x) \wedge \forall x \ \bigvee\limits_{a \in A} \chi^m_{\pi_A, \bar{a}, a} (\bar{x}, x)
\end{alignat*}

\begin{theorem} \label{thm-modEF}
	For any two $\Bool$-interpretations $\pi_A$ and $\pi_B$ with elements $\bar{a} \in A^n$ and $\bar{b} \in B^n$ and any $m \in \N$, the following are equivalent:
	
	\begin{bracketenumerate}
		\item Duplicator wins $G^{\leq}_m(\pi_A, \bar{a}, \pi_B, \bar{b})$;
		
		\item $\pi_B \llb \chi^m_{\pi_A, \bar{a}} (\bar{b}) \rrb = 1$;
		
		\item $(\pi_A, \bar{a}) \preceq_m (\pi_B, \bar{b})$.
	\end{bracketenumerate}
\end{theorem}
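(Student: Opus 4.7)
The plan is to close the cycle $(1) \Rightarrow (3) \Rightarrow (2) \Rightarrow (1)$, mimicking the classical Ehrenfeucht--Fra\"iss\'e argument but adjusted to the one-sided setting. The core calibration is that $\chi^0_{\pi_A, \bar{a}}$ mentions only the positive literals of $\bar{a}$ in $\pi_A$: this is exactly what makes the game complete for the preorder $\preceq_m$ without requiring the interpretations to be model-defining.

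For $(1) \Rightarrow (3)$, I would proceed by structural induction on a formula $\varphi(\bar{x})$ of quantifier rank at most $m$, showing that a winning strategy for Duplicator in $G^{\leq}_m(\pi_A, \bar{a}, \pi_B, \bar{b})$ forces $\pi_A \llb \varphi(\bar{a}) \rrb \leq \pi_B \llb \varphi(\bar{b}) \rrb$. Literals are handled by the winning condition $(\pi_A, \bar{a}) \leq (\pi_B, \bar{b})$; the Boolean connectives by monotonicity of $\vee$ and $\wedge$ in the natural order on $\Bool$; and quantifiers by the standard move-and-recurse argument applied to $\exists$ and, dually, to $\forall$, using that Duplicator's response restricts her to a winning strategy in the shorter subgame.

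For $(3) \Rightarrow (2)$, the decisive ingredient is the self-satisfaction lemma $\pi_A \llb \chi^m_{\pi_A, \bar{a}}(\bar{a}) \rrb = 1$, shown by induction on $m$. At level $0$, this is immediate, since $\chi^0_{\pi_A, \bar{a}}$ collects exactly the equality type of $\bar{a}$ together with the literals that evaluate to $1$ under $\pi_A$. At level $m+1$, the existential conjunct $\bigwedge_{a \in A} \exists x\, \chi^m_{\pi_A, \bar{a}, a}$ is witnessed term-by-term using the induction hypothesis, and each value at $x = a'$ in the universal conjunct $\forall x \bigvee_{a \in A} \chi^m_{\pi_A, \bar{a}, a}$ is covered by the disjunct indexed by $a'$. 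Since $\qr(\chi^m_{\pi_A, \bar{a}}) = m$, applying $\preceq_m$ then yields $\pi_B \llb \chi^m_{\pi_A, \bar{a}}(\bar{b}) \rrb \geq 1$, which in $\Bool$ means equality.

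For $(2) \Rightarrow (1)$, I would build Duplicator's strategy by induction on $m$. The base case unpacks $\pi_B \llb \chi^0_{\pi_A, \bar{a}}(\bar{b}) \rrb = 1$ directly into $(\pi_A, \bar{a}) \leq (\pi_B, \bar{b})$: every positive literal satisfied in $\pi_A$ at $\bar{a}$ is satisfied in $\pi_B$ at $\bar{b}$, the equality type matches by $\varphi^=_{\bar{a}}$, and nothing is demanded about literals that evaluate to $0$ in $\pi_A$. For the step, when Spoiler plays $a \in A$ the existential conjunct supplies a $b \in B$ with $\pi_B \llb \chi^m_{\pi_A, \bar{a}, a}(\bar{b}, b) \rrb = 1$, and when Spoiler plays $b \in B$ the universal conjunct supplies an $a \in A$ with the same property; the induction hypothesis then provides Duplicator's continuation in the resulting subgame.

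The main subtlety I expect is ensuring that the base case of $(2) \Rightarrow (1)$ really does yield the one-sided condition rather than equality of literals, i.e.\ verifying that the deliberate omission of negative literals in $\chi^0_{\pi_A, \bar{a}}$ is both necessary and sufficient: necessary because otherwise $\pi_A \llb \chi^0_{\pi_A, \bar{a}}(\bar{a}) \rrb$ might not even equal $1$ when $\pi_A$ is not model-defining, and sufficient because $\preceq_m$ tolerates extra $1$-values in $\pi_B$. Keeping this asymmetry consistent throughout the recursive construction of $\chi^{m+1}_{\pi_A, \bar{a}}$ is what enables the result to apply to arbitrary $\Bool$-interpretations, not just model-defining ones.
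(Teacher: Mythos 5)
Your proof is correct and follows exactly the route the paper intends: the paper states this theorem without an explicit proof, having set up the characteristic formulae $\chi^m_{\pi_A,\bar a}$ precisely for the cycle $(1)\Rightarrow(3)\Rightarrow(2)\Rightarrow(1)$ that you carry out. Your two key points --- the self-satisfaction lemma $\pi_A\llb \chi^m_{\pi_A,\bar a}(\bar a)\rrb = 1$, and the observation that omitting the literals with value $0$ is exactly what keeps this lemma true for non-model-defining interpretations while still forcing the one-sided condition $(\pi_A,\bar a)\le(\pi_B,\bar b)$ in the base case of $(2)\Rightarrow(1)$ --- are the right ones.
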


To derive homomorphism games from one-sided games $G_m^\leq$ on $\Bool$-interpretations, we make use of separating sets of homomorphisms,
which were introduced in \cite{GraedelMrk21}.

\begin{definition}
	Given semirings $\Semi$ and $\Semi'$, a set $H$ of homomorphisms from $\Semi$ to $\Semi'$ is called \emph{separating} if for all $s, t \in S$ with $s \neq t$ there is some $h \in H$ with $h(s) \neq h(t)$.
\end{definition}

For two given $\Semi$-interpretations $\pi_A$ and $\pi_B$ which are separable by some sentence $\psi$, we can think of the valuations $s \neq t$ of $\psi$ in $\pi_A$ and $\pi_B$, respectively, as witnesses for the separability of $\pi_A$ and $\pi_B$.
Further, whenever there is a homomorphism $h$ such that $h(s) \neq h(t)$ and $(h \circ \pi_A) \equiv_m (h \circ \pi_B)$, we can exclude the pair $(s,t)$ as a candidate for witnessing $\pi_A \not\equiv_m \pi_B$ due to the fundamental property.
Thus, separating sets of homomorphisms yield the following reduction technique.

\begin{lemma} \label{lem-sepHom}
	Let $\Semi$ and $\Semi'$ be semirings and $H$ a separating set of homomorphisms from $\Semi$ to $\Semi'$. Moreover let $\pi_A, \pi_B$ be $\Semi$-interpretations, $\bar{a} \in A^n$ and $\bar{b} \in B^n$. It holds that $(h \circ \pi_A, \bar{a}) \equiv_m (h \circ \pi_B, \bar{b})$ for all $h \in H$ if, and only if, $(\pi_A, \bar{a}) \equiv_m (\pi_B, \bar{b})$.
\end{lemma}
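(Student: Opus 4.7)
The plan is to prove both implications separately, in each case reducing to a single formula $\varphi(\bar{x})$ of quantifier rank at most $m$ and invoking the Fundamental Property of semiring semantics, which ensures that homomorphisms commute with the evaluation of first-order formulae.

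For the direction $(\pi_A,\bar{a})\equiv_m(\pi_B,\bar{b})\Rightarrow (h\circ\pi_A,\bar{a})\equiv_m(h\circ\pi_B,\bar{b})$ for every $h\in H$, I would fix an arbitrary $h\in H$ and an arbitrary $\varphi(\bar{x})\in\fo(\tau)$ with $\qr(\varphi)\le m$. By assumption $\pi_A\llb\varphi(\bar{a})\rrb=\pi_B\llb\varphi(\bar{b})\rrb$, so applying $h$ to both sides and then using the Fundamental Property yields $(h\circ\pi_A)\llb\varphi(\bar{a})\rrb=(h\circ\pi_B)\llb\varphi(\bar{b})\rrb$. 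Note that this direction does not require $H$ to be separating; it only requires each element of $H$ to be a semiring homomorphism.

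For the converse, I would argue by contraposition. Suppose $(\pi_A,\bar{a})\not\equiv_m(\pi_B,\bar{b})$ and pick a witness $\varphi(\bar{x})$ with $\qr(\varphi)\le m$ and $s:=\pi_A\llb\varphi(\bar{a})\rrb\neq\pi_B\llb\varphi(\bar{b})\rrb=:t$. Because $H$ is separating, there exists $h\in H$ with $h(s)\neq h(t)$. Applying the Fundamental Property once more, we obtain $(h\circ\pi_A)\llb\varphi(\bar{a})\rrb=h(s)\neq h(t)=(h\circ\pi_B)\llb\varphi(\bar{b})\rrb$, so $(h\circ\pi_A,\bar{a})\not\equiv_m(h\circ\pi_B,\bar{b})$, and $\varphi$ itself certifies this inequivalence at quantifier rank $m$.

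There is really no deep obstacle here: the statement is essentially a syntactic consequence of the Fundamental Property combined with the definition of a separating set. The only thing to be careful about is that $h\circ\pi_A$ and $h\circ\pi_B$ are genuine $\Semi'$-interpretations, which is immediate from the fact that $h$ is a semiring homomorphism (and noted in the statement of the Fundamental Property), so that the formula-valuation computed with respect to them is well defined; once this is observed, both implications reduce to a one-line application of $h(\pi\llb\varphi(\bar{a})\rrb)=(h\circ\pi)\llb\varphi(\bar{a})\rrb$.
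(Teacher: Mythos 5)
Your proof is correct and follows exactly the reasoning the paper sketches in the paragraph preceding the lemma (the paper omits a formal proof): the forward direction is a direct application of the Fundamental Property to each $h$, and the converse uses the separating property of $H$ to find an $h$ distinguishing the two witness values. Your observation that separation is only needed for one direction is accurate and a nice touch.
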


Based on a separating set $H$ of homomorphisms $h \colon \Semi \to \Bool$, the \emph{homomorphism game} $\HG_m (H, \pi_A, \pi_B)$ can be defined as follows. Spoiler first chooses some $h \in H$ and puts either $\pi_0 = h \circ \pi_A$ and  $\pi_1 = h \circ \pi_B$, or the other way around, i.e. $\pi_0 = h \circ \pi_B$ and  $\pi_1 = h \circ \pi_A$. Then the game $G^\leq_m (\pi_0, \pi_1)$ is played.
Using the fact that $G^\leq_m$ is sound and complete for~$\preceq_m$ even on $\Bool$-interpretations which are not model-defining, soundness and completeness of $HG_m$ for $\equiv_m$ can be stated as follows.

\begin{theorem}
	Let $\Semi$ be a semiring with a separating set $H$ of homomorphisms into $\Bool$. Given $\Semi$-interpretations $\pi_A, \pi_B$ and $\bar{a} \in A^n, \bar{b} \in B^n$, the following are equivalent for $m \in \N$:
	\begin{bracketenumerate}
		\item Duplicator wins $\HG_m(H, \pi_A, \bar{a}, \pi_B, \bar{b})$;
		\item $h(\pi_B \llb \chi^m_{h \circ \pi_A, \bar{a}} (\bar{b}) \rrb) = h(\pi_A \llb \chi^m_{h \circ \pi_B, \bar{b}} (\bar{a}) \rrb) = 1$ for each $h \in H$;
		\item $(\pi_A, \bar{a}) \equiv_m (\pi_B, \bar{b})$.
	\end{bracketenumerate}
\end{theorem}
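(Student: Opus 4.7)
The plan is to assemble the three-way equivalence from two ingredients already at hand: \cref{thm-modEF}, which characterises winning $G^{\leq}_m$ on $\Bool$-interpretations both combinatorially and via the characteristic formulae, and \cref{lem-sepHom}, which reduces $\equiv_m$ on $\Semi$ to $\equiv_m$ on $\Bool$ after pushing through every homomorphism in a separating set $H$. The key structural observation is that Spoiler's initial move in $\HG_m$ has \emph{two} degrees of freedom: the homomorphism $h \in H$ and the orientation (whether $\pi_0 = h\circ\pi_A$ or $\pi_0 = h\circ\pi_B$). Hence Duplicator wins $\HG_m(H,\pi_A,\bar a,\pi_B,\bar b)$ if, and only if, for every $h \in H$ she wins \emph{both} one-sided games $G^{\leq}_m(h\circ\pi_A,\bar a,h\circ\pi_B,\bar b)$ and $G^{\leq}_m(h\circ\pi_B,\bar b,h\circ\pi_A,\bar a)$.

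For the implication $(1)\Leftrightarrow(2)$ I would apply \cref{thm-modEF} to each of the two one-sided games separately. Winning the first translates to $(h\circ\pi_B)\llb\chi^m_{h\circ\pi_A,\bar a}(\bar b)\rrb = 1$, and by the Fundamental Property this value equals $h(\pi_B\llb\chi^m_{h\circ\pi_A,\bar a}(\bar b)\rrb)$; symmetrically for the other orientation. Taking the conjunction over all $h \in H$ yields exactly condition~(2).

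For $(1)\Leftrightarrow(3)$ I would again invoke \cref{thm-modEF}, but through its third equivalent formulation: winning both one-sided games for a fixed $h$ is equivalent to $(h\circ\pi_A,\bar a)\preceq_m(h\circ\pi_B,\bar b)$ and $(h\circ\pi_B,\bar b)\preceq_m(h\circ\pi_A,\bar a)$. Since $\Bool$-valuations live in $\{0,1\}$, the conjunction of these two $\preceq_m$-inequalities is equivalent to the equality $(h\circ\pi_A,\bar a)\equiv_m(h\circ\pi_B,\bar b)$. Ranging over all $h \in H$ and applying \cref{lem-sepHom} then collapses this to $(\pi_A,\bar a)\equiv_m(\pi_B,\bar b)$, which is (3).

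The main obstacle is purely bookkeeping: one must keep track of which $\Bool$-interpretation plays the role of ``$\pi_A$'' in the asymmetric statement of \cref{thm-modEF} in each of the two orientations Spoiler may choose, and then verify that the antisymmetry of $\leq$ on $\Bool$ indeed lifts mutual $\preceq_m$ to $\equiv_m$. All substantive combinatorial and formula-theoretic content has already been absorbed into \cref{thm-modEF} and \cref{lem-sepHom}, so no new induction or model-theoretic construction is needed.
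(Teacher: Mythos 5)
Your proposal is correct and matches the paper's intended argument exactly: the paper states this theorem without a separate proof, deriving it precisely from the decomposition of $\HG_m$ into the two one-sided games $G^{\leq}_m$ for each $h \in H$, the characterisation in \cref{thm-modEF}, the Fundamental Property, and \cref{lem-sepHom}. Nothing is missing.
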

	
\subsection{Homomorphisms from lattice semirings}

Motivated by Birkhoff's and Stone's representation theorems \cite{Birkhoff67, Stone38}, we present two explicit constructions of a separating set of homomorphisms from lattice semirings (i.e. fully idempotent and absorptive semiring) into $\Bool$, the first of which applies to finite lattice semirings only, and embed the sets into the rules of the homomorphism game.
Indeed, every semiring for which there is a separating set of homomorphisms to $\Bool$ must be a lattice semiring, since for every homomorphism $h \colon \Semi \to \Bool$ and $s,t \in S$, we have $h(s \cdot s)=h(s) \wedge h(s)=h(s)$ and $h(s + st) = h(s) \vee (h(s) \wedge k(t)) = h(s)$.
Due to absorption, we assume that the infinitary operations of a lattice semiring are given by $\sum_{i \in I} s_i := \sup \{\sum_{i \in I'} s_i | I' \subseteq I \text{ finite}\}$ and $\prod_{i \in I} s_i := \inf \{\prod_{i \in I'} s_i | I' \subseteq I \text{ finite}\}$.

\subparagraph*{Finite lattice semirings}
We construct a separating set of homomorphisms $h_s \colon \Semi \to \Bool$ which depend on a certain semiring element $s \in \Semi$.
In order to ensure the compatibility of $h_s$ with addition in $\Semi$, the element $s$ must be indecomposable with respect to addition in the following sense.

\begin{definition}
	Let $\Semi$ be a finite lattice semiring. A non-zero element $s \in \Semi$ is said to be \emph{$+$-indecomposable} if for all $r,t \in \Semi$ with $r \neq s$ and  $t \neq s$ it holds that $r+t\neq s$. We denote the set of non-zero $+$-indecomposable of elements in $\Semi$ as $idc(\Semi)$.
\end{definition}

In a min-max semiring, for instance, every non-zero element is $+$-indecomposable. By contrast, the $+$-indecomposable elements in $\operatorname{PosBool}[X]$ correspond to the monomials.

\begin{lemma}
	For each $s \in idc(\Semi)$ the mapping $h_s \colon \Semi \to \Bool$ defined by
	\begin{align*}
		h_s(t)=\left\{\begin{array}{ll} 1, &t+s = t \\
			0, & \text{otherwise}\end{array}\right.
	\end{align*}
	is a homomorphism from $\Semi$ into $\Bool$.
\end{lemma}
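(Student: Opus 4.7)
The plan is to translate the condition $t + s = t$ into the natural order on $\Semi$: since $\Semi$ is fully idempotent and absorptive, $+$ is the join and $\cdot$ is the meet of the underlying distributive lattice, so $h_s(t) = 1$ precisely when $s \leq t$. The map $h_s$ is therefore the indicator function of the principal filter generated by $s$, and proving that it is a semiring homomorphism amounts to checking that this filter contains $1$ but not $0$, is closed under meets, and behaves correctly under joins.

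The constants are immediate: $h_s(0) = 0$ since $0 + s = s \neq 0$ forces $s \not\leq 0$, and $h_s(1) = 1$ because absorption gives $1 + s \cdot 1 = 1$, exhibiting $1$ as the top element. For multiplication I would first record the useful equivalence $s \leq a \Leftrightarrow sa = s$: distributing yields $sa = s(s+a) = s + sa$ (using $s \cdot s = s$), so $s \leq sa$, while $sa \leq s$ holds by absorption; the converse direction is $a + s = a + sa = a$. Multiplicativity then reads $s \leq ab \Leftrightarrow (s \leq a \text{ and } s \leq b)$: the $(\Leftarrow)$ direction is $s = s \cdot s \leq a \cdot s \leq a \cdot b$ by monotonicity of $\cdot$, and $(\Rightarrow)$ follows from $ab \leq a$ and $ab \leq b$. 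Hence $h_s(ab) = h_s(a) \wedge h_s(b)$.

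The main step, and the only one using $+$-indecomposability, is additivity. The direction $h_s(a) \vee h_s(b) \leq h_s(a+b)$ is immediate from the monotonicity of $+$. For the converse, suppose $s \leq a+b$, equivalently $s(a+b) = s$. Distributivity yields
\[
    s \;=\; s(a+b) \;=\; sa + sb,
\]
and $sa, sb \leq s$ by absorption. The definition of $+$-indecomposability then forces $sa = s$ or $sb = s$, i.e.\ $s \leq a$ or $s \leq b$, whence $h_s(a+b) = h_s(a) \vee h_s(b)$.

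I expect no serious obstacle: the only genuinely semiring-theoretic bookkeeping is the equivalence between the two descriptions of $s \leq a$ (as $s + a = a$ and as $sa = s$), without which the crucial identity $s = sa + sb$ cannot be set up for the appeal to $+$-indecomposability. Once that is in place, distributivity, absorption, and a single application of the assumption $s \in idc(\Semi)$ close the argument.
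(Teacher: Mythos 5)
Your proof is correct and follows essentially the same route as the paper's: both reduce additivity to the identity $s = sa + sb$ (obtained from $s + (a+b) = a+b$ via absorption and distributivity) and then invoke $+$-indecomposability, and both handle multiplication by the same absorption/distributivity manipulations, which you merely repackage through the order-theoretic equivalence $s \leq a \Leftrightarrow sa = s$. The only point you omit is the paper's closing observation that $h_s$ is also compatible with the infinitary operations, which is immediate here because a finite fully idempotent semiring has no genuinely infinite sums or products.
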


\begin{proof}
	Let $s \in idc(\Semi)$ be non-zero and $+$-indecomposable. 
	\begin{bracketenumerate}
		\item Since $0+s=s \neq 0$, it holds that $h_s(0) = 0$. Further, we have that $1 + s = 1 + 1 \cdot s = 1$ due to absorption, hence $h_s(1) = 1$.
		\item In order to prove that $h_s(r + t) = h_s(r) + h_s(t)$ for all $r, t \in \Semi$, it remains to show that $s + (r+t) = r + t$ is equivalent to $s + r = r$ or $s +t = t$.
		If $s + (r+t) = r + t$, then with absorption and distributivity $s r + s t = s(r+t) = s(s+r+t) =s + s(r + t) = s$. Since $s$ is $+$-indecomposable by assumption, this implies $s r = s$ or $s t=s$.
		Suppose w.l.o.g. that $s r = s$ which yields $r = r + s r = r + s$.
		For the converse implication, assume that $r +s = r$ or $t + s = t$. Clearly, both implications immediately yield $s + (r+t) = r + t$.
		\item To derive $h_s(r\cdot t) = h_s(r) \cdot h_s(t)$, we show that $s + r t = r t$ is equivalent to $s + r= r$ and $s + t = t$. If $s + r t = r t$, we can infer that $s+ r = s+(r+ rt) = (s+ rt)+r= rt + r = r$ and an analogous result for $t$.
		Conversely, suppose that $s +r = r$ and $s + t= t$. Then, $rt = (s+r) (s+ t) = s + (r \cdot t)$ follows by distributivity.
		\item Pertaining to the compatibility of $h_s$ with infinitary operations in $\Semi$, note that any infinite sum or product can be transformed into a finite sum or product due to full idempotence and the assumption that~$\Semi$ is finite.
		Thus, the proof is already complete. \qedhere
	\end{bracketenumerate} 
\end{proof}

Although we only consider the mappings $h_s$ for $+$-indecomposable $s$ to ensure that $h_s$ is a homomorphism, any two elements in $\Semi$ can be separated by some $h_s$.

\begin{lemma}
	The set $\{h_s \colon s \in idc(\Semi)\}$ is a separating set of homomorphisms from $\Semi$ to $\Bool$.
\end{lemma}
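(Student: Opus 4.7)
The plan is to reduce the separation claim to a purely order-theoretic statement and then extract an indecomposable via a minimality argument. In any idempotent semiring naturally ordered by addition, the equality $t + s = t$ is precisely $s \leq t$. So $h_s(t) = 1$ if and only if $s \leq t$, and the task becomes: for any two distinct $r, t \in \Semi$, exhibit some $s^* \in idc(\Semi)$ with $s^* \leq r$ and $s^* \not\leq t$, or symmetrically $s^* \leq t$ and $s^* \not\leq r$. Since $\leq$ is a partial order and $r \neq t$, we cannot have both $r \leq t$ and $t \leq r$, so after possibly swapping the roles of $r$ and $t$ we may assume $r \not\leq t$.

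Next I would consider the set $X := \{s \in \Semi : s \leq r \text{ and } s \not\leq t\}$. This set is non-empty, since $r$ itself lies in $X$, and it does not contain $0$ because $0 \leq t$ holds automatically. Because $\Semi$ is finite, $X$ contains an element $s^*$ that is minimal with respect to $\leq$. By construction $s^* \neq 0$, $s^* \leq r$, and $s^* \not\leq t$, so that $h_{s^*}(r) = 1$ and $h_{s^*}(t) = 0$. It only remains to verify that $s^*$ is in $idc(\Semi)$, i.e. that $s^*$ is $+$-indecomposable.

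Here is the main step: suppose for contradiction that $s^* = a + b$ with $a \neq s^*$ and $b \neq s^*$. From $s^* = a + b$ we have $a + s^* = a + (a + b) = a + b = s^*$ and symmetrically $b + s^* = s^*$, so $a \leq s^*$ and $b \leq s^*$, which together with $a, b \neq s^*$ gives $a < s^*$ and $b < s^*$ strictly. In particular $a \leq r$ and $b \leq r$. The minimality of $s^*$ in $X$ then forces $a \leq t$ and $b \leq t$, whence $s^* = a + b \leq t$, contradicting $s^* \not\leq t$. Hence $s^* \in idc(\Semi)$, and the homomorphism $h_{s^*}$ separates $r$ from $t$.

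The argument is essentially the classical Birkhoff observation that every element of a finite distributive lattice is determined by the join-irreducibles below it; translated into the semiring formulation, $+$-indecomposability of $s^*$ is exactly join-irreducibility. The only delicate point is the verification that $a, b \neq s^*$ together with $s^* = a + b$ really forces $a, b < s^*$ strictly, which follows immediately from the definition of the natural order; the rest is a clean minimality argument that crucially uses finiteness of $\Semi$ to guarantee a minimal element in $X$.
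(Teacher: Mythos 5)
Your proof is correct, but it takes a different route from the paper's. The paper first establishes a Birkhoff-style representation: for every $t \in \Semi$, letting $S_t = \{s \in idc(\Semi) : s + t = t\}$, finiteness and idempotence give $t = \sum_{s \in S_t} s$; separation then follows because $r \neq t$ forces $S_r \neq S_t$, and any witness $s$ of that difference yields $h_s(r) \neq h_s(t)$. You instead avoid proving the full representation and argue locally: after reducing to the case $r \not\leq t$, you take a $\leq$-minimal element $s^*$ of $\{s : s \leq r,\ s \not\leq t\}$ and show that minimality forces $+$-indecomposability, since any proper decomposition $s^* = a + b$ would push $a$ and $b$ strictly below $s^*$ and hence (by minimality) below $t$, giving $s^* \leq t$, a contradiction. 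Both arguments hinge on finiteness of $\Semi$ (the paper to guarantee a finite decomposition into indecomposables, you to guarantee a minimal element), and both are incarnations of the fact that in a finite distributive lattice every element is the join of the join-irreducibles below it. The paper's version yields the slightly stronger structural statement $t = \sum_{s \in S_t} s$ as a by-product, which is reused implicitly in the completeness discussion; yours is more economical for the separation claim alone and makes the role of join-irreducibility more transparent. All the small steps you flag (that $t + s = t$ is equivalent to $s \leq t$ in an idempotent semiring, that $a \neq s^*$ and $a \leq s^*$ give $a < s^*$, and that $0 \notin X$) check out in a finite lattice semiring.
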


\begin{proof}
	For $t \in \Semi$ let $S_t = \{s \in idc(\Semi) \colon s + t = t\}$. Due to idempotence, we have that $t + \sum_{s \in S_t} s = t$. Since $\Semi$ is assumed to be finite, there must be a tuple $t_1, \dots, t_n \in idc(\Semi)$ with $t_1 + \dots + t_n = t$. With idempotence, this implies $t + t_i = t$, which yields $t_i \in S_t$ for each $1\leq i \leq n$. Hence, we have that $t + \sum_{s \in S_t} s = \sum_{1 \leq i \leq n} t_i + \sum_{s \in S_t} s = \sum_{s \in S_t} s$. Overall, we obtain $t = t+ \sum_{s \in S_t} s = \sum_{s \in S_t} s$.
	
	Let $r, t \in \Semi$ with $r \neq t$. Since $r = \sum_{s \in S_{r}} s$ and $t = \sum_{s \in S_{t}} s$, it must hold that $S_{r} \neq S_{t}$. Let $s$ be a witness for the inequality and assume w.l.o.g that $s \in S_{r}$. By definition of $S_{r}$, it holds that $s + r = r$, hence $h_s(r) = 1$. By contrast, $s \not\in S_{t}$ yields $s + t \neq t$ and thus $h_s(t) = 0$.
\end{proof}

As we derived an explicit construction a separating set of homomorphisms to $\Bool$ which applies to any finite lattice semiring, we can reformulate the homomorphism game as $\HG_m^{f}(\pi_A , \pi_B)$ corresponding to $\HG_m(H_{idc} , \pi_A, \pi_B)$ for finite lattice semirings as follows.

\begin{definition}
	At the beginning of each play in $\HG_m(\pi_A, \pi_B)$, Spoiler chooses either $\pi_0=\pi_A$ and $\pi_1= \pi_B$ or vice versa, and some $s \in idc(\Semi)$.
	In the $i$-th of $m$ rounds, Spoiler chooses some $a_i \in A$ or $b_i \in B$ and Duplicator has to respond with an element $a_i$ or $b_i$ in the other structure.
	Duplicator wins the play if for the chosen tuples $\bar{a}, \bar{b}$ and each $L(\bar{x}) \in \operatorname{Lit}_m(\tau)$
	$\pi_0 (L(\bar{a}))+s=\pi_0 (L(\bar{a}))$ implies $\pi_1 (L(\bar{b}))+s=\pi_1 (L(\bar{b}))$.
\end{definition}

The direct construction of the separating set of homomorphisms also allows an explicit formulation of characteristic formulae $\chi^{m,s}_{h \circ \pi_A, \bar{a}} (\bar{x})$ for each $s \in idc(\Semi)$ corresponding to the $\Bool$-interpretations $h_s \circ \pi_A$. Again $\varphi_{\bar{a}}^= (\bar{x})$ characterises the equalities and inequalities of the elements in $\bar{a}$.
\begin{align*}
	\chi^{0,s}_{\pi_A, \bar{a}} (x_1, \dots, x_n) &\coloneqq  \varphi_{\bar{a}}^= (\bar{x}) \wedge \bigwedge \{ L(\bar{x}) \in \operatorname{Lit}_n(\tau) \mid \pi_A ( \bar{a}) + s = \pi_A (L(\bar{a}))\} \\
	\chi^{m+1,s}_{\pi_A, \bar{a}} (x_1, \dots, x_n) &\coloneqq  \bigwedge\limits_{a \in A} \exists x \ \chi^{m,s}_{\pi_A, \bar{a},a} (\bar{x}, x) \wedge \forall x \ \bigvee\limits_{a \in A} \chi^{m,s}_{\pi_A, \bar{a}, a} (\bar{x}, x)
\end{align*}

In terms of the set $H_{idc} = \{h_s \colon s \in idc(\Semi)\}$, the correctness of the game $\HG^{f}_m$ for finite lattice semirings can be stated as follows.
\begin{theorem}
	The game $\HG^{f}_m$ is sound and complete for $\equiv_m$ on every finite lattice semiring~$\Semi$. More precisely, given any $\Semi$-interpretations $\pi_A, \pi_B$ and $\bar{a} \in A^n, \bar{b} \in B^n$ the following are equivalent for each $m \in \N$:
	\begin{bracketenumerate}
		\item Duplicator wins $\HG^{f}_m(\pi_A, \bar{a}, \pi_B, \bar{b})$;
		\item For each $s \in idc(\Semi)$, it holds that
		\[
		\pi_B \llb \chi^{m,s}_{\pi_A, \bar{a}} (\bar{b}) \rrb + s= \pi_B \llb \chi^{m,s}_{\pi_A, \bar{a}} (\bar{b}) \rrb
		\quad\text{and}\quad
		\pi_A \llb \chi^{m,s}_{\pi_B, \bar{b}} (\bar{a}) \rrb + s = \pi_A \llb \chi^{m,s}_{\pi_B, \bar{b}} (\bar{a}) \rrb;
		\]
		\item $(\pi_A, \bar{a}) \equiv_m (\pi_B, \bar{b})$.
	\end{bracketenumerate} 
\end{theorem}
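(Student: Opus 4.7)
The plan is to derive this theorem as a direct specialisation of the general homomorphism game theorem above to the separating set $H_{idc} = \{h_s : s \in idc(\Semi)\}$, whose separating property was established in the preceding lemma. The first step is to observe that $\HG^{f}_m(\pi_A, \pi_B)$ is really just $\HG_m(H_{idc}, \pi_A, \pi_B)$, with Spoiler's opening choice of homomorphism $h \in H_{idc}$ re-encoded as a choice of a $+$-indecomposable $s \in idc(\Semi)$. The local winning condition ``$\pi_0(L(\bar{a})) + s = \pi_0(L(\bar{a}))$ implies $\pi_1(L(\bar{b})) + s = \pi_1(L(\bar{b}))$'' is exactly $(h_s \circ \pi_0, \bar{a}) \leq (h_s \circ \pi_1, \bar{b})$ unfolded, since by definition $h_s(t) = 1$ iff $t + s = t$. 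Once this identification is in place, the equivalence $(1) \Leftrightarrow (3)$ follows immediately from the general homomorphism game theorem.

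For $(1) \Leftrightarrow (2)$, I would explicitly match the formulae $\chi^{m,s}_{\pi_A, \bar{a}}$ with the Boolean characteristic formulae $\chi^m_{h_s \circ \pi_A, \bar{a}}$ of \cref{thm-modEF}. In the base case, a literal $L(\bar{x})$ is conjoined in $\chi^{0,s}_{\pi_A, \bar{a}}(\bar{x})$ precisely when $\pi_A(L(\bar{a})) + s = \pi_A(L(\bar{a}))$, which by definition of $h_s$ is the same as $h_s(\pi_A(L(\bar{a}))) = 1$, i.e. $(h_s \circ \pi_A)(L(\bar{a})) = 1$ --- the very criterion used in $\chi^0_{h_s \circ \pi_A, \bar{a}}$. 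The inductive clauses are syntactically identical, so $\chi^{m,s}_{\pi_A, \bar{a}} = \chi^m_{h_s \circ \pi_A, \bar{a}}$ for all $m$. Invoking the fundamental property of semiring homomorphisms then yields $h_s(\pi_B \llb \chi^{m,s}_{\pi_A, \bar{a}}(\bar{b}) \rrb) = (h_s \circ \pi_B)\llb \chi^m_{h_s \circ \pi_A, \bar{a}}(\bar{b}) \rrb$, and this Boolean value equals $1$ iff $\pi_B \llb \chi^{m,s}_{\pi_A, \bar{a}}(\bar{b}) \rrb + s = \pi_B \llb \chi^{m,s}_{\pi_A, \bar{a}}(\bar{b}) \rrb$, by the definition of $h_s$. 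Hence the two equations in (2) are, for each $s$, precisely the two clauses appearing in the general homomorphism game theorem specialised to $h = h_s$, which gives $(2) \Leftrightarrow (1)$.

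The main obstacle is essentially notational bookkeeping: one must consistently translate between the ``$+s$-closure'' formulation used to state the rules of $\HG^{f}_m$ and its characteristic formulae, and the equivalent Boolean formulation ``$h_s \circ \pi$'' used in \cref{thm-modEF} and the general homomorphism game theorem. Once this dictionary is set up, no new combinatorial content is required: the substantive ingredients --- that $H_{idc}$ is separating, that $G^\leq_m$ is sound and complete for $\preceq_m$ on $\Bool$-interpretations that need not be model-defining, and \cref{lem-sepHom} --- are all already in place, so the proof reduces to these two paragraphs of verification.
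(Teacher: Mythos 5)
Your proposal is correct and follows exactly the route the paper intends: the theorem is a direct specialisation of the general homomorphism game theorem to the separating set $H_{idc}=\{h_s : s\in idc(\Semi)\}$, with the observation that the $\mathord{+}s$-closure conditions in the rules of $\HG^f_m$ and in condition (2) are just the unfolding of $h_s(t)=1 \Leftrightarrow t+s=t$, so that $\chi^{m,s}_{\pi_A,\bar a}$ coincides with $\chi^m_{h_s\circ\pi_A,\bar a}$. The identification of the formulae by induction and the appeal to the fundamental property are exactly the bookkeeping the paper leaves implicit, so nothing is missing.
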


\subparagraph*{Infinite lattice semirings} In the case of min-max semirings, the construction of the separating set of homomorphisms also applies to infinite semirings.
However, it can be shown that the constructed set $H_{idc}$ does not suffice to separate infinite lattice semirings in general.
As an example, consider the lattice semiring $\Semi = (\mathbb{Z}, +^\Semi, \cdot^\Semi, 0, 1)$ with $s+^\Semi t = \operatorname{gcd}(s,t)$ if $s \neq 0$ or $t \neq 0$, while $0+^\Semi 0 = 0$ and $s \cdot^\Semi t = \operatorname{lcm}(s,t)$ for $s,t \in \mathbb{Z}$. 
For each $s \in \mathbb{Z}$, it holds that $\operatorname{gcd}(2s,3s) =s$, so for $s \neq 0$ there are distinct $r$ and $t$ such that $s = r+^\Semi t$.
By contrast, $\operatorname{gcd} (s,t) \neq 0$ for all $s,t \in \mathbb{Z} \setminus \{0\}$, hence $idc(\Semi) = \{0\}$, but $\{h_0\}$ is not a separating set of homomorphisms.
However, a separating set of homomorphisms still exists in the infinite case, which relies on prime ideals in $\Semi$ instead of $+$-indecomposable elements.

\begin{definition}
	Let $\Semi$ be a lattice semiring. A non-empty proper subset $P$ of $S$ is said to be a \emph{prime ideal} if
	\begin{bracketenumerate}
		\item $s \in P$ and $t \in P$ imply $s + t \in P$,
		\item $s \in P$ and $t \in S$ imply $s \cdot t \in P$ and
		\item $s \cdot t \in P$ implies $s \in P$ or $t \in P$.
	\end{bracketenumerate}
	We denote the set of prime ideals in $\Semi$ by $I_p(\Semi)$.
\end{definition}

Since one can find for every pair of distinct elements $s,t\in \Semi$ a prime ideal which contains one of $s$ and $t$ but not both \cite{Stone38}, the prime ideals in $\Semi$ allow us to construct a separating set of homomorphisms $h \colon \Semi \to \Bool$.

\begin{lemma}[{\cite[Theorem 13]{Stone38}}] \label{lem-stone}
	The mapping $f \colon \Semi \to \mathcal{P}(I_p(\Semi))$, $s \mapsto \{P \in I_p (\Semi) \colon s \not\in P\}$ is injective and it holds for each $s,t \in S$ that
	\begin{bracketenumerate}
		\item $f(s + t) = f(s) \cup f(t)$ and
		\item $f(s \cdot t) = f(s) \cap f(t)$.
	\end{bracketenumerate}
\end{lemma}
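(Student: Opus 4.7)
The plan is to establish the two algebraic identities first, which follow routinely from the axioms of prime ideals, and then address injectivity, which is the Stone-type substance of the lemma. A useful preliminary observation to record before anything else is that prime ideals in a lattice semiring are downward closed with respect to the natural order: since $\Semi$ is absorptive, $r \le s$ is equivalent to $r = r \cdot s$, so whenever $s \in P$, axiom (2) immediately yields $r \in P$.

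For the identity $f(s+t) = f(s) \cup f(t)$, the contrapositive of axiom (1) gives $(\subseteq)$, since $s, t \in P$ would force $s+t \in P$ and hence $P \notin f(s+t)$; the reverse inclusion $(\supseteq)$ follows from downward closure, as $s, t \le s+t$ means that $s+t \in P$ places both $s$ and $t$ into $P$. The identity $f(s \cdot t) = f(s) \cap f(t)$ is even more direct: the inclusion $(\subseteq)$ is the contrapositive of axiom (2), and $(\supseteq)$ is exactly the contrapositive of axiom (3).

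The main obstacle is the injectivity of $f$: for any two distinct elements $s, t \in \Semi$ I need to exhibit a prime ideal containing exactly one of them. Since the natural order is antisymmetric, without loss of generality $s \not\le t$. Then the principal ideal $(t] \coloneqq \{r \in \Semi : r \le t\}$ contains $t$ but not $s$, and by a Zorn's lemma argument it extends to an ideal $P$ that is maximal subject to containing $t$ and avoiding $s$. The heart of the proof is verifying that $P$ is prime: suppose, for contradiction, that $r \cdot r' \in P$ while $r, r' \notin P$. Maximality then forces the ideals generated by $P \cup \{r\}$ and by $P \cup \{r'\}$ to contain $s$, yielding $p, p' \in P$ with $s \le p + r$ and $s \le p' + r'$. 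Using multiplicative idempotence, monotonicity, and distributivity, we obtain
\[
s = s \cdot s \le (p + r)(p' + r') = p p' + p r' + p' r + r r',
\]
and each of the four summands lies in $P$: the first three by axiom (2) applied to either $p$ or $p'$, and the last by assumption. Axiom (1) gives that the whole sum lies in $P$, so by downward closure $s \in P$, contradicting the construction of $P$. This contradiction shows that $P$ is prime, and since $t \in P$ while $s \notin P$, the prime ideal $P$ witnesses $f(s) \neq f(t)$.
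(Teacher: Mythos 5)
Your proof is correct. Note that the paper itself gives no proof of this lemma at all --- it is imported verbatim as Theorem~13 of Stone's 1938 representation theorem for distributive lattices --- so you have supplied a self-contained argument where the paper defers to a citation. Your argument is the standard one: the two identities follow from the prime-ideal axioms together with downward closure (which you correctly derive from absorption via $r \le s \Leftrightarrow r = r\cdot s$), and injectivity is the prime ideal separation theorem proved by Zorn's lemma plus the distributivity trick $s = s\cdot s \le (p+r)(p'+r') = pp' + pr' + p'r + rr'$, each summand of which lands in $P$. All the steps check out, including the implicit ones: the principal ideal $(t]$ satisfies axioms (1) and (2) because $+$ is join and multiplication is decreasing; the ideal generated by $P \cup \{r\}$ is exactly $\{q : q \le p + r \text{ for some } p \in P\}$; and the inequality $s\cdot s \le (p+r)(p'+r')$ uses monotonicity of multiplication, which holds in any naturally ordered semiring. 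If you wanted to polish it, you could spell out these three facts explicitly, and you might remark that the lemma as stated only concerns the binary operations, so no compatibility of $P$ with the infinitary sums and products of \S4 needs to be verified here. But as written the proof is complete and matches the classical route that Stone's cited theorem takes.
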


\begin{corollary} \label{prop-sepHomInfLat}
	The set $H_p:=\{h_P \colon P \in I_P(\Semi)\}$ of mappings $h_P \colon \Semi \to \Bool$ with $h_P \colon s \mapsto 0$ if, and only if, $s \in P$ is a separating set of homomorphisms.
\end{corollary}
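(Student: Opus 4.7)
The plan is to read Corollary~\ref{prop-sepHomInfLat} off Lemma~\ref{lem-stone}, splitting the verification into two parts: (i) each $h_P$ is a semiring homomorphism $\Semi \to \Bool$, and (ii) the family $H_p$ separates points of $\Semi$.

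For part (i) I would first dispose of the constants using only the prime-ideal axioms. Since $P$ is nonempty, picking any $s \in P$ and applying closure under multiplication by arbitrary elements yields $0 = s \cdot 0 \in P$, hence $h_P(0) = 0$; conversely, $1 \in P$ would force $t = 1 \cdot t \in P$ for every $t \in S$, contradicting that $P$ is a proper subset, so $h_P(1) = 1$. For the binary operations, Lemma~\ref{lem-stone} already does the work: the identity $f(s+t) = f(s) \cup f(t)$ unpacks as $s+t \notin P \Leftrightarrow s \notin P \text{ or } t \notin P$, i.e.\ $h_P(s+t) = h_P(s) \vee h_P(t)$; dually, $f(s \cdot t) = f(s) \cap f(t)$ gives $h_P(s \cdot t) = h_P(s) \wedge h_P(t)$. (These reformulations can also be read directly off clauses (i)--(iii) of the definition of a prime ideal, combined with the fact that in an idempotent semiring the sum is the join in the natural order, so $P$ is downward closed.)

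For part (ii), given distinct $s, t \in \Semi$, the injectivity of $f$ asserted in Lemma~\ref{lem-stone} gives $f(s) \neq f(t)$; hence their symmetric difference contains some prime ideal $P$. Then exactly one of $s, t$ lies outside $P$, so $h_P(s) \neq h_P(t)$, witnessing separation.

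The only subtlety, and arguably the main obstacle if one wants to apply the Fundamental Property via $h_P$ on interpretations with infinite universes, is compatibility of $h_P$ with the infinitary sum and product, which in a lattice semiring are taken as the suprema and infima of their finite partial versions. Here I would exploit that a prime ideal is downward closed in the natural order (from $r \le s$ and $s \in P$ one derives $r = r \cdot s \in P$ by clause (ii) together with the absorptive identity $r \cdot s = r$ for $r \le s$). A monotone-limit argument then lifts the finite statements to $\sum_{i \in I} s_i \in P \Leftrightarrow s_i \in P$ for all $i$, and dually for products; this routine bookkeeping completes the proof that $H_p$ is a separating set of semiring homomorphisms.
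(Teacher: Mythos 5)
Your parts (i) and (ii) are correct and are exactly how the paper intends the corollary to be read off \cref{lem-stone}: the constants are handled by the prime-ideal axioms ($0\in P$ by closure under multiplication, $1\notin P$ by properness), the binary operations by translating $f(s+t)=f(s)\cup f(t)$ and $f(s\cdot t)=f(s)\cap f(t)$ into the Boolean identities for $h_P$, and separation by injectivity of $f$. The paper offers no further argument, so up to this point you match it.

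The final paragraph, however, asserts something false. It is \emph{not} routine that $h_P$ is compatible with the infinitary operations, because a prime ideal of a lattice semiring need not be closed under directed suprema. Concretely, in the min-max semiring $([0,1]_\R,\max,\min,0,1)$ the set $P=[0,\tfrac12)$ is a prime ideal, and the family $s_i=\tfrac12-\tfrac1i$ lies entirely in $P$ while $\sum_i s_i=\sup_i s_i=\tfrac12\notin P$; so the direction ``$s_i\in P$ for all $i$ implies $\sum_{i\in I}s_i\in P$'' of your claimed equivalence fails, and hence $h_P$ does not commute with the infinitary sum. (A dual counterexample with $P=[0,\tfrac12]$ and $s_i=\tfrac12+\tfrac1i$ breaks the infinitary product.) Only the downward-closure direction, which you prove correctly via $r=r\cdot s$ for $r\le s$, survives. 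This does not affect the corollary as literally stated --- $H_p$ is a separating set of (finitary) homomorphisms, which is all parts (i) and (ii) require --- but your ``monotone-limit argument'' cannot be waved through as bookkeeping; making the $h_P$ compatible with infinite sums and products would require restricting to ideals that are additionally closed under the relevant suprema (as happens automatically in the finite case), and this issue is genuinely not addressed by \cref{lem-stone}.
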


From the separating set $H_p$ of homomorphisms, we derive the following formulation of the homomorphism game, which corresponds to $\HG_m (H_p, \pi_A, \pi_B)$.

\begin{definition}
	In each play of $\HG_m^{\infty}(\pi_A, \pi_B)$, Spoiler first chooses a prime ideal $P \in I_p(\Semi)$ and puts either $\pi_0=\pi_A$ and $\pi_1=\pi_B$, or $\pi_0=\pi_B$ and $\pi_1=\pi_A$. Afterwards, Spoiler chooses some $a \in A$ or $b \in B$ and Duplicator has to respond with an element $a$ or $b$ in the other interpretation, which is repeated $m$ times. Duplicator wins the play where the tuples $\bar{a}, \bar{b}$ have been chosen if $\pi_1 (L(\bar{b})) \in P$ implies $\pi_0 (L(\bar{a})) \in P$ for each $L(\bar{x}) \in \lit_m(\tau)$.
\end{definition}
With each prime ideal $P \in I_p(\Semi)$, we associate characteristic formulae $\chi_{\pi_A, \bar{a}}^{m,P} (\bar{x})$ according to
\begin{align*}
	\chi^{0,P}_{\pi_A, \bar{a}} (\bar{x}) &:=  \varphi_{\bar{a}}^= (\bar{x}) \wedge \bigwedge \{ L(\bar{x}) \in \lit_n(\tau) : \pi_A (L (\bar{a})) \not\in P\} \text{ and} \\
	\chi^{m+1,P}_{\pi_A, \bar{a}} (\bar{x}) &:=  \bigwedge\limits_{a \in A} \exists x \ \chi^{m,P}_{\pi_A, \bar{a},a} (\bar{x}, x) \wedge \forall x \ \bigvee\limits_{a \in A} \chi^{m,P}_{\pi_A, \bar{a}, a} (\bar{x}, x),
\end{align*}
which characterise $m$-equivalence of $\Semi$-interpretations in lattice semirings as follows.

\begin{theorem}
	The game $\HG^{\infty}_m$ is sound and complete for $\equiv_m$ on every lattice semiring. More precisely, for any two $\Semi$-interpretations $\pi_A$, $\pi_B$, elements $\bar{a} \in A^n$, $\bar{b} \in B^n$ and $m \in \N$, the following are equivalent:
	\begin{bracketenumerate}
		\item Duplicator wins $\HG_m^{\infty}(\pi_A, \bar{a}, \pi_B, \bar{b})$;
		\item For each $P \in I_P(\Semi)$, it holds that $\{\pi_B \llb \chi^{m,P}_{\pi_A, \bar{a}} (\bar{b}) \rrb, \pi_A \llb \chi^{m,P}_{\pi_B, \bar{b}} (\bar{a}) \rrb \} \cap P = \varnothing$;
		\item $(\pi_A, \bar{a}) \equiv_m (\pi_B, \bar{b})$.
	\end{bracketenumerate} 
\end{theorem}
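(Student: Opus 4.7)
The plan is to deduce the statement as a direct instantiation of the general homomorphism-game theorem stated earlier in this section, once the concrete definitions of $\HG^\infty_m$ and of the formulae $\chi^{m,P}_{\pi_A,\bar a}$ are matched up with the abstract objects $\HG_m(H,\cdot,\cdot)$ and $\chi^m_{h\circ\pi_A,\bar a}$ for the choice $H = H_p$. By \cref{prop-sepHomInfLat}, $H_p = \{h_P : P \in I_p(\Semi)\}$ is a separating set of homomorphisms from $\Semi$ into $\Bool$, so the general theorem applies.

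First I would verify that the rules of $\HG^\infty_m(\pi_A,\pi_B)$ coincide with those of $\HG_m(H_p,\pi_A,\pi_B)$. Spoiler picking a prime ideal $P$ and orientation $(\pi_0,\pi_1)$ is the same as picking $h_P \in H_p$ and the orientation of the induced $\Bool$-interpretations $h_P\circ\pi_0,\ h_P\circ\pi_1$. Duplicator's winning condition "$\pi_1(L(\bar b)) \in P$ implies $\pi_0(L(\bar a)) \in P$" is, under the identification $h_P(s)=0 \iff s\in P$, precisely the contrapositive of $(h_P\circ\pi_0)(L(\bar a)) \le (h_P\circ\pi_1)(L(\bar b))$ in $\Bool$, i.e.\ the winning condition of $G^\le_m(h_P\circ\pi_0,h_P\circ\pi_1)$. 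Hence the two games coincide position by position.

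Next I would show by induction on $m$ that $\chi^{m,P}_{\pi_A,\bar a}(\bar x) = \chi^m_{h_P\circ\pi_A,\bar a}(\bar x)$. In the base case, the literal $L(\bar x)$ appears in $\chi^{0,P}_{\pi_A,\bar a}$ iff $\pi_A(L(\bar a))\notin P$ iff $(h_P\circ\pi_A)(L(\bar a)) = 1$, which is exactly the membership condition in $\chi^0_{h_P\circ\pi_A,\bar a}$; the conjunct $\varphi^=_{\bar a}$ is common to both. The inductive step is immediate because the recursive clauses of $\chi^{m+1,P}_{\pi_A,\bar a}$ and $\chi^{m+1}_{h_P\circ\pi_A,\bar a}$ have identical shape. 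Condition (ii) of the general theorem, $h_P(\pi_B\llb\chi^m_{h_P\circ\pi_A,\bar a}(\bar b)\rrb) = h_P(\pi_A\llb\chi^m_{h_P\circ\pi_B,\bar b}(\bar a)\rrb) = 1$, then translates, via $h_P(s)=1 \iff s\notin P$, into exactly $\{\pi_B\llb\chi^{m,P}_{\pi_A,\bar a}(\bar b)\rrb,\ \pi_A\llb\chi^{m,P}_{\pi_B,\bar b}(\bar a)\rrb\}\cap P = \varnothing$.

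With both translations in place, the equivalence of (1), (2), (3) follows directly from the general homomorphism-game theorem applied to $H_p$. The only delicate point — and what I expect to be the main obstacle — is to confirm that each $h_P$ is actually compatible with the infinitary sums and products of $\Semi$ (so that the Fundamental Property, which underlies the general theorem, continues to apply over infinite universes). This reduces to checking that for a prime ideal $P$ of a lattice semiring one has $\sum_{i\in I} s_i \in P \iff \text{all } s_i \in P$ and $\prod_{i\in I} s_i \in P \iff \text{some } s_i \in P$, which follows from the definition of prime ideal together with the prescribed sup/inf formulation of the infinitary operations. Once this is in hand, nothing further is required beyond the identifications above.
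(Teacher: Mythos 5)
Your overall strategy is exactly the one the paper intends: the theorem is meant to be read off from the general homomorphism-game theorem by instantiating $H = H_p$, and your verifications that $\HG^\infty_m(\pi_A,\pi_B)$ coincides with $\HG_m(H_p,\pi_A,\pi_B)$ and that $\chi^{m,P}_{\pi_A,\bar a}$ equals $\chi^m_{h_P\circ\pi_A,\bar a}$ under the dictionary $h_P(s)=1 \iff s\notin P$ are correct and constitute precisely the (largely unwritten) content of the paper's argument, with \cref{prop-sepHomInfLat} supplying the separating set.

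However, the ``delicate point'' you flag at the end is real, and your proposed resolution of it is wrong. For an arbitrary prime ideal $P$ of an infinite lattice semiring, only two of the four implications you need actually hold: prime ideals are downward closed (from $t\le s$ one gets $t=t\cdot s$, so $s\in P$ forces $t\in P$), hence $\sum_{i\in I}s_i\in P$ implies all $s_i\in P$, and some $s_i\in P$ implies $\prod_{i\in I}s_i\in P$. The converses fail for infinite $I$, because $P$ need not be closed under the suprema, and its complement need not be closed under the infima, by which the infinitary operations are defined. Concretely, in the min-max semiring on $[0,1]$ the set $P=[0,1)$ is a prime ideal, yet $h_P\bigl(\sup_n(1-\tfrac1n)\bigr)=h_P(1)=1$ while $\sup_n h_P\bigl(1-\tfrac1n\bigr)=0$; dually, $[0,\tfrac12]$ is a prime ideal containing $\inf_n\bigl(\tfrac12+\tfrac1n\bigr)=\tfrac12$ although no factor lies in it. So $h_P$ is in general not compatible with the infinitary operations, the Fundamental Property can fail on infinite universes, and the instantiation of the general theorem does not go through as you state it. Closing this gap requires either restricting to finite universes or replacing $I_p(\Semi)$ by the completely prime ideals (those closed under the relevant suprema and with complements closed under the relevant infima) and re-establishing the separation property for that smaller family; neither follows from ``the definition of prime ideal'' as claimed in your last paragraph.
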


\begin{example}
	We can use the homomorphism game to show that first-order logic
	with semiring semantics cannot express the following property
	on min-max-semirings with the monadic signature~$\{Q, R\}$:
	\emph{``For the majority of elements~$e$ in the universe,
	$Qe$ has a greater value than $Re$.''}
	To prove this, we use the following two $\Semi_4$-interpretations
	on the min-max-semiring $\Semi_4$ with four elements $\{0, 1, 2, 3\}$.
	
	\medskip
	
	\begin{minipage}{0.9\linewidth}
		\centering
		$\pi_{A}:\quad$
		\begin{tabular}{c | c | c | c | c |}
			$A$ & $Q$ & $R$ & $\neg Q$ & $\neg R$ \\ \hline
			$a_1$ & 1 & 3 & 0 & 0 \\
			$a_2$ & 2 & 1 & 0 & 0 \\
			$a_3$ & 3 & 2 & 0 & 0 \\
		\end{tabular}
		$\quad\quad\quad\pi_{B}:\quad$
		\begin{tabular}{c | c | c | c | c |}
			$B$ & $Q$ & $R$ & $\neg Q$ & $\neg R$ \\ \hline
			$b_1$ & 3 & 1 & 0 & 0 \\
			$b_2$ & 1 & 2 & 0 & 0 \\
			$b_3$ & 2 & 3 & 0 & 0 \\
		\end{tabular}
	\end{minipage}
	
	\medskip
	
	Clearly, $\pi_A$ has the desired property while $\pi_B$ does not.
	However, we can show with the homomorphism games $\HG^{\infty}_m(\pi_A, \pi_B)$  that $\pi_A\equiv \pi_B$.
	First, we observe that the prime ideals~$I_p(\Semi_4)$
	are precisely the three non-empty proper downward closed subsets of~$\{0, 1, 2, 3\}$.
	They induce homomorphisms $h_{\ge i} \colon \Semi_4 \to \Bool$ for $i \in \{1, 2, 3\}$
	such that $h_{\ge i}(j) = 1$ iff $j \ge i$.
	Hence, we essentially play the homomorphism game $\HG_m(H, \pi_A, \pi_B)$
	with the separating set of homomorphisms $H = \{h_{\ge 1}, h_{\ge 2}, h_{\ge 3}\}$.
	Now, it only remains to observe that applying any of these homomorphisms
	to~$\pi_A$ and~$\pi_B$ makes them isomorphic to each other,
	thus, Duplicator clearly has a winning strategy.
	This demonstrates the viability of homomorphism games
	as a proof method for inexpressibility results in semiring semantics.
\end{example}
	
\section{Conclusion}
	
We have provided a rather detailed study of soundness and completeness of Ehrenfeucht--Fraïssé games, and related model comparison games,
for proving elementary equivalence and $m$-equivalence in semiring semantics. The general picture that emerges is quite diverse. While the 
$m$-move games $G_m$ are sound and complete for $\equiv_m$ only on the Boolean semiring, the games still provide a  sound method on
fully idempotent semirings, such as min-max semirings, lattice semirings, and the provenance semirings $\operatorname{PosBool}[X]$. 
This permits to generalise certain classical results in logic, proved via  Ehrenfeucht--Fraïssé games or back-and-forth systems, from
Boolean structures to semiring interpretations in fully idempotent semirings. A particular example is the proof of a Hanf locality theorem
for such semirings in \cite{BiziereGraNaa23}.  For proving elementary equivalence, without restriction of the quantifier rank, 
Ehrenfeucht--Fraïssé games without a fixed number of moves provide a method that is, for various reasons,  sound on more semirings, 
including not only $\N$ and $\N^\infty$ but also the provenance semirings $\WW[X]$, $\Bool[X], \mathbb{S}[X]$, $\N[X]$, and $\Sinf[X]$.
While in classical semantics, a separating sentence of quantifier rank $m$ leads to a winning strategy of Spoiler in at most $m$ moves,
the situation in semirings may be more complicated, in the sense that a winning strategy of Spoiler which ``simulates'' a separating sentence may
still exist, but may require a larger number of moves than given by the quantifier rank; as a consequence the unrestricted game $G$
may still provide a sound method for proving elementary equivalence, although the $m$-move games are unsound for $\equiv_m$.

The most straightforward application of Ehrenfeucht--Fraïssé games and other model comparison games
are inexpressibility results, showing that a property~$P$ is not expressible in a logic~$L$.
Classically, this is accomplished by constructing two structures, precisely one of which satisfies the property~$P$,
and then providing a winning strategy for Duplicator in an appropriate model comparison game on the two structures.
This method only relies on the soundness of the model comparison game without requiring completeness.
Hence, our soundness results enable us to lift inexpressibility results to semiring semantics for a significant class of semirings.
Consider, for instance, a min-max-semiring~$\Semi$ modelling access levels and $\Semi$-interpretations $\pi$
that annotate every edge of a graph with a required  access level. Then there is no first-order formula $\phi(x, y)$ such that
$\pi \llb \phi(v, w) \rrb$ evaluates to the minimal access level required to go from~$v$ to~$w$.

We have also studied bijection and counting games, and we have shown in particular, that $m$-move bijection games are sound for $\equiv_m$ on
\emph{all} semirings. We remark that these games have originally been invented in the form of $k$-pebble games for logics with counting. This means that
rather than just selecting, in $m$ turns, two $m$-tuples, the games proceed by moving
a fixed number of $k$ pairs of pebbles through the two structures in an a priori unrestricted number of moves. These games capture 
equivalences for formulae that may use at most~$k$~variables which can, however, be quantified again and again.  
We have chosen here the simplified variants of $m$-move games rather than $k$-pebble games, to study the relationship
with the classical Ehrenfeucht--Fraïssé games for $\equiv_m$. However, also the definition of $k$-pebble bijection and counting games 
extends in a straightforward way from classical structures to semiring interpretations and their soundness properties 
for $k$-variable equivalences are analogous to those of the $m$-move variants for $m$-equivalence.  But clearly, the $k$-pebble variants of these games
deserve further study, and this will be part of our future work on the subject. We conjecture that by lifting the well-known CFI-construction to 
semirings one can show that there is no semiring where first-order logic, and even fixed point logic, is strong enough to express all properties
 that are decidable in PTIME.

On the other side, it has turned out that all these model comparison games are incomplete for elementary equivalence and $m$-equivalence on
most semirings, with the exceptions of $\N$ and $\N[X]$. Most of these incompleteness results rely on the construction of 
logically equivalent semiring interpretations on which, however, Spoiler wins the games in few moves. The proof of elementarily equivalence for 
such interpretations in general relies on separating sets of homomorphisms. Based on this technique, we have proposed a new kind 
of model comparison games, homomorphism games, which in fact are sound and complete for $m$-equivalence on finite and infinite
lattice semirings. This also raises the question whether it is possible to develop further  games that are sound and complete for 
more, or even all, semirings. An essential part of the homomorphism game is a one-sided version of the classical Ehrenfeucht--Fraïssé game, with a
winning condition that is based on (weak) local homomorphisms rather than local isomorphisms, and which capture the notion that
one interpretation never evaluates to strictly larger  values than the other. This game itself is interesting also in many other contexts and
will be further studied in future work.

\appendix

\section{Incompleteness of \texorpdfstring{$G_m$ and $G$}{(m-turn) Ehrenfeucht--Fraïssé games} on \texorpdfstring{$\Bool$}{B}-interpretations that are not model-defining} \label{sec:1equivIncomplBint}
	
We first prove incompleteness of $G$ for $\equiv$ on $\Bool$-interpretations that are not model-defining and derive from the counterexample incompleteness of $G_m$ for $\equiv_m$ for each $m \in \N$.

\begin{proposition}
	Let $\pi_A$ and $\pi_B$ be $\Bool$-interpretations on universes $A:= \{a_i \colon i \in \N\}$ and $B:=\{b_i \colon i \in \N\} \cup \{b_0'\}$ defined by the following tables.
	\begin{center}
		$\pi_A:$
		\begin{tabular}{c||c|c|c|c}
			$A$ & $R_1$ & $R_2$ & $\lnot R_1$ & $\lnot R_2$ \\
			\hline
			\hline
			$a_0$ &   $1$ &   $0$ &   $0$ &   $0$ \\
			\hline
			$a_1$ & $0$ & $0$ & $0$ & $0$ \\
			\hline
			$a_2$ &   $1$ &   $1$ &   $0$ &   $0$ \\
			\hline
			$a_3$ & $0$ & $0$ & $0$ & $0$ \\
			\hline
			$a_4$ &   $1$ &   $1$ &   $0$ &   $0$ \\
			\hline
			\vdots & \vdots & \vdots & \vdots & \vdots
		\end{tabular}
		\hspace*{.5cm}
		$\pi_B:$
		\begin{tabular}{c||c|c|c|c}
			$B$ & $R_1$ & $R_2$ & $\lnot R_1$ & $\lnot R_2$ \\
			\hline
			\hline
			$b_0$ &   $0$ &   $0$ &   $0$ &   $0$ \\
			\hline
			$b_0'$ &   $1$ &   $1$ &   $0$ &   $0$ \\
			\hline
			$b_1$ & $0$ & $0$ & $0$ & $0$ \\
			\hline
			$b_2$ &   $1$ &   $1$ &   $0$ &   $0$ \\
			\hline
			$b_3$ & $0$ & $0$ & $0$ & $0$ \\
			\hline
			\vdots & \vdots & \vdots & \vdots & \vdots
		\end{tabular}
	\end{center}
	It holds that $\pi_A \equiv \pi_B$ although Spoiler wins $G_1 (\pi_A, \pi_B)$.
\end{proposition}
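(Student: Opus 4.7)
The statement has two parts. That Spoiler wins $G_1(\pi_A,\pi_B)$ is immediate: playing $a_0$, whose literal profile $(\pi_A(R_1a_0), \pi_A(R_2a_0), \pi_A(\neg R_1a_0), \pi_A(\neg R_2 a_0)) = (1,0,0,0)$ is not realised by any $b \in B$ (every $b$ satisfies $\pi_B(R_1b)=\pi_B(R_2b)$ and $\pi_B(\neg R_ib) = 0$), Spoiler leaves Duplicator no response yielding a local isomorphism.

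For $\pi_A \equiv \pi_B$, the plan is to exploit that $\Bool$ is fully idempotent and to use the one-sided game $G^\leq_m$, which is sound for $\preceq_m$ on any fully idempotent semiring (and even complete on $\Bool$-interpretations by \cref{thm-modEF}). Since $\Bool$-valuations lie in $\{0,1\}$, combining $\pi_A \preceq_m \pi_B$ with $\pi_B \preceq_m \pi_A$ forces $\pi_A \equiv_m \pi_B$, so it suffices to show that Duplicator wins both $G^\leq_m(\pi_A, \pi_B)$ and $G^\leq_m(\pi_B, \pi_A)$ for every $m$.

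To build the strategies, I classify elements by their $(\pi(R_1\cdot), \pi(R_2\cdot))$-profile (negated literals are uniformly $0$). The profiles realised in $\pi_A$ are $(1,0)$ (only by $a_0$), $(1,1)$ (by every even $a_i$ with $i \geq 2$), and $(0,0)$ (by every odd $a_i$); in $\pi_B$ only $(1,1)$ and $(0,0)$ occur, each realised by infinitely many elements. The crucial observation is that the isolated profile of $a_0$ is bracketed coordinatewise by the two profiles available in $\pi_B$, namely $(0,0) \leq (1,0) \leq (1,1)$. Duplicator maintains the invariant ``left-profile $\leq$ right-profile'' by a back-and-forth matching on profiles, answering each Spoiler pick by a \emph{fresh} element of dominating profile (in $G^\leq_m(\pi_A, \pi_B)$: $a_0$ and fresh even $a_i$ go to fresh $(1,1)$-elements of $B$, fresh odd $a_i$ to fresh $(0,0)$-elements, and picks on the $B$-side to fresh $a$'s of matching profile) or of dominated profile (in $G^\leq_m(\pi_B, \pi_A)$, where crucially Spoiler's pick of $a_0$ is answered by a fresh $(0,0)$-element of $B$, using $(0,0) \leq (1,0)$, while $(1,1)$-picks on the $B$-side are answered by fresh even $a_i$). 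Since all needed profiles are realised by infinitely many elements in each interpretation, the strategy extends to any finite number of moves and respects the equality pattern among previously chosen elements.

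The main obstacle is conceptual rather than technical: one must recognise that the one-sided game $G^\leq_m$ should be played in \emph{both} directions to obtain $\equiv$, and that the asymmetric profile $(1,0)$ of $a_0$, which foils the two-sided $G_m$ already for $m=1$, poses no difficulty for either one-sided game, precisely because $(0,0) \leq (1,0) \leq (1,1)$ gives $a_0$ both a dominating and a dominated partner in $\pi_B$.
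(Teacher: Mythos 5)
Your proof is correct, but it takes a genuinely different route from the paper's. The paper argues by a self-contained structural induction on $\varphi$, establishing the sandwich inequality $\pi_B \llb \varphi(\bar b)\rrb \leq \pi_A\llb\varphi(\bar a)\rrb \leq \pi_B\llb\varphi(\bar b')\rrb$ for the two fixed correspondences $a_i\mapsto b_i$ and $a_i\mapsto b_i$ with $b_0$ replaced by $b_0'$, and then specialises to sentences. You instead route through the one-sided game machinery of Section~5: Duplicator wins both $G^{\leq}_m(\pi_A,\pi_B)$ and $G^{\leq}_m(\pi_B,\pi_A)$ for every $m$, and soundness of $G^{\leq}_m$ for $\preceq_m$ on the fully idempotent semiring $\Bool$ (which, as the paper notes, does not require the interpretations to be model-defining) gives $\preceq_m$ in both directions, hence $\equiv_m$ for all $m$. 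The two arguments are morally equivalent --- your two back-and-forth strategies are the game-theoretic counterparts of the paper's two element correspondences, and the sandwich inequality is exactly what soundness of the two one-sided games delivers --- but yours is more modular and isolates the structural reason the example works, namely $(0,0)\leq(1,0)\leq(1,1)$ with both bracketing profiles realised infinitely often on each side. The cost is a dependency on the soundness proposition for $G^{\leq}_m$, which the paper states only with a one-line proof sketch; it is, however, easy to verify directly on $\Bool$, also for infinite universes where sums and products are suprema and infima. Two details you gloss over but that do hold: since the vocabulary is monadic, maintaining per-element profiles together with injectivity (fresh answers, and replying to a repeated pick with the previously assigned partner) suffices for the winning condition; and your $G_1$ argument is exactly the paper's.
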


\begin{proof}
	We show that for each formula $\varphi(x_1, \dots, x_n)$ and $i_1, \dots, i_n \in \N$, it holds that
	$
	\pi_B \llb \varphi (\bar{b}) \rrb \leq \pi_A \llb \varphi (\bar{a}) \rrb \leq \pi_B \llb \varphi (\bar{b}') \rrb,
	$
	where $\bar{a} = (a_{i_1}, \dots, a_{i_n})$, $\bar{b} = (b_{i_1}, \dots, b_{i_n})$ and $\bar{b}' \in B^n$ coincides with $\bar{b}$ up to occurrences of $b_0$ which are substituted by $b_0'$.
	We proceed by induction on the structure of $\varphi (\bar{x})$.
	By definition, the base case where $\varphi(\bar{x})$ is a literal is satisfied.
	
	If $\varphi(\bar{x}) = \psi (\bar{x}) \vee \vartheta(\bar{x})$, suppose that $\pi_A \llb \varphi (\bar{a}) \rrb = 0$. It suffices to show that $\pi_B \llb \varphi (\bar{b}) \rrb = 0$ in this case, as $\pi_A \llb \varphi (\bar{a}) \rrb \leq \pi_B \llb \varphi (\bar{b}') \rrb$ is clearly satisfied. We have $\pi_A \llb \psi (\bar{a}) \rrb = \pi_A \llb \vartheta (\bar{a}) \rrb = 0$, implying ${\pi_B \llb \psi (\bar{b}) \rrb = \pi_B \llb \vartheta (\bar{b}) \rrb = 0}$ by induction hypothesis. Hence, $\pi_B \llb \varphi (\bar{b}) \rrb = 0$ and we obtain $\pi_B \llb \varphi (\bar{b}) \rrb \leq \pi_A \llb \varphi (\bar{a}) \rrb$. Otherwise, it must hold that $\pi_A \llb \varphi (\bar{a}) \rrb = 1$, yielding $\pi_A \llb \psi (\bar{a}) \rrb = 1$ or ${\pi_A \llb \vartheta (\bar{a}) \rrb = 1}$. By induction, $\pi_B \llb \psi (\bar{b}') \rrb = 1$ or $\pi_B \llb \vartheta (\bar{b}') \rrb = 1$. Hence, $\pi_B \llb \varphi (\bar{b}') \rrb = 1$ and we obtain $\pi_A \llb \varphi (\bar{a}) \rrb \leq \pi_B \llb \varphi (\bar{b}') \rrb$, while $\pi_B \llb \varphi (\bar{b}) \rrb \leq \pi_A \llb \varphi (\bar{a}) \rrb$ follows immediately from $\pi_A \llb \varphi (\bar{a}) \rrb = 1$.
	
	For $\varphi(\bar{x}) = \exists x \psi (\bar{x},x)$ let $\pi_A \llb \varphi (\bar{a}) \rrb = 0$. Then, it must hold that $\pi_A \llb \psi (\bar{a},a) \rrb = 0$ for all $a \in A$, which implies $\pi_B \llb \psi (\bar{b},b) \rrb = 0$ for all $b \in B \setminus \{b_0'\}$ by induction hypothesis. Fix some $b \in B$ which is not contained in $\bar{b}$ such that $\pi_B (R_1b) = \pi_B(R_2b) = 0$. It holds that $(\pi_B, \bar{b}, b_0') \cong (\pi_B, \bar{b}, b)$, so applying the isomorphism lemma yields $\pi_B \llb \psi (\bar{b},b_0') \rrb = 0$. We obtain $\pi_B \llb \varphi (\bar{b}) \rrb = 0$ overall, so $\pi_B \llb \varphi (\bar{b}) \rrb \leq \pi_A \llb \varphi (\bar{a}) \rrb$. In case $\pi_A \llb \varphi (\bar{a}) \rrb = 1$, there must be some $a_i \in A$ such that $\pi_A \llb \psi(\bar{a}, a_i) \rrb = 1$. It follows from the induction hypothesis that $\pi_B \llb \psi(\bar{b}', b_i) \rrb = 1$ if $i>0$ and $\pi_B \llb \psi(\bar{b}', b_0') \rrb = 1$ in the case $i=0$. Thus, it holds that $\pi_B \llb \varphi(\bar{b}') \rrb = 1$, which yields $\pi_A \llb \varphi (\bar{a}) \rrb \leq \pi_B \llb \varphi (\bar{b}') \rrb$.
	
	We omit the cases $\varphi(\bar{x}) = \psi (\bar{x}) \wedge \vartheta(\bar{x})$ and $\varphi(\bar{x}) = \forall x \psi (\bar{x},x)$, as they are analogous to disjunctions and universal quantifications. In particular, the inequality implies that $\pi_B \llb \varphi \rrb \leq \pi_A \llb \varphi \rrb \leq \pi_B \llb \varphi \rrb$ for all sentences $\varphi$, hence we obtain $\pi_A \equiv \pi_B$.
\end{proof}

\begin{corollary}
	For every $m \in \N$ with $m > 0$, the game $G_m$ is incomplete for $\equiv_m$ on $\Bool$-interpretations that are not model-defining.
\end{corollary}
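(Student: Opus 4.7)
The plan is to reuse the $\Bool$-interpretations $\pi_A$ and $\pi_B$ constructed in the preceding proposition. From there we already know that $\pi_A \equiv \pi_B$, which immediately gives $\pi_A \equiv_m \pi_B$ for every $m \in \N$. Hence the only thing to verify is that, for each $m \geq 1$, Spoiler also wins the game $G_m(\pi_A, \pi_B)$. Then, by definition, $G_m$ cannot be complete for $\equiv_m$ on $\Bool$-interpretations that are not model-defining.

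The key observation is that the winning move for Spoiler in $G_1(\pi_A,\pi_B)$, namely picking $a_0 \in A$, already forces a violation in the very first round. Indeed, $\pi_A(R_1 a_0)=1$ and $\pi_A(R_2 a_0)=0$, while every $b \in B$ satisfies either $\pi_B(R_1 b)=\pi_B(R_2 b)=0$ or $\pi_B(R_1 b)=\pi_B(R_2 b)=1$. Consequently, no matter which $b$ Duplicator selects, the singleton map $a_0 \mapsto b$ is not a local isomorphism, because at least one of the two literals $R_1 x, R_2 x$ disagrees in its valuation.

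Now the lifting to $G_m$ is essentially automatic: Spoiler plays $a_0$ in the first round of $G_m(\pi_A,\pi_B)$ and then plays arbitrarily in the remaining $m-1$ rounds. Since the winning condition of $G_m$ is checked on the final pair of tuples $(\bar a, \bar b)$ and requires the induced map $a_i \mapsto b_i$ to be a local isomorphism, and since the pair $(a_0,b_0)$ produced in the first round already violates the literal condition, this map cannot be a local isomorphism regardless of the subsequent play. Thus Spoiler wins $G_m(\pi_A,\pi_B)$ for every $m \geq 1$, while $\pi_A \equiv_m \pi_B$, establishing the claim.

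There is no substantive obstacle here; the corollary is simply a quantifier-rank-uniform version of the preceding proposition, exploiting the monotonicity of Spoiler's advantage in $G_m$ under increasing $m$.
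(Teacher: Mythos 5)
Your proof is correct, but it takes a different (and more direct) route than the paper. You keep the infinite interpretations $\pi_A,\pi_B$ from the preceding proposition and observe that (i) $\pi_A\equiv\pi_B$ trivially yields $\pi_A\equiv_m\pi_B$ for every $m$, and (ii) Spoiler's one-move win by picking $a_0$ persists in $G_m$ because the winning condition is checked on the final tuples and the pair $(a_0,b)$ already violates the literal condition for every $b\in B$ (every $b$ satisfies $\pi_B(R_1b)=\pi_B(R_2b)$ while $\pi_A(R_1a_0)\neq\pi_A(R_2a_0)$). Both steps are sound, and nothing in the corollary's statement requires finite universes, so this settles the claim. The paper instead passes to the \emph{finite} subinterpretations $\pi_A^m$ and $\pi_B^m$ induced by $\{a_i: 0\le i\le 2m\}$ and $\{b_i: 1\le i\le 2m\}$; this costs an extra argument (Duplicator wins $G_m(\pi_A^m,\pi_A)$ and $G_m(\pi_B^m,\pi_B)$, and soundness of $G_m$ on the fully idempotent semiring $\Bool$ then gives $\pi_A^m\equiv_m\pi_A$ and $\pi_B^m\equiv_m\pi_B$, hence $\pi_A^m\equiv_m\pi_B^m$), but buys a strictly stronger conclusion, namely that $G_m$ is incomplete for $\equiv_m$ already on \emph{finite} non-model-defining $\Bool$-interpretations. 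Your version is shorter and self-contained; the paper's version is preferable if one cares about finite witnesses.
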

	
\begin{proof}
Let $\pi_A^{m}$ and $\pi_B^{m}$ be the subinterpretations of $\pi_A$ and $\pi_B$ induced by $\{a_i \colon 0 \leq i \leq 2m\}$ and $\{b_i \colon 1 \leq i \leq 2m\}$.
Observe that Duplicator wins both $G_m(\pi_A^{m}, \pi_A)$ and $G_m (\pi_B^{m}, \pi_B)$
and soundness of~$G_m$ holds, thus, together with $\pi_A \equiv \pi_B$,
we have $\pi_A^{m} \equiv_m \pi_B^{m}$,
but Spoiler still wins $G_m(\pi_A^m, \pi_B^m)$ in a single turn by picking~$a_0$.
\end{proof}
	

\begin{thebibliography}{10}

\bibitem{AmsterdamerDeuTan11}
Y.~Amsterdamer, D.~Deutch, and V.~Tannen.
\newblock On the limitations of provenance for queries with difference.
\newblock In {\em 3rd Workshop on the Theory and Practice of Provenance,
  TaPP'11}, 2011.
\newblock See also CoRR abs/1105.2255.

\bibitem{Birkhoff67}
G.~Birkhoff.
\newblock {\em Lattice Theory}.
\newblock American Mathematical Society, Providence, 3rd edition, 1967.

\bibitem{BiziereGraNaa23}
C.~Bizi{\`e}re, E.~Gr{\"a}del, and M.~Naaf.
\newblock Locality theorems in semiring semantics.
\newblock In {\em Proceedings of MFCS 2023}, 2023.
\newblock Full version: arXiv 2303.12627.

\bibitem{BourgauxOzaPenPre20}
C.~Bourgaux, A.~Ozaki, R.~Pe{\~{n}}aloza, and L.~Predoiu.
\newblock Provenance for the description logic {ELHr}.
\newblock In {\em Proceedings of {IJCAI} 2020}, pages 1862--1869, 2020.
\newblock \href {http://dx.doi.org/10.24963/ijcai.2020/258}
  {\path{doi:10.24963/ijcai.2020/258}}.

\bibitem{DannertGra19}
K.~Dannert and E.~Gr\"{a}del.
\newblock Provenance analysis: A perspective for description logics?
\newblock In C.~Lutz et~al., editor, {\em Description Logic, Theory
  Combination, and All That}, Lecture Notes in Computer Science Nr. 11560.
  Springer, 2019.
\newblock \href {http://dx.doi.org/10.1007/978-3-030-22102-7_12}
  {\path{doi:10.1007/978-3-030-22102-7_12}}.

\bibitem{DannertGra20}
K.~Dannert and E.~Gr\"{a}del.
\newblock Semiring provenance for guarded logics.
\newblock In {\em Hajnal Andréka and István Németi on Unity of Science: From
  Computing to Relativity Theory through Algebraic Logic}, Outstanding
  Contributions to Logic. Springer, 2020.

\bibitem{DannertGraNaaTan21}
K.~Dannert, E.~Gr{\"a}del, M.~Naaf, and V.~Tannen.
\newblock Semiring provenance for fixed-point logic.
\newblock In {\em Proceedings of CSL 2021}, 2021.

\bibitem{EbbinghausFlu99}
H.-D. Ebbinghaus and J.~Flum.
\newblock {\em Finite Model Theory}.
\newblock Springer, 2nd edition, 1999.
\newblock \href {http://dx.doi.org/10.1007/3-540-28788-4}
  {\path{doi:10.1007/3-540-28788-4}}.

\bibitem{GeertsPog10}
F.~Geerts and A.~Poggi.
\newblock On database query languages for {K-relations}.
\newblock {\em J. Applied Logic}, 8(2):173--185, 2010.

\bibitem{GeertsUngKarFunChr16}
F.~Geerts, T.~Unger, G.~Karvounarakis, I.~Fundulaki, and V.~Christophides.
\newblock Algebraic structures for capturing the provenance of {SPARQL}
  queries.
\newblock {\em J. {ACM}}, 63(1):7:1--7:63, 2016.

\bibitem{Glavic21}
B.~Glavic.
\newblock Data provenance.
\newblock {\em Foundations and Trends in Databases}, 9(3-4):209--441, 2021.
\newblock \href {http://dx.doi.org/10.1561/1900000068}
  {\path{doi:10.1561/1900000068}}.

\bibitem{GraedelHelNaaWil22}
E.~Gr{\"{a}}del, H.~Helal, M.~Naaf, and R.~Wilke.
\newblock Zero-one laws and almost sure valuations of first-order logic in
  semiring semantics.
\newblock In Christel Baier and Dana Fisman, editors, {\em {LICS} '22: 37th
  Annual {ACM/IEEE} Symposium on Logic in Computer Science, Haifa, Israel,
  August 2 - 5, 2022}, pages 41:1--41:12. {ACM}, 2022.
\newblock \href {http://dx.doi.org/10.1145/3531130.3533358}
  {\path{doi:10.1145/3531130.3533358}}.

\bibitem{GraedelLueNaa21}
E.~Gr{\"a}del, N.~L{\"u}cking, and M.~Naaf.
\newblock Semiring provenance for {B}{\"u}chi games: Strategy analysis with
  absorptive polynomials.
\newblock In {\em Proceedings 12th International Symposium on Games, Automata,
  Logics, and Formal Verification (GandALF 2021)}, volume 346 of {\em {EPTCS}},
  pages 67--82, 2021.

\bibitem{GraedelMrk21}
E.~Gr\"{a}del and L.~Mrkonji\'{c}.
\newblock Elementary equivalence versus isomorphism in semiring semantics.
\newblock In {\em 48th International Colloquium on Automata, Languages, and
  Programming (ICALP 2021)}, volume 198, pages 133:1--133:20, Dagstuhl,
  Germany, 2021.
\newblock \href {http://dx.doi.org/10.4230/LIPIcs.ICALP.2021.133}
  {\path{doi:10.4230/LIPIcs.ICALP.2021.133}}.

\bibitem{GraedelTan17}
E.~Gr{\"a}del and V.~Tannen.
\newblock Semiring provenance for first-order model checking, 2017.
\newblock \href {http://arxiv.org/abs/1712.01980} {\path{arXiv:1712.01980}}.

\bibitem{GraedelTan20}
E.~Gr{\"a}del and V.~Tannen.
\newblock Provenance analysis for logic and games.
\newblock {\em Moscow Journal of Combinatorics and Number Theory},
  9(3):203--228, 2020.
\newblock \href {http://dx.doi.org/10.2140/moscow.2020.9.203}
  {\path{doi:10.2140/moscow.2020.9.203}}.

\bibitem{GreenIveTan09}
T.~Green, Z.~Ives, and V.~Tannen.
\newblock Reconcilable differences.
\newblock In {\em Database Theory - {ICDT} 2009}, pages 212--224, 2009.

\bibitem{GreenKarTan07}
T.~Green, G.~Karvounarakis, and V.~Tannen.
\newblock Provenance semirings.
\newblock In {\em Principles of Database Systems {PODS}}, pages 31--40, 2007.

\bibitem{GreenTan17}
T.~Green and V.~Tannen.
\newblock The semiring framework for database provenance.
\newblock In {\em Proceedings of PODS}, pages 93--99, 2017.

\bibitem{Hella92}
L.~Hella.
\newblock Logical hierarchies in {PTIME}.
\newblock In {\em Proceedings of LICS 92}, pages 360--368, 1992.

\bibitem{ImmermanLan90}
N.~Immerman and E.~Lander.
\newblock Describing graphs: A first-order approach to graph canonization.
\newblock In {\em Complexity Theory Retrospective}. Springer, 1990.

\bibitem{Stone38}
M.~H. Stone.
\newblock {Topological representations of distributive lattices and Brouwerian
  logics}.
\newblock {\em Časopis pro pěstování matematiky a fysiky}, 067(1):1--25,
  1938.
\newblock \href {http://dx.doi.org/10.21136/CPMF.1938.124080}
  {\path{doi:10.21136/CPMF.1938.124080}}.

\end{thebibliography}
\end{document}